\newcommand{\outline}[1]{\noindent \textbf{#1}\\}
\newtheorem{theorem}{Theorem}
\newtheorem{lemma}[theorem]{Lemma}
\newtheorem{definition}[theorem]{Definition}
\newtheorem{proposition}[theorem]{Proposition}
\newcommand{\mc}{\mathcal}
\newcommand{\mb}{\mathbf}
\newcommand{\mbb}{\mathbb}
\newcommand{\mr}{\mathrm}
\renewcommand{\ker}[1]{\mathrm{ker}(#1)}
\begin{document}

\title{Fast and Parallelizable Logical Computation with Homological Product Codes}

\author{Qian Xu}
\email{qianxu@uchicago.edu}
\affiliation{Pritzker School of Molecular Engineering, The University of Chicago, Chicago 60637, USA}

\author{Hengyun Zhou}
% \email{hyzhou@quera.com}
\affiliation{QuEra Computing Inc., 1284 Soldiers Field Road, Boston, MA, 02135, US}
\affiliation{Department of Physics, Harvard University, Cambridge, Massachusetts 02138, USA}

\author{Guo Zheng}
\affiliation{Pritzker School of Molecular Engineering, The University of Chicago, Chicago 60637, USA}

\author{Dolev Bluvstein}
\affiliation{Department of Physics, Harvard University, Cambridge, Massachusetts 02138, USA}

\author{J. Pablo Bonilla Ataides}
\affiliation{Department of Physics, Harvard University, Cambridge, Massachusetts 02138, USA}

\author{Mikhail D. Lukin}
\affiliation{Department of Physics, Harvard University, Cambridge, Massachusetts 02138, USA}

\author{Liang Jiang}
\email{liang.jiang@uchicago.edu}
\affiliation{Pritzker School of Molecular Engineering, The University of Chicago, Chicago 60637, USA}

\begin{abstract}
Quantum error correction is necessary to perform large-scale quantum computation, but requires extremely large overheads in both space and time. High-rate quantum low-density-parity-check (qLDPC) codes promise a route to reduce qubit numbers, but performing computation while maintaining low space cost has required serialization of operations and extra time costs.
In this work, we design fast and parallelizable logical gates for qLDPC codes, and demonstrate their utility for key algorithmic subroutines such as the quantum adder.
Our gate gadgets utilize transversal logical CNOTs between a data qLDPC code and a suitably constructed ancilla code to perform parallel Pauli product measurements (PPMs) on the data logical qubits.
For hypergraph product codes, we show that the ancilla can be constructed by simply modifying the base classical codes of the data code, achieving parallel PPMs on a subgrid of the logical qubits with a lower space-time cost than existing schemes for an important class of circuits.
Generalizations to 3D and 4D homological product codes further feature fast PPMs in constant depth.
While prior work on qLDPC codes has focused on individual logical gates, we initiate the study of fault-tolerant compilation with our expanded set of native qLDPC code operations, constructing algorithmic primitives for preparing $k$-qubit GHZ states and distilling/teleporting $k$ magic states with $O(1)$ space overhead in $O(1)$ and $O(\sqrt{k} \log k)$ logical cycles, respectively.
We further generalize this to key algorithmic subroutines, demonstrating the efficient implementation of quantum adders using parallel operations.
Our constructions are naturally compatible with reconfigurable architectures such as neutral atom arrays, paving the way to large-scale quantum computation with low space and time overheads.
\end{abstract}
\maketitle
\tableofcontents

\section{Introduction \label{sec:introduction}}
Quantum error correction (QEC) is essential for realizing large-scale, fault-tolerant quantum computation. However, paradigmatic QEC schemes based on the surface code are very costly in terms of the space overhead, requiring millions of physical qubits for solving practical problems at useful scale~\cite{fowler2012surface,litinski2019game,beverland2022assessing,gidney2019how}. Recent breakthroughs in high-rate quantum low-density-parity-check (qLDPC) codes, in both asymptotic parameter scaling~\cite{panteleev2022quantum, breuckmann2020balanced, panteleev2022asymptotically, leverrier2022quantum, gu2022efficient,dinur2022good, lin2022good} and practical implementations~\cite{tremblay2022constant, xu2024constant, bravyi2024high, higgott2023constructions, viszlai2023matching, poole2024architecture}, promise a route to significantly reduce the qubit numbers.  In light of recent experimental implementations of various QEC schemes~\cite{google2023suppressing, sivak2022real, bluvstein2023logical, gupta2023encoding, levine2023demonstrating, ma2023high, da2024demonstration}, 
%but still limited to relatively small sizes, 
such developments hold promise for 
%can potentially 
greatly accelerating the progress toward large-scale error-corrected quantum computers.

Although qLDPC codes constitute hardware-efficient quantum memories, processing the information stored in these codes is generally challenging 
%generally becomes more difficult 
due to the overlapping support of many logical qubits in the same code block. Consequently, logical computations based on these codes have so far involved additional time overhead. For instance, the leading approaches for implementing selective logical operations involve interfacing the qLDPC codes with rateless ancillae (codes with asymptotically vanishing rate), e.g. surface codes, via lattice surgery operations~\cite{cohen2021quantum, bravyi2024high, xu2024constant}. To maintain the low space overhead, only a few ancillae can be used, and consequently, logical computations have to be serialized.
%In contrast, computations using only surface codes can be maximally parallelized when assuming long-range connectivies~\cite{litinski2022active}. 
In contrast, since each logical qubit can be independently operated on, logical computations using  more conventional QEC approaches (such as e.g.  surface codes) can be executed in a highly parallel fashion.
Hence, 
%As such, when rateless ancillae are needed, 
computations with qLDPC codes seem to incur a severe space-time tradeoff, with the increased time cost due to serialization possibly negating the space savings. 

An alternative approach for implementing logical operations in high-rate qLDPC codes while offering logical parallelism involves transversal gates~\cite{breuckmann2022fold, quintavalle2022partitioning}. For example, transversal inter-block CNOTs give logical CNOTs between every inter-block pair of encoded logical qubits for any two identical CSS codes~\cite{shor1996fault}. While parallel, these transversal gates are not selective,  acting on \emph{all logical qubits} homogeneously. 

Thus, it is natural to inquire if there exists any approach 
%motivated, we explore if 
%The question is, therefore: is there any scheme that lies 
in between lattice surgery (with rateless ancillae) and transversal gates, that offers selectivity and parallelism simultaneously. 
A promising direction involves the so-called homomorphic CNOT~\cite{huang2022homomorphic}, which generalizes the transversal CNOTs between two identical codes to two distinct codes. In this approach, 
%Then, 
using a smaller code as an ancilla, it is possible to perform selective operations in parallel on a subset of logical qubits of a data block. As an example, Ref.~\cite{huang2022homomorphic} shows that it is possible 
to perform a measurement only on one of the logical qubits in a toric code using a surface-code ancilla. Such a homomorphic CNOT, however,  relies on identifying   a nontrivial homomorphism between two quantum codes, which is challenging for general codes. In particular, it remains unclear how to generalize the constructions from the topological codes in Ref.~\cite{huang2022homomorphic} to algebraically constructed qLDPC codes. 

In this work, we construct homomorphic inter-block CNOTs for a family of qLDPC codes -- the so-called homological product codes~\cite{bravyi2013homological, zeng2019higher, campbell2019theory, breuckmann2021quantum} -- that are considered leading candidates for practical fault tolerance~\cite{tremblay2022constant, xu2024constant}. By utilizing the structure of these codes as the tensor product of classical codes, we construct desired quantum-code homomorphisms by simply taking the tensor product of classical-code homomorphisms (Fig.~\ref{fig:homomorphic_measurement}(a)). 
By performing well-known structure-preserving modifications (such as puncturing and augmenting~\cite{huffman2010fundamentals}) of the base classical codes, we obtain structure-preserving modifications of the quantum codes that lead to inter-block homomorphic CNOTs between two \emph{distinct} homological product codes. Importantly, careful choices of such code modifications can preserve the code distances and only require transversal inter-block physical CNOTs, thereby ensuring that the constructed logical gadgets are fault-tolerant.
% More concretely, by performing well-known structure-preserving modifications (such as puncturing and augmenting~\cite{huffman2010fundamentals}) of the base classical codes, we obtain structuring-preserving modifications of the quantum codes that lead to inter-block homomorphic CNOTs. Then, using the smaller, modified quantum code as an ancilla, we can perform a selected set of Pauli product measurements (PPMs) in parallel on the data code utilizing the inter-block CNOTs and a generalized version of the Steane measurement~\cite{steane1997active}.

By applying these constructions to the hypergraph product (HGP) codes~\cite{tillich2014quantum}, which are 2D homological product codes, we obtain a new logical gadget that measures a selective pattern of Pauli product operators on any subgrid of the logical qubits in parallel  (Fig.~\ref{fig:homomorphic_measurement}(b)). In addition, we construct an automorphism gadget~\cite{breuckmann2022fold} that translates the logical grid with periodic boundary conditions for HGP codes with quasi-cyclic base codes. Combining these two gadgets with existing transversal/fold-transversal gates~\cite{quintavalle2022partitioning}, parallel logical computations with low space-time overhead can be realized directly. In particular, we show that  a layer of $\Theta(k)$ Clifford gates (consisting of Hadamards, $S$ gates, and intra-block CNOTs) on a HGP block with $k$ logical qubits can be applied with a constant space overhead in a sublinear ($< O(k)$) number of logical cycles. 

%Furthermore, this logical depth can be further reduced by compiling specific algorithms with more structured layers of gates.
%
Using this broad set of efficient gate constructions, we study how they can be used to efficiently implement important algorithmic subroutines  (Fig.~\ref{fig:homomorphic_measurement}(c)), identifying additional structures that greatly reduce the required depth for specific algorithms. As examples, we show how to prepare a block of logical GHZ states, distill and consume magic states in parallel, and implement the quantum adder~\cite{gidney2018halving}---an important subroutine for many useful quantum algorithms---with  low space-time overhead. In addition, when applying our constructions to higher-dimensional homological product codes~\cite{campbell2019theory}, which support single-shot logical state preparation, we obtain even faster logical gadgets with a constant gate depth. Since the logical gadgets developed in this work are built upon transversal inter-block physical CNOTs, they are natural to implement in reconfigurable atom arrays~\cite{bluvstein2023logical} by overlapping two code blocks and applying global Rydberg laser pulses.

Before proceeding, we note several recently developed methods for realizing parallelizable logical computation based on concatenated codes~\cite{yamasaki2022time, yoshida2024concatenate, goto2024many} as well as color codes on hyperbolic manifold~\cite{zhu2023non}. Our work complements these studies by considering 
% more practically relevant 
product qLDPC codes with concrete implementations~\cite{xu2024constant} and 
%easy 
fault-tolerant protocols. The constructions in our work could potentially be also generalized to other product codes, such as lifted product codes~\cite{panteleev2022quantum}, generalized bicycle codes~\cite{panteleev2019degenerate, bravyi2024high}, fiber-bundle codes~\cite{Hastings2020}, and good qLDPC codes~\cite{panteleev2022quantum, breuckmann2020balanced, panteleev2022asymptotically, leverrier2022quantum, gu2022efficient,dinur2022good, lin2022good}. We also note that similar puncturing techniques were employed for implementing logical gates on qLDPC codes within the paradigm of code deformation~\cite{krishna2021fault}.

The manuscript is organized as follows: We begin by outlining the key insights and the main results in Sec.~\ref{sec:summary_of_results}. We introduce in Sec.~\ref{sec:preliminary} the background coding-theory techniques that form the basis of our technical constructions. Utilizing these techniques, we present a detailed construction of the homomorphic CNOTs and measurements for generic homological product codes in Sec.~\ref{sec:sketch_homomorphic_gadget}. With the new logical gadgets at hand, we study in Sec.~\ref{sec:HGP_codes} how common computational tasks and key algorithm subroutines can be compiled and implemented using the HGP codes with low space-time overhead. We further analyze in Sec.~\ref{sec:high_dimensional_codes} how the time overhead could be further reduced by using higher-dimensional homological product codes. Finally, we discuss the physical implementation of our proposed schemes in Sec.~\ref{sec:physical_implementation}. 
We note that readers primarily interested in the broad applications of our constructions may skip the technical Sec.~\ref{sec:preliminary} and Sec.~\ref{sec:sketch_homomorphic_gadget} first and focus on Secs.~\ref{sec:HGP_codes}-\ref{sec:physical_implementation}.
%
% In the appendices, we further detail the construction of a variety of useful gadgets, including logical translations, selective inter-block teleportation, and parallel single-qubit gates, which enable the results presented in the main text.

\section{Summary of key results \label{sec:summary_of_results}}
%In this section, we summarize our key results.

\begin{figure*}[!htbp]
    \centering
    \includegraphics[width=1\textwidth]{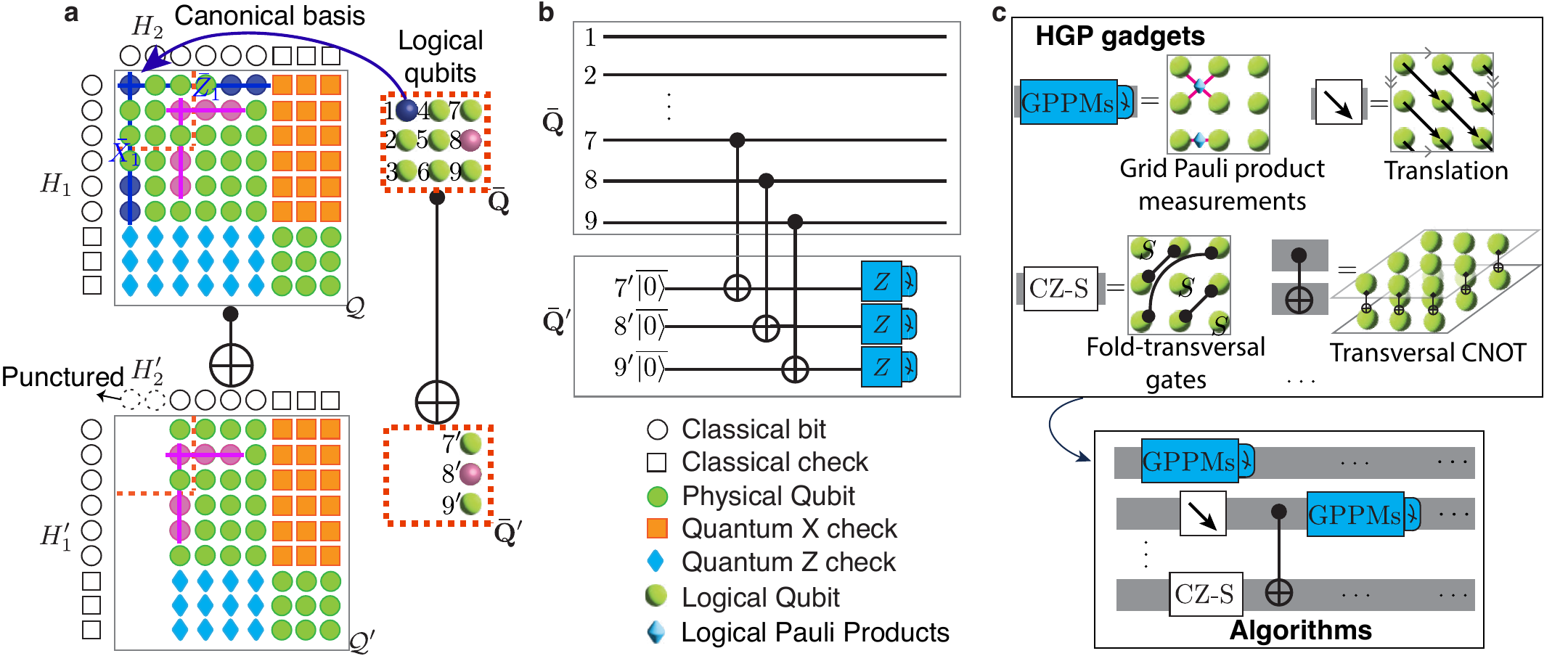}
    \caption{\textbf{Illustration of the homomorphic CNOT and other key gadgets for HGP codes.} \textbf{(a)} Each HGP code $\mc{Q}$, encoding a set of logical qubits $\mb{\bar{Q}}$, is constructed by taking the tensor product of two classical base codes $H_1$, $H_2$. 
    The qubits and quantum checks can be visualized on a square and inherit the structure of its base classical codes.
    The canonical basis provides a convenient way to make use of the logical structure (Sec.~\ref{sec:HGP_canonical_basis}, \cite{quintavalle2022reshape}).
    The logical $Z$ and $X$ operators of a logical qubit (3D balls) at the $(i,j)$ coordinate are supported on the $i$-th row and the $j$-th column of the physical qubits, respectively. 
    For instance, the blue physical qubits indicate the support of the logical $X$ and $Z$ operators associated with the blue logical qubit at coordinate $(1,1)$, and similarly for the pink logical qubit at $(2,3)$.
    We construct an ancilla HGP code $\mc{Q}^{\prime}$ by deleting a subset of the bits of the base codes of $\mc{Q}$, thereby deleting columns of data qubits and removing logical qubits supported on the deleted rows or columns.
    The structured code modification implies that applying physical transversal CNOTs (known as the homomorphic CNOT~\cite{huang2022homomorphic}) between the corresponding qubits of $\mc{Q}$ and $\mc{Q}^{\prime}$ gives rise to \emph{logical} transversal CNOTs between the corresponding logical qubits of $\mb{\bar{Q}}$ and $\mb{\bar{Q}^{\prime}}$ (third column in this example).
    \textbf{(b)} Non-destructive Pauli measurements on a subset of data logical qubits in $\mb{\bar{Q}}$ via homomorphic CNOTs and the generalized Steane measurement circuit~\cite{steane1997active}. 
    More specifically, we initialize logical qubits $\mb{\bar{Q^{\prime}}}$ in $\ket{0}$, apply a homomorphic CNOT between $\mb{\bar{Q}}$ and $\mb{\bar{Q}^{\prime}}$, and finally measure $\mb{\bar{Q}^{\prime}}$ in the $Z$ basis. Doing so measures single-qubit $Z$ operators on the $7$-th, $8$-th, and $9$-th logical qubits of $\mc{Q}$ in parallel without affecting the rest of the data logical qubits.
    This generalizes the standard Steane measurement circuit by allowing a different ancilla code $\mc{Q}^{\prime}$ from the data code $\mc{Q}$.
    \textbf{(c)} Using this approach, we  construct a suite of different gadgets with constant space overhead and $O(1)$ logical cycles: parallel measurements of a product pattern of Pauli product operators (Sec.~\ref{sec:GPPMs} and Fig.~\ref{fig:HGP_Grid_PPMs}), logical translation for HGP codes with quasi-cyclic base codes (Appendix~\ref{sec:translation_gadget}), fold-transversal gates~\cite{breuckmann2022fold, quintavalle2022partitioning} and transversal CNOTs. We use these building blocks to implement logical algorithms using only qLDPC blocks with low space-time overhead.
    }
    \label{fig:homomorphic_measurement}
\end{figure*}

\subsection{Parallel Pauli Product Measurements via Homomorphic CNOTs}
\label{subsec:gadgets}
In this work we first construct new fault-tolerant logical gadgets for a variety of qLDPC codes (Fig.~\ref{fig:homomorphic_measurement}, Fig.~\ref{fig:HGP_Grid_PPMs} and Table~\ref{tab:HGP_gadgets}), enabling selective, fast, and parallel logical operations.
Our construction builds on top of the homomorphic (i.e. structure preserving)  CNOT~\cite{huang2022homomorphic}, which generalizes the transversal CNOT between two identical CSS codes to two distinct codes, and the homomorphic measurement gadget, a generalization of the Steane measurement gadget~\cite{steane1997active}, which utilizes the homomorphic CNOT.
These methods apply physical CNOTs associated with some nontrivial homomorphisms (structure-preserving maps, reviewed in Sec.~\ref{sec:preliminary}) between two quantum codes, thereby guaranteeing that the stabilizer group is preserved and the process results in a valid logical operation.
%
% We review this in detail in Sec.~\ref{sec:preliminary}.

To generalize these constructions from the restricted set of topological codes in Ref.~\cite{huang2022homomorphic} to a much broader range of high-rate qLDPC codes, we identify new homomorphisms for homological product codes (of any dimension) in Sec.~\ref{sec:sketch_homomorphic_gadget}, a widely-used family of product constructions.
We show that many well-known techniques for modifying classical codes, such as puncturing and augmenting~\cite{huffman2010fundamentals}, can be both structure- and distance-preserving, providing suitable homomorphisms between the modified code and the original classical code.
Applying this to the base classical codes involved in the product construction naturally induces a quantum-code homomorphism, giving rise to a homomorphic CNOT and an associated logical measurement gadget for the logical qubits.
Crucially, the modified codes (both classical and quantum) have preserved distances and the homomorphic logical CNOT only involves transversal physical CNOTs. As such, these gadgets are naturally fault-tolerant. 

We present an illustrative example of the above homomorphic gadget in Fig.~\ref{fig:homomorphic_measurement}(a).
Deleting (puncturing) a subset of bits of the base classical codes of the ancilla code $\mc{Q}^{\prime}$ removes a subset of the physical qubits. When choosing an appropriate logical operator basis (the canonical basis in Fig.~\ref{fig:homomorphic_measurement}(a)), this also removes a subset of the encoded logical qubits.
We thus obtain a smaller ancilla code $\mc{Q}^{\prime}$ that only encodes a subset of the logical qubits of $\mc{Q}$, but with a preserved distance.
The quantum-code homomorphism is then given by the natural inclusion map from $\mc{Q}^{\prime}$ to those of $\mc{Q}$.
Transversal CNOTs between the remaining physical qubits of $\mc{Q}$ and $\mc{Q}^{\prime}$ give rise to transversal logical CNOTs between the remaining logical qubits.
Measuring the ancilla logical block $\mc{Q}^{\prime}$ (Fig.~\ref{fig:homomorphic_measurement}(b)) then effectively measures a subset of logical qubits of $\mc{Q}$ \emph{in parallel}.
We can also measure a set of Pauli products of $\mc{Q}$ in parallel by adding (augmenting) checks to the base classical codes of the ancilla (see Fig.~\ref{fig:HGP_Grid_PPMs}).

This procedure works for general homological product codes, enabling selective, fast, and parallel Pauli product measurements of entire blocks of logical qubits.
Measurements of multiple disjoint Pauli products can be performed in parallel with our methods; the main constraint is that inheriting the product structure, the pattern of Pauli products needs to form a grid structure (Fig.~\ref{fig:HGP_Grid_PPMs}), which may naturally be present in many structured problems (see below).
The general procedure we employ here, in which selective and parallel logical operations on a data code are executed using a properly masked ancilla code patch, may also be useful for more general operations if suitable masks can be prepared.
When applied to higher-dimensional homological product codes (Sec.~\ref{sec:high_dimensional_codes}), this also results in logical gadgets with constant depth via check redundancies and the soundness property~\cite{campbell2019theory}.
It may be possible to further apply techniques of correlated decoding~\cite{cain2024correlated} and algorithmic fault tolerance~\cite{zhou2024algorithmic} to reduce the time cost in the case of the hypergraph product code, although this would require generalizing those results to the case of mixed types of QEC codes.
Additionally, for specific base classical codes with a quasi-cyclic structure, we can also form a quantum code automorphism~\cite{breuckmann2022fold} from the cyclic permutation of the underlying classical code, giving rise to a gadget that translates the logical qubits in a structured way. 
Building upon the parallel PPMs gadget and the translation gadget, we further construct a variety of useful gadgets for HGP codes in Appendix~\ref{sec:extra_gadgets}, including selective inter-block teleportation, logical cyclic shift, and parallel single-qubit gates, which enable complex logical computations.

\subsection{Fault-Tolerant Compilation with qLDPC Codes}
We next apply these new logical gadgets (Sec.~\ref{sec:HGP_codes}) to enable low space-time overhead implementations of a wide variety of large-scale logical quantum circuits on HGP codes (Fig.~\ref{fig:homomorphic_measurement}(c)), including random Clifford circuits (Theorem~\ref{theorem:parallel_Clifford_gates} and Table~\ref{tab:PPMs_cost_comparison}), GHZ state preparation (Fig.~\ref{fig:GHZ}), magic state distillation and consumption (Fig.~\ref{fig:MSD} and Fig.~\ref{fig:MSI}), and quantum addition (Fig.~\ref{fig:adder}).

The cost of QEC is usually quantified in terms of its space-time overhead, as various compilation methods can often trade between space and time~\cite{gidney2019efficient,fowler2012time,litinski2022active}.
As discussed in Sec.~\ref{sec:introduction}, the need for serialization in existing general qLDPC gate schemes may often negate the benefits of constant space overhead when considering the space-time cost.
Our methods, in contrast, enable the parallel implementation of many logical operations while maintaining constant space overhead, potentially improving the overall space-time overhead.

Using the homomorphic measurement gadgets described above and in Sec.~\ref{sec:sketch_homomorphic_gadget}, we form a ``Grid PPMs" (GPPMs, Fig.~\ref{fig:HGP_Grid_PPMs}) gadget for HGP codes that can measure a product pattern of PPMs on any subgrid of the logical qubits in parallel in one logical cycle (consisting of $d$ code cycles for a distance-$d$ code).
When combined with known transversal and fold-transversal gates~\cite{breuckmann2022fold, quintavalle2022partitioning}, we show that the GPPMs gadget generates the full Clifford group for an HGP code.
In addition, we construct a logical translation gadget that translates the logical qubit grid of a quasi-cyclic HGP code (with quasi-cyclic base classical codes) by simply permuting the physical qubits (see Table~\ref{tab:HGP_gadgets}, Appendix~\ref{sec:translation_gadget}, and Fig.~\ref{fig:translation_gadget}).

By selecting a family of quasi-cyclic HGP codes with competitive code parameters, we can execute parallel logical computations with not only a \emph{constant space overhead} but also \emph{lower space-time cost} compared to similar operations with conventional approaches based on the surface code. As shown in Table~\ref{tab:PPMs_cost_comparison} and Theorem~\ref{theorem:parallel_Clifford_gates} (see Appendix.~\ref{sec:proof_parallel_Clifford_gates} for details), a generic layer of $\Theta(k)$ Clifford gates (consisting of Hadamards, $S$ gates, and CNOTs) acting on $k$ logical qubits on a quasi-cyclic HGP block can be implemented in at most $O(k^{3/4})$ logical cycles, compared to $\Theta(k)$ for other existing schemes (Table~\ref{tab:PPMs_cost_comparison}). 
This low space-time overhead is fundamentally enabled by the parallelism of the GPPMs gadget, as existing selective gadgets involving lattice surgery with rateless codes~\cite{cohen2021quantum, bravyi2024high, xu2024constant} would have to execute computations sequentially to maintain constant space overhead and result in a linear logical depth.
We also note that the $O(k^{3/4})$ time overhead is based on a specific construction (which might not be optimal), and the practical cost of structured circuits may be even lower.

Turning our attention to fault-tolerant compilation for common algorithmic subroutines, we show that the logical parallelism can be further enhanced (and consequently, the time overhead further reduced) by compiling specific algorithms with more structured layers of gates.
We show that we can prepare a $k$-logical-qubit GHZ state (see Fig.~\ref{fig:GHZ}) in parallel with $O(1)$ space overhead in $O(1)$ logical cycles, using measurement-based preparation and the ability to measure multiple $ZZ$ Pauli products in parallel.
Noting that the bulk of operations in magic state factories can be done in parallel when using high-rate encodings, using transversal CNOTs and the parallel GPPMs, we show how to distill/consume $k$ magic states (see Fig.~\ref{fig:MSD} and Fig.~\ref{fig:MSI}) in parallel with $O(1)$ space overhead in $O(1)$ and $O(\sqrt{k}\log k)$ logical cycles, respectively, using HGP codes encoding $k$ logical qubits.
The latter, in particular, enables many practical algorithms involving parallel non-Clifford gates.
Finally, we present an efficient implementation of the quantum adder~\cite{gidney2018halving} with HGP blocks as such an example (see Fig.~\ref{fig:adder}), again utilizing the fact that through auxiliary Bell pairs commonly used for space-time trade-off~\cite{fowler2012time,gidney2019how,litinski2022active}, much of the adder structure can be executed in parallel via transversal CNOTs and the GPPMs gadget.

\section{Preliminaries \label{sec:preliminary}}
\subsection{Notation}
We use bold symbols to denote a set of objects. We denote $[n]$ as the set of integers $\{1,2,\cdots,n\}$ for some $n \in \mathbb{Z}^+$, $\{n_1 \rightarrow n_2\}$ the set of integers $\{n_1, n_1 + 1, \cdots, n_2\}$ with $n_2 \geq n_1$. 

Given a vector $v \in \mbb{F}^{n}$ over some field $\mbb{F}$ and a subset of indices $\mb{S} \subseteq [n]$, we denote $v|_{\mb{S}}$ as the restriction of $v$ on $\mb{S}$, i.e. a subvector of $v$ with only entries indexed by $\mb{S}$. Similarly, given a matrix $M \in \mbb{F}^{m\times n}$ over some field $\mbb{F}$ and a subset of column indices $\mb{S}$, we denote $H|_{\mb{S}}$ as the restriction of $H$ on $\mb{S}$, i.e. a submatrix of $H$ with only columns in $\mb{S}$. 

Let $\mb{Q}$ be some set of qubits and $\mb{O}$ be some set of coordinates. We say $\mb{Q} \simeq \mb{O}$ if $\mb{Q}$ are assigned to coordinates $\mb{O}$. Let $\mb{O}_0 \subseteq \mb{O}$. We refer to $\mb{Q}|_{\mb{O_0}}$ as the subset of qubits with coordinates $\mb{O}_0$. For two sets of qubits $\mb{Q}$ and $\mb{Q}^{\prime}$ assigned with the same set of coordinates $\mb{O}$, we refer to transversal CNOTs between them as pairs of CNOTs on each coordinate, i.e. $\bigotimes_{q \in \mb{O}}\mr{CNOT}(Q_q, Q^{\prime}_q)$.

We denote $\vec{e}_i$ as a unit column vector with the $i$-th entry being $1$. We do not specify the dimension of the vector in this notation, which should be clear from context.  

We denote $\mc{P}_n$ as the $n$-qubit Pauli group. 

\subsection{Classical codes, quantum codes, and chain complexes}
In this section, we review the basics of classical linear codes, quantum stabilizer codes, and their representation as chain complexes.

A $[n,k,d]$ classical linear code $\mc{C}$ over $\mbb{F}_2$ is a $k$-dimensional subspace of $\mbb{F}_2^n$. It can be specified as the row space of a generator matrix $G \in \mbb{F}_2^{k\times n}$, or the kernel of a check matrix $H \in \mbb{F}_2^{(n-k)\times n}$, with $H G^T = 0 \mod 2$. The distance $d$ of the code is the minimum Hamming weight of all codewords. 

A $[[n, k, d]]$ quantum stabilizer code $\mc{Q}$ is a $2^k$-dimensional subspace of the $2^n$-dimensional $n$-qubit Hilbert space. It is specified as the common $+1$ eigenspace of an Abelian subgroup $S$ of $\mc{P}_n$ that does not contain $-I$. The non-trivial logical operators of the code are given by $\mc{N}(S)\backslash S$,  where $\mc{N}$ denotes the normalizer with respect to $\mc{P}_n$. The distance of the code is the minimum Hamming weight of all nontrivial logical operators. 

For a Calderbank-Shor-Steane (CSS) stabilizer code~\cite{calderbank1996good,steane1996error}, the stabilizer generators can be divided into $X$-type operators and $Z$-type operators, represented by the $X$- and $Z$-check matrix $H_X \in \mbb{F}_2^{r_X\times n}$ and $H_Z \in \mbb{F}_2^{r_Z\times n}$, respectively. Each row $r$ of $H_X$ ($H_Z$) represents a $X$ ($Z$) type $n$-qubit Pauli operator $\bigotimes_{i = 1}^n X_i^{r_i}$ ($\bigotimes_{i = 1}^n Z_i^{r_i}$), where $X_i$ ($Z_i$) denotes the Pauli operator on the $i$-th qubit. The commutativity of the stabilizers requires that $H_X H_Z^T = 0 \mod 2$, which is also called the CSS condition. 

A length-$n$ chain complex (over $\mbb{F}_2$) $\mc{B}$
\begin{equation}
    B_n \xrightarrow{\partial_n} B_{n-1} \xrightarrow{\partial_{n-1}} \cdots \xrightarrow{\partial_{2}} B_1 \xrightarrow{\partial_1} B_0,
    \label{eq:chain_complex}
\end{equation}
is a collection of $\mbb{F}_2$ vector spaces $\{B_i\}_{i=0}^n$ and linear boundary maps between them $\{\partial_i: B_i \rightarrow B_{i-1}\}_{i = 1}^n$ that satisfy $\partial_{i-1}\partial_i = 0$. Let $\mb{\partial} := \{\partial_i\}_{i = 1}^n$. Informally, we can write $\partial^2 = 0$.

A classical code $\mc{C}$ with a check matrix $H$ can be represented by a length-$1$ chain complex:
\begin{equation}
    C_1 \xrightarrow{H} C_0,
\end{equation}
where the basis of $C_1$ and $C_0$ are associated with classical bits and checks, respectively. 

A CSS quantum stabilizer code $\mc{Q}$ with check matrices $H_X$ and $H_Z$ can be represented by a length-$2$ chain complex:
\begin{equation}
    Q_2 \xrightarrow{H_Z^T} Q_1 \xrightarrow{H_X} Q_0, 
    \label{eq:quantum_code_chain_complex}
\end{equation}
where the basis of $Q_2$, $Q_1$, and $Q_0$ are associated with $Z$ checks, qubits, and $X$ checks, respectively. Note that the maps are valid boundary maps due to the CSS condition: $H_X H_Z^T = 0 \mod 2$.

Because of the above relation between codes and chain complexes, we will refer to a classical code $\mc{C}$ or a quantum code $\mc{Q}$ interchangeably as their representing chain complex.

\subsection{Homological product codes from product complexes \label{sec:homological_product_codes}}
In this section, we review the homological product codes~\cite{bravyi2013homological, zeng2019higher, campbell2019theory, breuckmann2021quantum} and show how they are constructed from the tensor product of chain complexes. 
Technically, the terminology ``homological product codes" could refer to different constructions that cover the product of quantum codes~\cite{bravyi2013homological, campbell2019theory}. However, in this work, we only consider a subset of them -- the product of classical codes, or equivalently, length-$1$ chain complexes, (albeit referred to as high-dimensional hypergraph product codes~\cite{zeng2019higher}).
We closely follow the notation in Ref.~\cite{breuckmann2021quantum}.

Given $D$ base chain complexes, we can obtain a product complex by taking the tensor product of them:
\begin{definition}[Product complex]
    Let $\{\mc{B}^i\}_{i \in [D]}$ be $D$ chain complexes, where $\mc{B}^i = \{\{B^i_{x_i}\}_{x_i}, \{\partial^i_{x_i}\}_{x_i}\}$ (here, we do not specify the length of each chain complex, i.e. the range of $x_i$). We define a $D$-dimensional product complex $\mc{D} = \{\{ D_{\vec{x}} \}_{\vec{x} = (x_1, \cdots, x_n)^T \in \mbb{Z}^D}, \{ \partial^i_{\vec{x}}: D_{\vec{x}} \rightarrow D_{\vec{x} - \vec{e}_i}\}_{i \in [D], \vec{x} \in \mbb{Z}^D}\} :=
    \mr{Prod}(\{\mc{B}^i\}_{i \in [D]})$ as the tensor product of these chain complexes, where
    % \begin{equation}
    %     D_{\vec{x}} := \bigotimes_{i = 1}^D B^i_{x_i}
    % \end{equation}
    \begin{align}
        D_{\vec{x}} := \bigotimes_{i = 1}^D B^i_{x_i}, \quad \partial^i_{\vec{x}} := \bigotimes_{j = 1}^D (\partial^j_{x_j})^{\delta_{i,j}}, \label{eq:boundary_map_D_complex}
    \end{align}
    where $\delta$ is the Kronecker delta function and $(\partial^j_{x_j})^0$ is defined as the identity map.
    \label{def:product_complex}
\end{definition}
It is straightforward to check that the boundary maps of the product complex in Def.~\ref{def:product_complex} satisfy:
\begin{equation}
    \partial^i_{\vec{x} - \vec{e}_i} \partial^i_{\vec{x}} = 0 
    \quad \textrm{and} \quad 
    \partial^i_{\vec{x}} \partial^j_{\vec{y}} = \partial^j_{\vec{y}} \partial^i_{\vec{x}} (i \neq j).\\
    \label{eq:boundary_maps_HD_complex}
\end{equation}
Clearly, a product complex is a high-dimensional generalization of the chain complex (see Eq.~\eqref{eq:chain_complex}). Intuitively, it can be viewed as a $D$-dimensional hypercube, and its projection at a point $\vec{x}$ along the $i$-th direction resembles a chain complex. We can define the $i$-th type boundaries as $\mb{\partial}^i := \{\partial^i_{\vec{x}}\}_{\vec{x} \in \mbb{Z}^D}$. Informally, we can write Eq.~\eqref{eq:boundary_maps_HD_complex} as 
\begin{equation}
    (\partial^i)^2 = 0 \quad \mr{and} \quad [\partial^i, \partial^j] = 0 (i \neq j), 
\end{equation}
indicating that the linear maps along any direction form valid boundary maps and maps along different directions commute.

Note that in Def.~\ref{def:product_complex}, for simplicity, we allow the indices $x_i$ of the base chain complexes to take any integer values, making the projection of $\mc{D}$ along any direction infinite-length chain complexes. In practice, we typically apply some cutoff $T$ to the indices by, e.g. setting $D_{\vec{x}} = 0$ for any $|\vec{x}|_{\infty} > T$.

Since a quantum code is defined on a 1D chain complex (see Eq.~\eqref{eq:quantum_code_chain_complex}), we need to derive a 1D chain complex from the product complex, which is called the total complex:
\begin{definition}[Total complex of a product complex]
Let $\mc{D}$ be a product complex out of $D$ base chain complexes with vectors spaces $\{D_{\vec{x}}\}_{\vec{x}}$ and boundary maps $\{\partial^i_{\vec{x}}\}_{i \in [D], \vec{x}}$. We define its total chain complex $\mc{T} = \{\{T_k\}_k, \{\delta_k\}_k \}:= \mr{Tot}(\mc{D})$ as follows:
\begin{align}
    T_k := \bigoplus_{|\vec{x}| = k} D_{\vec{x}},
\end{align}
and the boundary maps:
\begin{equation}
    \delta_k (\bigoplus_{|\vec{x}| = k} a_{\vec{x}})=\sum_{|\vec{x}|=k}\left( \bigoplus_{|\vec{y}|=k-1} \partial_{\vec{y}, \vec{x}} a_{\vec{x}}\right),
    \label{eq:boundary_map_total_complex}
\end{equation}
for any $a_{\vec{x}} \in D_{\vec{x}}$ and 
\begin{equation}
    \partial_{\vec{y}, \vec{x}} := 
    \begin{cases}
    \partial^i_{\vec{x}} & \vec{x} - \vec{y} = \vec{e}_i\ \textrm{for some}\ i \in [D]\\
    0,              & \text{otherwise}
\end{cases},
\label{eq:partial_yx}
\end{equation}
\label{def:total_complex}
\end{definition}
Intuitively, the total complex is obtained by projecting the $D$-dimensional complex along the ``diagonal'' direction.

Once we obtain a total chain complex from a product complex (which can have a length longer than 2), we can define a quantum code from a length-$2$ subcomplex. In this work, we will focus on product complexes with length-$1$ base complexes $\{\mc{C}^i\}_{i \in [D]}$ (classical codes). In this case, a total complex 
$\mc{T} = \mr{Tot}(\mr{Prod}(\{\mc{C}^i\}_{i \in [D]}))$ will be of length $D$:
\begin{equation}
    T_D \xrightarrow{\delta_D} T_{D - 1} \xrightarrow{\delta_{D - 1}} \cdots \xrightarrow{\delta_1} T_0.
\end{equation}
Furthermore, we will primarily focus on $D = 2,3,4$. For $D = 2$, we obtain the standard hypergraph product code~\cite{tillich2014quantum}, with planar surface codes being a special instance. For $D = 3$ and $D = 4$, we obtain $3D$ and $4D$ homological product codes, with $3D$ and $4D$ surface/toric code being special instances, respectively.

\subsection{Homomorphic CNOT and homomorphic measurement gadget \label{sec:homomorphic_measurement}}
In this section, we briefly review the general framework of the homomorphic CNOT and the homomorphic measurement gadget introduced in Ref.~\cite{huang2022homomorphic}. 

\begin{definition}[Homomorphic CNOT]
    Let $\mc{Q}$ and $\mc{Q}^{\prime}$ be two quantum codes associated with two chain complexes $\{\{Q_i\}_{i = 0}^2, \{\partial_i\}_{i = 1}^2\}$, and $\{\{Q^{\prime}_i\}_{i = 0}^2, \{\partial^{\prime}_i\}_{i = 1}^2\}$, respectively. Let $\mb{\gamma} = \{\gamma_i: Q_i^{\prime} \rightarrow Q_i\}_{i = 0}^2$ be a homomorphism between the two chain complexes, i.e., the following diagram is commutative:
    % \begin{equation}
    %     \begin{aligned}
    % & Q_2 \xrightarrow{\partial_2} & Q_1 \xrightarrow{\partial_1} & Q_0 \\
    % & \big\uparrow \gamma_2 & \big\uparrow \gamma_1 & \big\uparrow \gamma_0\\
    % & Q_2^{\prime} \xrightarrow{\partial_2^{\prime}} & Q_1^{\prime} \xrightarrow{\partial_1^{\prime}} & Q_0^{\prime}\\
    % \end{aligned}
    % \label{eq:quantum_code_homo}
    % \end{equation}
    \begin{equation}
    \label{eq:quantum_code_homo}
        \begin{tikzcd}
	{Q_2} & {Q_1} & {Q_0} \\
	{Q_2^{\prime}} & {Q_1^{\prime}} & {Q_0^{\prime}}
	\arrow["{\partial_2}", from=1-1, to=1-2]
	\arrow["{\partial_1}", from=1-2, to=1-3]
        \arrow["{\partial_2^{\prime}}", from=2-1, to=2-2]
	\arrow["{\partial_1^{\prime}}", from=2-2, to=2-3]
        \arrow["{\gamma_2}", from=2-1, to=1-1]
	\arrow["{\gamma_1}", from=2-2, to=1-2]
        \arrow["{\gamma_0}", from=2-3, to=1-3]
\end{tikzcd}
    \end{equation}
     Then physical $\mc{Q}$-controlled CNOTs specified by $\gamma_1$, i.e. a physical CNOT controlled by the $i$-th qubit of $\mc{Q}$ and targeted the $j$-th qubit of $\mc{Q}^{\prime}$ is applied if and only if $\gamma_1[i,j] = 1$, give some $\mc{Q}$-controlled \textbf{logical} CNOT gates between $\mc{Q}$ and $\mc{Q}^{\prime}$. We refer to such a logical gadget as a homomorphic CNOT associated with the homomorphism $\mb{\gamma}$.
    \label{def:homomorphic_CNOT}
\end{definition}

The above homomorphic CNOT gadget is a valid logical operation since the conditions that the stabilizers are preserved under such a gadget are equivalent to the diagram in Eq.~\eqref{eq:quantum_code_homo} being commutative~\cite{huang2022homomorphic}. In other words, finding a homomorphism between two quantum codes directly leads to a logical gadget that implements some inter-block logical CNOTs. 

Using the homomorphic CNOT in Def.~\ref{def:homomorphic_CNOT}, one can implement a homomorphic measurement gadget on a data quantum code $\mc{Q}$ by constructing a specific ancilla code $\mc{Q}^{\prime}$ and implementing the generalized Steane measurement (see Fig.~\ref{fig:homomorphic_measurement}(b)) that utilizes the inter-block homomorphic CNOTs. Specifically, by initializing the logical qubits of $\mc{Q}^{\prime}$ in the $Z$ ($X$) basis, applying the $\mc{Q}$- ($\mc{Q}^{\prime}$-) controlled homomorphic CNOTs, and measuring $\mc{Q}^{\prime}$ in the $Z$ ($X$) basis, we can measure products of Pauli $Z$ ($X$) logical operators of $\mc{Q}$. 

In general, the homomorphic measurement gadget implements $N$ Pauli product measurements (PPMs) on the data code non-destructively and \emph{in parallel}, where $N$ equals the number of logical qubits in $\mc{Q}^{\prime}$. In addition, it is easy to guarantee the fault tolerance of the gadget by using a large-distance ancilla, constant-depth homomorphic CNOTs, and fault-tolerant ancilla state preparation and measurement (which can be done by simply performing $d$ QEC cycles and transversally measuring the qubits, respectively, for any distance-$d$ CSS code).
Ref.~\cite{huang2022homomorphic} has constructed homomorphic measurement gadgets for performing one PPM on a toric code or a hyperbolic surface code using an ancilla code that encodes a single logical qubit. However, their construction relies on the notion of covering spaces and the topological properties of codes and it was not clear how to generalize their constructions to algebraically constructed qLDPC codes.

% \begin{equation}
%     (\bigotimes_{i = 1}^D \mathbf{A}^i)_{x_1, \cdots, x_D}:= \otimes_{i=1}^D A^i_{x_i},
% \end{equation}
% with boundary maps
% \begin{equation}
%     \partial^i_{\vec{x}} := \bigotimes_{j=1}^D (\partial^j_{x_j})^{\delta_{i,j}}.
% \end{equation}

% \subsection{Existing schemes for performing logical operations on homological product codes \label{sec:existing_logical_gates}}
% \outline{Fold-transversal gates}
% \outline{Lattice surgery with surface codes}
\subsection{Modifying classical codes \label{sec:puncturing_shortening}}
Here, we review some well-known techniques for modifying classical codes, which we will utilize later to induce structure-preserving modifications on the quantum homological product codes for implementing the homomorphic gadgets in Sec.~\ref{sec:homomorphic_measurement}.

% \begin{definition}[Puncturing and shortening]
% Let $H \in \mathbb{F}_2^{m\times n}$ be a binary check matrix and $\mb{S} \subseteq [n]$ be a subset of column indices. 
% We say that a new matrix $H^S$ is obtained by puncturing $H$ on $\mb{S}$ if $H^S$ is a submatrix of $H$ with columns in $\mb{S}$ deleted, i.e.
% \begin{equation}
%     H^{\mb{S}} := H|_{[n]\backslash \mb{S}}.
% \end{equation}
% We say that a new matrix $H_{\mb{S}}$ is obtained by shortening $H$ on $\mb{S}$ if it is constructed by
% \begin{enumerate}
%     \item Construct a full-rank matrix $M \in \mathbb{F}_2^{(m - m_0)\times n}$, for some $m_0 \leq m$, such that:
%     \begin{equation}
%         \mr{rs}(M) = \mr{rs}(H)\backslash \mr{span}\{h \in \mr{rs}(H) \mid h|_{\mb{S}} = 0^{|\mb{S}|}\}. \label{eq:shorten_1}
%     \end{equation}
%     \item 
%     \begin{equation}
%         H_{\mb{S}} := M^{\mb{S}}. \label{eq:shortern_2}
%     \end{equation}
% \end{enumerate}
% \end{definition}

\begin{definition}[Puncturing and shortening]
Let $\mc{C}$ be a $[n,k,d]$ classical code with a check matrix $H \in \mbb{F}_2^{(n-k)\times n}$ and a generator matrix $G \in \mbb{F}_2^{k\times n}$. Let $\mb{S} \subseteq [n]$ be a set of bit indices. Puncturing $H$ on $\mb{S}$ gives a new check matrix $H^{\mb{S}}$, which is defined as a submatrix of $H$ with columns in $\mb{S}$ deleted, i.e.
\begin{equation}
    H^{\mb{S}} := H|_{[n]\backslash \mb{S}}.
\end{equation}
Moreover, the generator matrix of $H^{\mb{S}}$ is $G_\mb{S}$, which is obtained by shortening $G$ on $\mb{S}$:
\begin{enumerate}
    \item Find the subcode of $\mc{C}$ with a generator matrix $M \in \mathbb{F}_2^{(k - m)\times n}$, for some $m \leq k$, such that:
    \begin{equation}
        \mr{rs}(M) = \mr{span}\{h \in \mr{rs}(G) \mid h|_{\mb{S}} = 0_{|\mb{S}|}\}. \label{eq:shorten_1}
    \end{equation}
    \item 
    \begin{equation}
        G_{\mb{S}} := M^{\mb{S}}. \label{eq:shortern_2}
    \end{equation}
\end{enumerate}
\label{def:puncturing_oper}
\end{definition}

Obviously, puncturing and shortening are dual to each other: puncturing the check matrix of a code corresponds to shortening its generator matrix (on the same set of bits). This gives a way of constructing a new code $\mc{C}^{\prime}$ from an old code $\mc{C}$ by puncturing its check matrix on a subset of bits. We refer to such a transformation as puncturing a code on some set of bits, for simplicity. In general, the new code $\mc{C}^{\prime}$ has $n^{\prime} \leq n$ and $k^{\prime} \leq k$. For generic puncturing, there is no guarantee on the new code distance, which could either increase, decrease, or remain the same.
% Note that based on the above definition, puncturing a matrix $H$ on a column subset $\mb{S}$ uniquely gives a submatrix $H_{\mb{S}}$. However, shortening a matrix $H$ on $\mb{S}$ could give $H_{\mb{S}}$ in different representations, differing by linear transformations in the rows. In fact, the shortening operation was defined on the generator matrix of a classical code and in that case, the shortening simply corresponds to deleting the codewords with non-zero support on $\mb{S}$ (see Eq.~\eqref{eq:shorten_1}) and then puncturing on $\mb{S}$ (see Eq.~\eqref{eq:shortern_2}). 

% Given a $[n,k,d]$ linear classical code $\mc{C}$ with a check matrix $H \in \mathbb{F}_2^{(n - k)\times n}$ and a generator matrix $G \in \mathbb{F}_2^{k\times n}$ with $H G^T = 0$, it is known that puncturing $H$ on a column subset $\mb{S}$ corresponds to shortening $G$ on $\mb{S}$, i.e., $G_{\mb{S}}$ is the generator matrix of $H_{\mb{S}}$.

% For simplicity, we say that a new code $\mc{C}^{\prime}$ with a check matrix $H$ is obtained by puncturing $\mc{C}$ on a subset of bits $\mb{S}$ if $\mc{C}^{\prime}$ has a check matrix $H^{\mb{S}}$.
\begin{definition}[Augmenting and expurgating]
Let $\mc{C}$ be a $[n,k,d]$ classical code with a check matrix $H \in \mbb{F}_2^{(n-k)\times n}$ and a generator matrix $G \in \mbb{F}_2^{k\times n}$. Let $H_0 \in \mbb{F}_2^{r_0\times n}$ be some new checks. Augmenting $H$ with $H_0$ gives a new check matrix $H^{+H_0}$, which is defined as appending the new checks in $H_0$ to $H$, i.e.
\begin{equation}
    H^{+H_0} := \left(\begin{array}{c}
         H \\
         H_0
    \end{array}
    \right).
\end{equation}
The generator matrix of $H^{+H_0}$ is $G_{-H_0}$, which is obtained by removing the codewords in $\mr{rs}(G)$ that do not satisfy the extra constraints imposed by $H_0$, a process called expurgating:
\begin{equation}
    \mr{rs}(G_{-H_0}) =  \mr{rs}(G) \cap \ker{H_0},
\end{equation}
where $\mr{rs}(\bullet)$ denotes the row space.
\label{def:augmenting_oper}
\end{definition}
Augmenting and expurgating are also dual to each other: augmenting the check matrix of a code corresponds to expurgating its generator matrix (with respect to the same set of extra checks). This also gives a way of constructing a new code $\mc{C}^{\prime}$ from an old code $\mc{C}$ by adding some extra checks. We refer to such a transformation as augmenting a code with some new checks, for simplicity. In general, the new code $\mc{C}^{\prime}$ has $n^{\prime} = n, k^{\prime} \leq k$ and $d^{\prime} \geq d$. 

For the purpose of this work, we will need to modify classical codes in a way that preserves the code distance. The augmenting-expurgating operation trivially satisfies this requirement since it only removes a subset of codewords. However, the same does not hold for the puncturing-shortening operation in general since some bits are removed during such an operation. Nevertheless, as we will show in the following, we can make sure that the distance is preserved if we only puncture on a specific subset of bits for any given code.

Given any $[n,k,d]$ classical code $\mc{C}$ with a check matrix $H \in \mbb{F}_2^{(n-k)\times n}$ and a set of bits labeled by the column indices of $H$, i.e. $\mb{B} \simeq [n]$, we can find a subset of $n - k$ bits $\mb{B_{\mr{NI}}}$ such that the columns of $H$ indexed by $\mb{B_{\mr{NI}}}$ are linearly independent. Without loss of generality, we can assume that $\mb{B_{\mr{NI}}}$ are the last $n - k$ bits since otherwise we can simply permute $\mb{B_{\mr{NI}}}$ to the last $n - k$ bits. Then, by performing row elementary operations, we can transform $H$ to a canonical form:
\begin{equation}
    H_c = \left( h_1, h_2,\cdots, h_k, I_{n-k} \right),
\end{equation}
where $h_i \in \mbb{F}_2^{(n-k)\times 1}$. With this canonical $H_c$, we can easily obtain the generator matrix, which is also in the canonical form:
\begin{equation}
    G_c = \left(I_k, \begin{array}{c}
        h_1^T \\
        h_2^T \\
        \cdots \\
        h_k^T\\
    \end{array} \right).
    \label{eq:classical_canonical_generator_mat}
\end{equation}

The complementary of $\mb{B}_{\mr{NI}}$, $\mb{B_I} := [n]\backslash \mb{B}_{\mr{NI}}$ (the first $k$ bits in the canonical form), are referred to as the information bits in the classical code literature~\cite{huffman2010fundamentals}.
Now, we show that puncturing on any set of bits $\mb{S} \subseteq \mb{B_I}$ that are information bits does not reduce the code distance. Again, without loss of generality, we assume $\mb{B_I} = [k]$ and $\mb{S} = [|\mb{S}|]$. Based on Def.~\ref{def:puncturing_oper}, we know that puncturing $H$ on $\mb{S}$ corresponds to shortening $G_c$ on $\mb{S}$, i.e. the new generator matrix can be written as
\begin{equation}
    G_{\mb{S}} = \left(I_{k - |\mb{S}|}, \begin{array}{c}
        h_{|\mb{S}| + 1}^T \\
        h_{|\mb{S}| + 2}^T \\
        \cdots \\
        h_k^T\\
    \end{array} \right).
\end{equation}
Since the $k - |\mb{S}|$ rows of $G_{\mb{S}}$ are the same as the last  $k - |\mb{S}|$ rows of $G_c$, up to some extra zeros, the distance of the new code is equal to or greater than $d$.

Moreover, the transformation from $G_c$ to $G_{\mb{S}}$ entails a concise transformation on the codewords by the puncturing operation. Let $\mb{\bar{C}} \simeq [k]$ denote the $k$ codewords of $\mc{C}$, which we refer to as logical bits, for simplicity. Puncturing on the bits indexed by $\mb{S}$ simply corresponds to removing the logical bits indexed by $\mb{S}$ (up to shortening other logical bits by some zero entries).

\section{Homomorphic CNOT and measurements for homological product codes \label{sec:sketch_homomorphic_gadget}}
In this section, we present our main technical results --- constructing nontrivial logical homomorphic CNOT and measurements (see Def.~\ref{def:homomorphic_CNOT}) for generic homological product codes (see Sec.~\ref{sec:homological_product_codes}). 

\subsection{Quantum-code homomorphisms induced by classical-code homomorphisms}
As introduced in Sec.~\ref{sec:homomorphic_measurement}, constructing a homomorphic CNOT between two quantum codes $\mc{Q}$ and $\mc{Q}^{\prime}$ comprises finding a homomorphism $\mb{\gamma}: \mc{Q}^{\prime} \rightarrow \mc{Q}$. However, finding a nontrivial homomorphism between two generic, non-topological quantum codes is challenging~\cite{huang2022homomorphic}. Fortunately, utilizing the product structure of homological product codes (Sec.~\ref{sec:homological_product_codes}), we can reduce the task of finding homomorphisms between two homological product codes to finding homomorphisms between their base classical codes, which, as we will show later, is a much easier task.

Let $\{\mc{C}^i\}_{i \in [D]}$ and $\{\mc{C}^{\prime i}\}_{i \in [D]}$ be two sets of base classical codes for constructing two $D$-dimensional homological product codes $\mc{Q}$ and $\mc{Q}^{\prime}$, respectively. Let $\{ \mb{\gamma}^i: \mc{C}^{\prime i} \rightarrow \mc{C}^i\}_{i \in [D]}$ be a set of homomorphisms between the classical codes, i.e. the following diagram is commutative:
\begin{equation}
    \begin{tikzcd}
    	{C^i_1} & {C^i_0} \\
    	{C_1^{\prime i}} & {C_0^{\prime i}}
    	\arrow["{\partial^i_1}", from=1-1, to=1-2]
            \arrow["{\partial^{\prime i}_1}", from=2-1, to=2-2]
            \arrow["{\gamma^i_{1}}", from=2-1, to=1-1]
    	\arrow["{\gamma^i_{0}}", from=2-2, to=1-2]
    \end{tikzcd}
    \end{equation}
In the following, we will show that a homomorphism $\mb{\gamma}: \mc{Q}^{\prime} \rightarrow \mc{Q}$ can be essentially constructed by taking tensor products of the classical-code homomorphisms $\{\gamma^i\}_{i \in [D]}$.

Recall that to construct a homological product $\mc{Q}$, we first construct a product complex $\mc{D} = \mr{Prod}(\{\mc{C}^i\}_{i = 1}^D)$ as the tensor product of all the base classical codes (see Def.~\ref{def:product_complex}). Then we project $\mc{D}$ onto a length-$D$ chain complex $\mc{T} = \mr{Tot}(\mc{D})$ (see Def.~\ref{def:total_complex}). Finally, for $D \geq 2$, we define the quantum code $\mc{Q} \subseteq \mc{T}$ as a length-$2$ subcomplex of  $\mc{T}$. The same construction is applied to $\mc{Q}^{\prime}$, i.e. $\mc{Q}^{\prime} \subseteq \mc{T}^{\prime} := \mr{Tot}(\mc{D}^{\prime})$, where $\mc{D}^{\prime} := \mr{Prod}(\{\mc{C}^{\prime i}\}_{i \in [D]})$.

To construct a homomorphism between $\mc{Q}$ and $\mc{Q}^{\prime}$, we first construct a homomorphism between $\mc{D}$ and $\mc{D}^{\prime}$ by simply taking the tensor product of the classical-code homomorphisms:

\begin{proposition}[Classically induced homomorphism for product complexes]
    The linear map $\gamma^D = \{\gamma^D_{\vec{x}}: D^{\prime}_{\vec{x}} \rightarrow D_{\vec{x}} \}_{\vec{x}}$, where
    \begin{equation}
        \gamma^D_{\vec{x}} := \bigotimes_{i=1}^D \gamma^i_{x_i},
        \label{eq:complex_homo_2}
    \end{equation}
    is a homomorphism between $\mc{D}^{\prime}$ and $\mc{D}$.
    \label{lemma:prod_complex_homo}
\end{proposition}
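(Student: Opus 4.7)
The plan is to verify directly that $\gamma^D$ intertwines the boundary maps of $\mc{D}^{\prime}$ and $\mc{D}$ in every direction, i.e., that for each $i \in [D]$ and each $\vec{x} \in \mbb{Z}^D$ the square
\begin{equation*}
\begin{tikzcd}
D^{\prime}_{\vec{x}} \arrow[r, "\partial^{\prime i}_{\vec{x}}"] \arrow[d, "\gamma^D_{\vec{x}}"'] & D^{\prime}_{\vec{x}-\vec{e}_i} \arrow[d, "\gamma^D_{\vec{x}-\vec{e}_i}"] \\
D_{\vec{x}} \arrow[r, "\partial^i_{\vec{x}}"'] & D_{\vec{x}-\vec{e}_i}
\end{tikzcd}
\end{equation*}
commutes. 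This is the natural higher-dimensional analog of the chain-complex homomorphism condition, since the $D$-dimensional product complex has $D$ independent boundary maps rather than one.

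The key step is to exploit the tensor factorization of both the boundary maps (Eq.~\eqref{eq:boundary_map_D_complex}) and $\gamma^D_{\vec{x}}$ (Eq.~\eqref{eq:complex_homo_2}). Using the elementary identity $(A_1 \otimes \cdots \otimes A_D)(B_1 \otimes \cdots \otimes B_D) = (A_1 B_1) \otimes \cdots \otimes (A_D B_D)$ for linear maps, I would compute
\begin{equation*}
\partial^i_{\vec{x}} \circ \gamma^D_{\vec{x}} \;=\; \bigotimes_{j=1}^D (\partial^j_{x_j})^{\delta_{i,j}} \circ \gamma^j_{x_j} \;=\; \bigl(\partial^i_{x_i}\circ \gamma^i_{x_i}\bigr) \otimes \bigotimes_{j\neq i} \gamma^j_{x_j},
\end{equation*}
and symmetrically
\begin{equation*}
\gamma^D_{\vec{x}-\vec{e}_i} \circ \partial^{\prime i}_{\vec{x}} \;=\; \bigl(\gamma^i_{x_i-1}\circ \partial^{\prime i}_{x_i}\bigr) \otimes \bigotimes_{j\neq i} \gamma^j_{x_j}.
\end{equation*}
The equality of the two sides then reduces, factor by factor, to the commutativity of the $i$-th classical-code square, $\partial^i_{x_i}\circ \gamma^i_{x_i} = \gamma^i_{x_i-1}\circ \partial^{\prime i}_{x_i}$, which is exactly the hypothesis that each $\mb{\gamma}^i$ is a classical-code homomorphism. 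For all other factors both compositions reduce to $\gamma^j_{x_j}$ (since the identity map is applied in direction $j\neq i$), so the two sides agree without any additional input.

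Since this holds for every direction $i$ and every multi-index $\vec{x}$, the collection $\gamma^D$ satisfies the commutativity condition for a morphism of product complexes, which is exactly the claim. I do not anticipate a serious obstacle: the only care needed is bookkeeping, specifically keeping track of which tensor factor changes under $\partial^i$ versus $\partial^{\prime i}$ and making sure the index shift $\vec{x} \mapsto \vec{x}-\vec{e}_i$ is applied only in the $i$-th coordinate on both sides. Once the tensor factorizations are written out, the proof is a one-line application of the classical commuting squares in parallel.
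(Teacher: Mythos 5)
Your proposal is correct and takes essentially the same approach as the paper: both verify the commuting square for every direction $i$ and every $\vec{x}$ by factoring $\partial^i_{\vec{x}}\gamma^D_{\vec{x}}$ tensor-factor by tensor-factor and invoking the classical-code homomorphism condition on the $i$-th factor, with the remaining factors trivially agreeing. The paper merely compresses the factor-by-factor bookkeeping into a single line using the Kronecker-delta exponent notation.
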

\begin{proof}
    We need to show that $\gamma^D$ preserves the boundary maps of the product complexes. More concretely, we need to show that the following diagram is commutative for any $\vec{x}$ and $i$:
%     \begin{equation}
%     \begin{aligned}
%     &  D_{\vec{x}} \xrightarrow{\partial^i_{\vec{x}}} & D_{\vec{x} - \vec{e}_i}\\
%     & \big\uparrow \gamma^D_{\vec{x}} & \big\uparrow \gamma^D_{\vec{x} - \vec{e}_i}\\
%     &  D^{\prime}_{\vec{x}} \xrightarrow{\partial^{\prime i}_{\vec{x}}} & D^{\prime}_{\vec{x} - \vec{e}_i}\\
%     \end{aligned}.
% \end{equation}
    \begin{equation}
    \begin{tikzcd}
    	{D_{\vec{x}}} & {D_{\vec{x} - \vec{e}_i}} \\
    	{D^{\prime}_{\vec{x}}} & {D^{\prime}_{\vec{x} - \vec{e}_i}}
    	\arrow["{\partial^i_{\vec{x}}}", from=1-1, to=1-2]
            \arrow["{\partial^{\prime i}_{\vec{x}}}", from=2-1, to=2-2]
            \arrow["{\gamma^D_{\vec{x}}}", from=2-1, to=1-1]
    	\arrow["{\gamma^D_{\vec{x} - \vec{e}_i}}", from=2-2, to=1-2]
    \end{tikzcd}
    \end{equation}
Based on Eq.~\eqref{eq:boundary_map_D_complex} and Eq.~\eqref{eq:complex_homo_2}, we have 
\begin{equation}
    \partial^i_{\vec{x}}\gamma^D_{\vec{x}} = \bigotimes_{j=1}^D (\partial^j_{x_j})^{\delta_{i,j}} \gamma^j_{x_j} = \bigotimes_{j=1}^D \gamma^j_{x_j - \delta_{i,j}} (\partial^{\prime j}_{x_j})^{\delta_{i,j}} = \gamma^D_{\vec{x}-\vec{e}_i} \partial^{\prime i}_{\vec{x}},
    \label{eq:commutative_eq_complex}
\end{equation}
where we have utilized the assumption that $\gamma^j: \mc{C}^{j \prime} \rightarrow \mc{C}^j$ is a homomorphism, i.e.
\begin{equation}
    \partial^j_{x_j} \gamma^j_{x_j} = \gamma^j_{x_j - 1}\partial^{\prime j}_{x_j}.
\end{equation}
\end{proof}

With the classically induced homomorphism $\gamma^D: \mc{D}^{\prime} \rightarrow \mc{D}$, we can transform it to a homomorphism between the total complexes $\gamma: \mc{T}^{\prime} \rightarrow \mc{T}$, which also serves as the homomorphism between the subcomplexes $\mc{Q}^{\prime} \subseteq \mc{T}^{\prime}$ and $\mc{Q} \subseteq \mc{T}$:

\begin{proposition}[Classically induced homomorphism for homological product codes]
The linear map $\mb{\gamma} = \{\gamma_k: T^{\prime}_k \rightarrow T_k\}$ from $\mc{T}^{\prime}$ to $\mc{T}$, where 
\begin{equation}
    \gamma_k := \bigoplus_{|\vec{x}|=k} \gamma^D_{\vec{x}},
    \label{eq:total_complex_homo}
\end{equation}
with $\{\gamma^D_{\vec{x}}\}$ defined in Eq.~\eqref{eq:complex_homo_2}, is a homomorphism from $\mc{T}^{\prime}$ to $\mc{T}$. Note that the direct sum in Eq.~\eqref{eq:total_complex_homo} means that $\gamma_k$ is in a block-diagonal form, i.e. $\gamma_k T_k = \gamma_k (\bigoplus_{|\vec{x}| = k} D_{\vec{x}}) = \bigoplus_{|\vec{x}| = k}(\gamma^D_{\vec{x}} D_{\vec{x}})$. $\gamma$ also serves as a homomorphism from $\mc{Q}^{\prime} \subseteq \mc{T}^{\prime}$ to $\mc{Q} \subseteq \mc{T}$. 
\label{prop:homological_prod_code_homo_by_classical}
\end{proposition}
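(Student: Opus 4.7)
The plan is to verify the chain-map condition $\delta_k \gamma_k = \gamma_{k-1}\delta^{\prime}_k$ for every $k$, and then observe that this restricts to the length-$2$ subcomplexes carved out by $\mc{Q}$ and $\mc{Q}^{\prime}$. Since both $\gamma_k$ and the pieces of $\delta_k$ are built blockwise out of the maps $\gamma^D_{\vec{x}}$ and $\partial^i_{\vec{x}}$ indexed by grid points $\vec{x}$ with $|\vec{x}| = k$, the global commutativity should collapse, block by block, onto the commutativity already established at the level of product complexes in Proposition~\ref{lemma:prod_complex_homo}. Hence the whole argument is really one of unfolding definitions and matching indices.

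Concretely, I would take an arbitrary element $a = \bigoplus_{|\vec{x}|=k} a_{\vec{x}} \in T^{\prime}_k$ with $a_{\vec{x}} \in D^{\prime}_{\vec{x}}$, and expand both sides. On one side, $\gamma_k(a) = \bigoplus_{|\vec{x}|=k} \gamma^D_{\vec{x}}(a_{\vec{x}})$ by the block-diagonal definition~\eqref{eq:total_complex_homo}, and then applying $\delta_k$ using~\eqref{eq:boundary_map_total_complex} yields
\begin{equation}
\delta_k \gamma_k(a) = \sum_{|\vec{x}|=k}\Bigl(\bigoplus_{|\vec{y}|=k-1} \partial_{\vec{y},\vec{x}}\,\gamma^D_{\vec{x}}(a_{\vec{x}})\Bigr).
\end{equation}
On the other side, $\delta^{\prime}_k(a) = \sum_{|\vec{x}|=k}\bigl(\bigoplus_{|\vec{y}|=k-1} \partial^{\prime}_{\vec{y},\vec{x}}\, a_{\vec{x}}\bigr)$, so applying $\gamma_{k-1}$ gives
\begin{equation}
\gamma_{k-1}\delta^{\prime}_k(a) = \sum_{|\vec{x}|=k}\Bigl(\bigoplus_{|\vec{y}|=k-1} \gamma^D_{\vec{y}}\,\partial^{\prime}_{\vec{y},\vec{x}}\,a_{\vec{x}}\Bigr).
\end{equation}
Comparing these two expressions term by term, it suffices to show $\partial_{\vec{y},\vec{x}}\,\gamma^D_{\vec{x}} = \gamma^D_{\vec{y}}\,\partial^{\prime}_{\vec{y},\vec{x}}$ for every pair $(\vec{y},\vec{x})$.

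For pairs with $\vec{x} - \vec{y} = \vec{e}_i$ for some $i \in [D]$, both $\partial_{\vec{y},\vec{x}}$ and $\partial^{\prime}_{\vec{y},\vec{x}}$ reduce to $\partial^i_{\vec{x}}$ and $\partial^{\prime i}_{\vec{x}}$ respectively by~\eqref{eq:partial_yx}, and Proposition~\ref{lemma:prod_complex_homo} (specifically~\eqref{eq:commutative_eq_complex}) gives $\partial^i_{\vec{x}}\gamma^D_{\vec{x}} = \gamma^D_{\vec{x}-\vec{e}_i}\partial^{\prime i}_{\vec{x}}$, which is exactly what is needed. For all other pairs, both maps are zero by definition, so the identity holds trivially. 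This establishes that $\mb{\gamma}$ is a chain homomorphism from $\mc{T}^{\prime}$ to $\mc{T}$.

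Finally, to pass from $\mc{T}^{\prime} \to \mc{T}$ to $\mc{Q}^{\prime} \to \mc{Q}$, I would note that each quantum code is defined as a length-$2$ subcomplex indexed by three consecutive levels of the total complex (corresponding to $Z$-checks, qubits, and $X$-checks). The restriction of $\mb{\gamma}$ to the relevant three levels yields the desired chain map of length-$2$ complexes, since the block-diagonal structure of $\gamma_k$ sends each summand $D^{\prime}_{\vec{x}}$ into the corresponding $D_{\vec{x}}$ at the same degree. The only genuine bookkeeping challenge is to keep the direct sum indexing straight; no conceptual obstacle remains once Proposition~\ref{lemma:prod_complex_homo} is in hand.
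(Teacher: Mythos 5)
Your proposal is correct and follows essentially the same route as the paper's proof: expand $\delta_k\gamma_k$ and $\gamma_{k-1}\delta'_k$ on a generic element of $T'_k$ via the block-diagonal definitions, reduce to the blockwise identity $\partial_{\vec{y},\vec{x}}\gamma^D_{\vec{x}} = \gamma^D_{\vec{y}}\partial'_{\vec{y},\vec{x}}$, and settle it using Proposition~\ref{lemma:prod_complex_homo} in the case $\vec{x}-\vec{y}=\vec{e}_i$ and triviality otherwise. Your closing remark on restricting to the length-$2$ subcomplexes is a detail the paper leaves implicit, but it changes nothing substantive.
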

\begin{proof}
We need to prove that the following diagram is commutative 
% \begin{equation}
%     \begin{aligned}
%     &  T_{k} \xrightarrow{\delta_k} & T_{k - 1}\\
%     & \big\uparrow \gamma_{k} & \big\uparrow \gamma_{k-1}\\
%     &  T^{\prime}_{k} \xrightarrow{\delta^{\prime}_k} & T^{\prime}_{k-1}\\
%     \end{aligned}.
% \end{equation}
    \begin{equation}
    \begin{tikzcd}
    	{T_{k}} & {T_{k - 1}} \\
    	{T^{\prime}_{k}} & {T^{\prime}_{k-1}}
    	\arrow["{\delta_k}", from=1-1, to=1-2]
            \arrow["{\delta^{\prime}_k}", from=2-1, to=2-2]
            \arrow["{\gamma_{k}}", from=2-1, to=1-1]
    	\arrow["{\gamma_{k-1}}", from=2-2, to=1-2]
    \end{tikzcd}
    \end{equation}
Let $a^{\prime} = \bigoplus_{|\vec{x}|=k}a^{\prime}_{\vec{x}} \in T_k^{\prime}$, where $a^{\prime}_{\vec{x}} \in \mc{D}^{\prime}_{\vec{x}}$. According to Eq.~\eqref{eq:boundary_map_total_complex} and Eq.~\eqref{eq:total_complex_homo}, we have
\begin{equation}
    \delta_{k} \gamma_k a^{\prime} = \delta_k (\bigoplus_{|\vec{x}|=k}\gamma^D_{\vec{x}}a^{\prime}_{\vec{x}}) 
    = \sum_{|\vec{x}|=k} ( \bigoplus_{|\vec{y}|=k-1} \partial_{\vec{y}, \vec{x}} \gamma^D_{\vec{x}} a^{\prime}_{\vec{x}}),
    \label{eq:LHS}
\end{equation}
and 
\begin{equation}
    \gamma_{k-1}\delta_k^{\prime} a^{\prime} 
    = \sum_{|\vec{x}|=k} \gamma_{k-1} (\bigoplus_{|\vec{y}|=k-1} \partial^{\prime}_{\vec{y}, \vec{x}} a^{\prime}_{\vec{x}} ) 
    =  \sum_{|\vec{x}|=k} (\bigoplus_{|\vec{y}|=k-1} \gamma^D_{\vec{y}} \partial^{\prime}_{\vec{y}, \vec{x}} a^{\prime}_{\vec{x}} ).
    \label{eq:RHS}
\end{equation}
To prove Eq.~\eqref{eq:LHS} equals Eq.~\eqref{eq:RHS}, we only need to show $\partial_{\vec{y}, \vec{x}} \gamma^D_{\vec{x}} a^{\prime}_{\vec{x}} = \gamma^D_{\vec{y}} \partial^{\prime}_{\vec{y}, \vec{x}} a^{\prime}_{\vec{x}}$. If $\vec{x} - \vec{y} = \vec{e}_i$ for some $i \in [D]$, $\partial_{\vec{y}, \vec{x}} = \partial^i_{\vec{x}}$ (see Eq.~\eqref{eq:partial_yx}), and 
\begin{equation}
    \partial_{\vec{y}, \vec{x}} \gamma^D_{\vec{x}} = \partial^i_{\vec{x}}\gamma^D_{\vec{x}} = \gamma^D_{\vec{x} - \vec{e}_i} \partial^{\prime i}_{\vec{x}} = \gamma^D_{\vec{y}} \partial^{\prime}_{\vec{y}, \vec{x}},
\end{equation}
according to Proposition~\ref{lemma:prod_complex_homo} (see Eq.~\eqref{eq:commutative_eq_complex}); Otherwise, $\partial_{\vec{y}, \vec{x}} \gamma^D_{\vec{x}} = \gamma^D_{\vec{y}} \partial^{\prime}_{\vec{y}, \vec{x}} = 0$ since $\partial_{\vec{y}, \vec{x}} = \partial^{\prime}_{\vec{y}, \vec{x}} = 0$.
\end{proof}

\subsection{Classical-code homomorphisms}
According to Proposition~\ref{prop:homological_prod_code_homo_by_classical}, we can find a homomorphism between two homological product codes by simply finding a set of homomorphisms between their base classical codes. 
In this section, we construct a few useful classical-code homomorphisms based on the puncturing and the augmenting operations on classical codes (see Sec.~\ref{sec:puncturing_shortening}).

% Given a $[n,k,d]$ classical code $\mc{C}: C_1 \xrightarrow{H} C_0$, we can construct another $[n^{\prime}, k^{\prime}, d^{\prime}]$ code $\mc{C}^{\prime}: C_1^{\prime} \xrightarrow{H^{\prime}} C_0^{\prime}$ by puncturing $H$ on a subset of column indices (or equivalently, bits) $\mb{S}$, i.e. 
% \begin{equation}
%     H^{\prime} = H^{\mb{S}}.
% \end{equation}
% Without loss of generality, we assume that $\mb{S}$ is the first $|\mb{S}|$ bits, i.e. $\mb{S} = [|\mb{S}|]$. 

We first show that the puncturing (see Def.~\ref{def:puncturing_oper}) and the augmenting (see Def.~\ref{def:augmenting_oper}) operations on a classical code are structure-preserving and they naturally induce a homomorphism between the old code and the modified code, which we call the puncturing-augmenting homomorphism:
\begin{proposition}[Puncturing-Augmenting homomorphism for classical codes]
Let $\mc{C}: C_1 \xrightarrow{H} C_0$ be a $[n,k,d]$ classical code, $\mb{S} \subseteq [n]$ a subset of bit indices, and $H_0 \in \mbb{F}_2^{m\times (n - |\mb{S}|)}$ a set of extra checks. Without loss of generality, assume that $\mb{S} = [|\mb{S}|]$. Let $\mc{C}^{\prime}: C_1^{\prime} \xrightarrow{H^{\mb{S, +H_0}}} C_0^{\prime}$ be another classical code obtained by puncturing $\mc{C}$ on $\mb{S}$ and then augmented with $H_0$. Then the linear map $\gamma = \{\gamma_1: C_1^{\prime} \rightarrow C_1, \gamma_0: C_0^{\prime} \rightarrow C_0\}$, where
\begin{equation}
    \gamma_1 = \left(\begin{array}{c}
         0_{|\mb{S}|\times (n - |\mb{S}|)}\\
         I_{n - |\mb{S}|}
    \end{array}\right),
    \quad \gamma_0 = \left(I_{n - k}, 0_{(n-k)\times m}\right),
    \label{eq:puncturing_maps}
\end{equation}
is a homomorphism from $\mc{C}^{\prime}$ to $\mc{C}$, i.e.
% \begin{equation}
%     \begin{aligned}
% & C_1 \xrightarrow{H} & C_0\\
% & \big\uparrow \gamma_1 & \big\uparrow \gamma_0\\
% & C_1^{\prime} \xrightarrow{H^{\mb{S}, +H_0},} & C_0^{\prime}\\
% \end{aligned}
% \end{equation}
\begin{equation}
    \begin{tikzcd}
    	{C_1} & {C_0} \\
    	{C_1^{\prime}} & {C_0^{\prime}}
    	\arrow["{H}", from=1-1, to=1-2]
            \arrow["{H^{\mb{S}, +H_0}}", from=2-1, to=2-2]
            \arrow["{\gamma_{1}}", from=2-1, to=1-1]
    	\arrow["{\gamma_{0}}", from=2-2, to=1-2]
    \end{tikzcd}
    \end{equation}
is commutative.
\label{prop:puncturing_augmenting_homo}
\end{proposition}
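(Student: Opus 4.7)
The plan is to verify directly that the stated diagram commutes, i.e. that $H\gamma_1 = \gamma_0 H^{\mb{S}, +H_0}$ as linear maps from $C_1^{\prime}$ to $C_0$. This is a concrete matrix identity, and once the natural block structures are written down the verification becomes a one-line calculation.

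First I would decompose the original check matrix according to the partition $[n] = \mb{S} \cup ([n]\backslash \mb{S})$. Under the assumption $\mb{S} = [|\mb{S}|]$, the matrix $H$ splits into the horizontal block form $H = (H|_{\mb{S}},\, H^{\mb{S}})$, where $H^{\mb{S}} = H|_{[n]\backslash \mb{S}}$ is the punctured check matrix. The subsequent augmentation with $H_0$ then produces the vertical block form
\begin{equation}
H^{\mb{S}, +H_0} = \begin{pmatrix} H^{\mb{S}} \\ H_0 \end{pmatrix}.
\end{equation}

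Next I would compute the two compositions explicitly. Multiplying $H$ on the left of $\gamma_1$, which injects the $n-|\mb{S}|$ surviving bits into $C_1$ as zero on the punctured coordinates $\mb{S}$, annihilates the $H|_{\mb{S}}$ block and leaves
\begin{equation}
H\gamma_1 = (H|_{\mb{S}},\, H^{\mb{S}}) \begin{pmatrix} 0_{|\mb{S}| \times (n-|\mb{S}|)} \\ I_{n-|\mb{S}|} \end{pmatrix} = H^{\mb{S}}.
\end{equation}
On the other side, $\gamma_0$ projects onto the first $n-k$ components (corresponding to the original checks of $\mc{C}$) and discards the $m$ new rows coming from $H_0$, yielding
\begin{equation}
\gamma_0 H^{\mb{S}, +H_0} = (I_{n-k},\, 0_{(n-k)\times m}) \begin{pmatrix} H^{\mb{S}} \\ H_0 \end{pmatrix} = H^{\mb{S}}.
\end{equation}
Since both compositions equal $H^{\mb{S}}$, the diagram commutes and $\gamma = \{\gamma_1, \gamma_0\}$ is a valid chain map from $\mc{C}^{\prime}$ to $\mc{C}$.

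There is essentially no real obstacle here: the proof is a direct matrix calculation once the natural block decomposition is set up. The only mild subtlety is the assumption $\mb{S} = [|\mb{S}|]$, which is taken without loss of generality -- an arbitrary $\mb{S}$ can be brought into this form by permuting bit labels, and such a permutation intertwines the original and modified codes, so the general case reduces to the one treated above. The conceptual content of the proposition really lies in the choice of $\gamma_1$ and $\gamma_0$ as the natural inclusion of surviving bits and the natural projection onto the original checks, respectively; once those two maps are written down, commutativity is automatic and the chain map interpretation follows.
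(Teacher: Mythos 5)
Your proposal is correct and follows essentially the same route as the paper: decompose $H = (H|_{\mb{S}}, H^{\mb{S}})$, write $H^{\mb{S},+H_0}$ as the vertical stack of $H^{\mb{S}}$ and $H_0$, and check that both compositions reduce to $H^{\mb{S}}$. The remarks on the role of the inclusion/projection maps and on reducing a general $\mb{S}$ to $[|\mb{S}|]$ by permutation match the paper's surrounding discussion.
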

\begin{proof}
    Based on the definition of $H^{\mb{S}, +H_0}$, we have
    \begin{equation}
        H = (\alpha, H^{\mb{S}}), \quad H^{\mb{S}, +H_0} =
        \left(\begin{array}{c}
         H^{\mb{S}} \\
         H_0
        \end{array}\right).
    \end{equation}
    for some $\alpha \in \mbb{F}_2^{(n-k)\times |\mb{S}|}$. Then, according to Eq.~\eqref{eq:puncturing_maps}, 
    \begin{equation}
        H \gamma_1 = H^{\mb{S}} = \gamma_0 H^{\mb{S}, +H_0}.
    \end{equation}
\end{proof}

% \begin{corollary}[Inverse puncturing-augmenting homomorphism]
% Let $\mc{C}: C_1 \xrightarrow{H} C_0$ be a $[n,k,d]$ classical code, $\mb{S} \subseteq [n]$ a subset of bit indices, and $H_0 \in \mbb{F}_2^{m\times (n - |\mb{S}|)}$ a set of extra checks. Without loss of generality, assume that $\mb{S} = [|\mb{S}|]$. Let $\mc{C}^{\prime}: C_1^{\prime} \xrightarrow{H^{\mb{S, +H_0}}} C_0^{\prime}$ be another classical code obtained by puncturing $\mc{C}$ on $\mb{S}$ and then augmented with $H_0$. The the linear map $\gamma = \{\gamma_1: C_1^{\prime} \rightarrow C_1, \gamma_0: C_0^{\prime} \rightarrow C_0\}$, where
% \begin{equation}
%     \gamma_1 = \left(\begin{array}{c}
%          0_{|\mb{S}|\times (n - |\mb{S}|)}\\
%          I_{n - |\mb{S}|}
%     \end{array}\right),
%     \quad \gamma_0 = \left(I_{n - k}, 0_{(n-k)\times m}\right),
%     \label{eq:puncturing_maps}
% \end{equation}
% is a homomorphism from $\mc{C}^{\prime}$ to $\mc{C}$, i.e.
% \begin{equation}
%     \begin{aligned}
% & C_1 \xrightarrow{H} & C_0\\
% & \big\uparrow \gamma_1 & \big\uparrow \gamma_0\\
% & C_1^{\prime} \xrightarrow{H^{\mb{S}, +H_0},} & C_0^{\prime}\\
% \end{aligned}
% \end{equation}
% is commutative.
    
% \end{corollary}

Note that in Proposition~\ref{prop:puncturing_augmenting_homo}, we have assumed that $\mb{S}$ are the first $|\mb{S}|$ bits for simplicity. In general, for any $\mb{S} \subseteq [n]$,  the bits $[n-|\mb{S}|]$ of $\mc{C}^{\prime}$ are identified with the bits $[n]\backslash \mb{S}$ of $\mc{C}$. Then, $\gamma_1$ can be defined as the inclusion map from $[n - |\mb{S}|]$ to $[n]$ with such an identification. 
% Algebraically, this means $\gamma_1 = P\left(\begin{array}{c}
%          0_{|\mb{S}|\times (n - |\mb{S}|)}\\
%          I_{n - |\mb{S}|}
%     \end{array}\right)$ for some $n\times n$ permutation matrix that permutes $\mb{S}$ to the first $|\mb{S}|$ bits. 

If a homomorphism in Proposition~\ref{prop:puncturing_augmenting_homo} only involves puncturing (augmenting) of the classical code, we refer to it as a puncturing (augmenting) homomorphism.

% \begin{corollary}
    
% \end{corollary}
% \hzc{I think this could be a good place to briefly introduce the permutation homomorphism? One could even put the main statement here and leave details for appendix.}

\subsection{Homomorphic measurement gadget for generic homological product codes \label{sec:general_homo_gadget}}
In this section, we sketch the homomorphic measurement gadget that can perform selected and parallel PPMs on a generic homological product code.
% The detailed construction for each family of homological product codes will be presented in Sec.~\ref{sec:HGP_codes} and Sec.~\ref{sec:high_dimensional_codes}.

To perform a selected set of PPMs on a data code $\mc{Q}$, we first prepare a carefully designed ancilla code $\mc{Q}^{\prime}$, which is another homological product code of the same dimension. According to Def.~\ref{def:homomorphic_CNOT}, we can construct a homomorphic CNOT gadget from $\mc{Q}$ to $\mc{Q}^{\prime}$ if we can find a homomorphism between $\mc{Q}$ and $\mc{Q}^{\prime}$. According to Proposition~\ref{prop:homological_prod_code_homo_by_classical}, such a homomorphism can be constructed by essentially taking the tensor product of the homomorphisms between the base classical codes. 
We construct such classical-code homomorphisms by using the puncturing-augmenting homomorphisms presented in Proposition~\ref{prop:puncturing_augmenting_homo}. Under such a construction, the base codes of $\mc{Q}^{\prime}$ are constructed by puncturing and/or augmenting the base codes of $\mc{Q}$, which can remove some classical codewords. As we will show in Sec.~\ref{sec:HGP_codes} and Sec.~\ref{sec:high_dimensional_codes}, the logical operators of $\mc{Q}$ and $\mc{Q}^{\prime}$ are essentially given by distributing their base classical codewords to different rows/columns. As such, we can obtain a smaller ancilla code $\mc{Q}^{\prime}$ that encodes fewer logical qubits. Then, using the generalized Steane measurement gadget (see Fig.~\ref{fig:homomorphic_measurement}(b)), we can achieve selected measurements on a subset of logical qubits of $\mc{Q}$ using $\mc{Q}^{\prime}$.

For example, as shown in Fig.~\ref{fig:homomorphic_measurement}(a,b), by puncturing the classical codes on selected subsets of bits, we remove a subset $\mb{Q_0}$ of the qubits $\mb{Q}$ of $\mc{Q}$ when constructing $\mc{Q}^{\prime}$. Let $\mb{\gamma} = \{\gamma_2, \gamma_1, \gamma_0\}: \mc{Q}^{\prime} \rightarrow \mc{Q}$ be the homomorphism induced by the classical puncturing-augmenting homomorphisms (see Proposition~\ref{prop:puncturing_augmenting_homo}). The qubits $\mb{Q}^{\prime}$ are identified with $\mb{Q}\backslash \mb{Q_0}$ under $\gamma_1$, i.e. $\gamma_1(Q^{\prime}_i) = Q_{i}$ for $i \in [|\mb{Q}\backslash \mb{Q_0}|]$, where we have labeled the qubits such the $\mb{Q_0}$ are the last $|\mb{Q_0}|$ qubits of $\mc{Q}$. Deleting $\mb{Q_0}$ also deletes a subset of logical qubits in $\mc{Q}$ when constructing $\mc{Q}^{\prime}$. Specifically, let $\mb{\bar{L}}$ be the set of nontrivial logical operators of $\mc{Q}$. We can choose a representation of the logical qubits $\mb{\bar{Q}} = \{\bar{Q}_i\}_{i=1}^{k}$, such that the logical $X$ and $Z$ operators $(\bar{X}_i, \bar{Z}_i)$ ($\bar{X}_i, \bar{Z}_i \in \mb{\bar{L}}$) of $\bar{Q}_i$ form conjugating pairs, i.e. $[\bar{X}_i, \bar{Z}_j] = 0$ for $i \neq j$ and $\{\bar{X}_i, \bar{Z}_i\} = 0$. We find a similar basis for the logical qubits $\mb{\bar{Q}^{\prime}}$ of $\mc{Q}^{\prime}$ with a logical operator basis $\{(\bar{X}^{\prime}_i, \bar{Z}^{\prime}_i)\}_{i = 1}^{k - |\mb{\bar{Q}_0}|}$, where $\mb{\bar{Q}_0} \subseteq \mb{\bar{Q}}$ denotes the set of logical qubits that are removed. The logical operators of $\mb{\bar{Q}^{\prime}}$ are identified with those of $\mb{\bar{Q}}\backslash \mb{\bar{Q}_0}$ under $\gamma_1$, i.e. 
\begin{equation}
    \gamma_1(\bar{X}^{\prime}_i) = \bar{X}_i, \quad \gamma_1(\bar{Z}^{\prime}_i) = \bar{Z}_i, 
\end{equation}
for $i \in [k - |\mb{\bar{Q}_0}|]$. Note that, again, we have labeled the logical qubits of $\mc{Q}$ such that $\mb{\bar{Q}_0}$ are the last $|\mb{\bar{Q}_0}|$ logical qubits. With a slight abuse of notation, we write $\gamma_1(\bar{Q}^{\prime}_i) = \bar{Q}_{i}$.

The homomorphic CNOT gate (specified by $\gamma_1$) between $\mc{Q}$ and $\mc{Q}^{\prime}$ is implemented by physical transversal $\mc{Q}$-controlled CNOTs between $\mb{Q}\backslash \mb{Q}_0$ and $\mb{Q}^{\prime}$. This amounts to applying logical transversal $\mc{Q}$-controlled CNOTs between $\mb{\bar{Q}}\backslash \mb{\bar{Q}_0}$ and $\mb{\bar{Q}^{\prime}}$. Specifically, the physical CNOTs transform the logical operators of the two codes as follows:
\begin{equation}
\begin{aligned}
    (\bar{X}_{i}, \bar{Z}_{i}) & \rightarrow (\bar{X}_{i} \bar{X}^{\prime}_i, \bar{Z}_{i}), \\
    (\bar{X}^{\prime}_i, \bar{Z}^{\prime}_i) & \rightarrow (\bar{X}^{\prime}_i, \bar{Z}_{i} \bar{Z}^{\prime}_i), \\
\end{aligned}
\end{equation}
for any $i \in [k - |\mb{\bar{Q}_0}|]$, and 
\begin{equation}
    (\bar{X}_i, \bar{Z}_i) \rightarrow (\bar{X}_i, \bar{Z}_i),
\end{equation}
for any $i \in \{k - |\mb{\bar{Q}_0}| + 1 \rightarrow k\}$. Then, using the generalized Steane measurement circuit shown in Fig.~\ref{fig:homomorphic_measurement}(b), we can perform a parallel, non-destructive measurement of the logical $Z$ operators of $\mb{\bar{Q}}\backslash \mb{\bar{Q}_0}$ using the ancilla $\mb{\bar{Q}^{\prime}}$. 

As we will show later, we can also measure certain products of logical $Z$ operators of $\mb{\bar{Q}}$ by also augmenting new checks to the base codes of $\mc{Q}$ when constructing $\mc{Q}^{\prime}$. 

Note that the above homomorphic measurement gadget can be both parallel and selective since, depending on the puncturing and augmenting pattern we perform on the base codes of $\mc{Q}$, we can obtain different ancilla codes with different logical qubits, and thereby different patterns of PPMs on $\mc{Q}$. The number of parallel PPMs being performed equals the number of logical qubits in $\mc{Q}^{\prime}$.

% \todo{merge the mathematical part of the HGP gadget into this section}
\subsection{Homomorphic measurement gadget for HGP codes \label{sec:homomorphic_measurements_HGP_technical}}
Following the general description in Sec.~\ref{sec:general_homo_gadget} for generic homological product codes, we concretely construct, as an example, homomorphic measurement gadgets for the HGP codes.

We first identify a canonical logical operator basis that defines a canonical set of logical qubits for a generic HGP code. In such a basis, the logical qubits form a 2D grid and, as we will show later, such a basis facilitates the construction of our homomorphic measurement gadget. 

\subsubsection{Hypergraph product codes and their canonical logical operator basis \label{sec:HGP_canonical_basis}}
% \begin{figure}[h!]
%     \centering
%     \includegraphics[width=0.5\textwidth]{Figures/HGP_Chain_Complex.pdf}
%     \caption{Chain-complex of a HGP code}
%     \label{fig:HGP_complex}
% \end{figure}

Let $\mc{C}^1: C^1_1 \xrightarrow{\partial^1_1} C^1_0$ and $\mc{C}^2: C^2_1 \xrightarrow{\partial^2_1} C^2_0$ be two length-$1$ chain complexes. 
As shown in Eq~(\ref{eq:HGP_complex}),
\begin{equation}
    \begin{tikzcd}
    	{Q_0} & & {C^1_0\otimes C^2_0} & \\
    	Q_1 & C^1_0\otimes C^2_1 & & C^1_1\otimes C^2_0 \\
     Q_2 & & C^1_1\otimes C^2_1 & \\
    	\arrow["{H_X}", from=2-1, to=1-1]
            \arrow["{H_Z^T}", from=3-1, to=2-1]
            \arrow["{I\otimes \partial^2_1}", from=2-2, to=1-3]
    	\arrow["{\partial^1_1\otimes I}", from=2-4, to=1-3]
     \arrow["{\partial^1_1\otimes I}", from=3-3, to=2-2]
     \arrow["{I \otimes \partial^2_1}", from=3-3, to=2-4]
    \end{tikzcd}
    \label{eq:HGP_complex}
\end{equation}
we can construct a 2D homological product code $\mc{Q}: Q_2\xrightarrow{H_Z^T} Q_1 \xrightarrow{H_X} Q_0$, also called a HGP code, as the total complex of the tensor product of $\mc{C}^1$ and $\mc{C}^2$.

Specifically, we have: $Q_2 = C^1_1\otimes C^2_1$, whose basis are associated with the $Z$ checks $\mb{S_Z}$; $Q_1 = (C^1_0\otimes C^2_1) \oplus (C^1_1\otimes C^2_0)$, whose basis are associated with the qubits $\mb{Q}$; $Q_0 = C^1_0\otimes C^2_0$, whose basis are associated with the $X$ checks $\mb{S_X}$. 

We assign $\mc{C}^2$ with a $[n_2, k_2, d_2]$ classical code with a check matrix $H_2 \in \mbb{F}_2^{(n_2 - k_2)\times n_2}$ in the standard way. The basis of $C^2_1$ and $C^2_0$ are associated with the bits $\mb{B_2}$ and the checks $\mb{C_2}$ of the code, respectively, and $\partial^2_1 = H_2$. However, to be consistent with the literature, we assign $\mc{C}^1$ with the \emph{transpose} of another $[n_1, k_1, d_1]$ classical code with a check matrix $H_1 \in \mbb{F}_2^{(n_1 - k_1)\times n_1}$. In this case, the basis of $C^1_1$ and $C^1_0$ are associated with the checks $\mb{C_1}$ and the bits $\mb{B}_1$, respectively, and $\partial^1_1 = H_1^T$.

Under such an assignment, we have a $[[n=n_1 n_2 + (n_1 - k_1)(n_2 - k_2), k = k_1 k_2, d = \mr{min}\{d_1, d_2\}]]$ HGP code with check matrices
\begin{equation}
\begin{aligned}
    H_X & = (I_{n_1}\otimes H_2, H^T_1 \otimes I_{n_2 - k_2}), \\
    H_Z & = (H_1\otimes I_{n_2}, I_{n_1 - k_1} \otimes H^T_2).
\end{aligned} 
\end{equation}
Note that we have assumed that both $H_1$ and $H_2$ are full rank. 

Now, we arrange the quantum bits and checks of $\mc{Q}$ on a 2D grid and label them with their coordinates. As shown in Fig.~\ref{fig:homomorphic_measurement}(a), by laying the bits/checks of the first and the second classical code vertically and horizontally, respectively, we have the following identification of the quantum bits and checks with the Cartesian product of the classical bits and checks:
\begin{equation}
\begin{split}
    \mb{Q} & = \mb{B_1}\times \mb{B_2}\cup \mb{C_1}\times \mb{C_2} \\
    & \simeq [n_1]\times[n_2] \cup \{n_1 + 1 \rightarrow n_1 + r_1\}\times \{n_2 + 1 \rightarrow n_2 + r_2\}, \\
    \mb{S_Z} &= \mb{C_1}\times \mb{B_2} \simeq \{n_1 + 1 \rightarrow n_1 + r_1\}\times [n_2], \\
    \mb{S_X} &= \mb{B_1}\times \mb{C_2} \simeq  [n_1]\times \{n_2 + 1 \rightarrow n_2 + r_2\}, \\
\end{split}
\label{eq:coordinate_system}
\end{equation}
where we denote $r_1 := n_1 - k_1$ and $r_2 := n_2 - k_2$ as the number of (independent) checks of $H_1$ and $H_2$, respectively. With these coordinates, each basis element in $Q_2$, $Q_1$, or $Q_0$ is associated with the corresponding object ($Z$ check, qubit, or $X$ check) placed at their assigned coordinate. For instance, the basis element $\vec{e}_i\otimes \vec{e}_j$ in $Q_1$ is associated with a qubit at the coordinate $(i,j)$, which we denote as $Q_{i,j}$.

A complete set of logical $X$ and $Z$ operators of $\mc{Q}$ are given by~\cite{quintavalle2022reshape, quintavalle2022partitioning}
\begin{equation}
\begin{aligned}
    \mb{\bar{X}}&  = \left\{
    \left(\begin{array}{c}
         f\otimes h\\
         0 \\
    \end{array} \right) 
    \mid 
    f \in \ker{H_1}, h \in \mr{rs}({H_2})^{\bullet}
    \right \}, \\
    \mb{\bar{Z}} &  = \left\{
    \left(\begin{array}{c}
         h^{\prime} \otimes f^{\prime}\\
         0 \\
    \end{array} \right) 
    \mid 
    h^{\prime} \in \mr{rs}({H_1})^{\bullet}, f^{\prime} \in \ker{H_2} 
    \right \},
\end{aligned}
\label{eq:HGP_logicals}
\end{equation}
where $\mr{rs}({H_{\alpha}})^{\bullet}$ denotes the complementary space~\footnote{the space $V$ that satisfies $V\cap \rm{rs}(H_{\alpha}) = 0$ and $\mbb{F}_2^{n_{\alpha}} = V \oplus \rm{rs}(H_{\alpha})$} of $\mr{rs}({H_{\alpha}})$ in $\mathbb{F}_2^{n_{\alpha}}$ for $\alpha = 1, 2$. 

We now find a canonical basis for the logical operators in Eq.~\eqref{eq:HGP_logicals} such that they form conjugate pairs and they admit a canonical form analogous to that of a classical code in Eq.~\eqref{eq:classical_canonical_generator_mat}. In addition, the logical qubits in such a canonical basis will be arranged on a 2D grid.
Without loss of generality, we assume that each $H_{\alpha}$ is row-reduced to their canonical form,
\begin{equation}
    H_{\alpha,c} = (h^{\alpha}_1, h^{\alpha}_2, \cdots, h^{\alpha}_{k_{\alpha}}, I_{n_{\alpha} - k_{\alpha}}),
\end{equation}
via elementary row operations, where $h^{\alpha}_{j} \in \mathbb{F}_2^{n_{\alpha} - k_{\alpha}}$. Note that this is always the case if we permute the information bits of each classical code to the first few bits (see Sec.~\ref{sec:puncturing_shortening}). Using the canonical form of $H_{\alpha}$, we can derive their generator matrices whose rows span $\ker{H_{\alpha}}$:
\begin{equation}
        G_{\alpha} = \left( I_{k_{\alpha}}, 
        \begin{array}{c}
             h_1^{\alpha T}\\
             h_2^{\alpha T}\\
             \cdots\\
             h_{k_{\alpha}}^{\alpha T}
        \end{array}
        \right).
\end{equation}
Let $e_j^n$ denote a unit $n$-dimensional column vector with the $j$-th entry being $1$, and let $b^{\alpha}_j := \left( \begin{array}{c}
e_j^{k_{\alpha}} \\
h_j^{\alpha}
\end{array} \right)$. Then we have 
\begin{equation}
\begin{aligned}
     \ker{H_{\alpha}} & = \mr{rs}(G_{\alpha}) = \mr{span} \{b^{\alpha}_j\}_{j \in [k_{\alpha}]}, \\
     \mr{rs}({H_{\alpha}})^{\bullet} & = \mr{rs}({H_{\alpha, c}})^{\bullet} = \mr{span} \{e^{n_{\alpha}}_j\}_{j \in [k_{\alpha}]}
\end{aligned}
\label{eq:ker_im}
\end{equation}

Substituing Eq.~\eqref{eq:ker_im} into Eq.~\eqref{eq:HGP_logicals}, we can find a canonical basis for $\mb{\bar{X}}$ and $\mb{\bar{Z}}$ forming conjugating pairs $\{(\bar{X}_{i,j}, \bar{Z}_{i,j})\}_{i \in [k_1], j \in [k_2]}$, where:
\begin{equation}
    \bar{X}_{i,j} = \left(\begin{array}{c}
         b^1_i \otimes e^{n_2}_j \\
         0 \\
    \end{array} \right), \quad 
    \bar{Z}_{i,j} = \left(\begin{array}{c}
         e^{n_1}_i \otimes b^2_j \\
         0 \\
    \end{array} \right).
    \label{eq:HGP_canonical_logicals}
\end{equation}
It is easy to verify the commutation relation for the paired logical operators in Eq.~\eqref{eq:HGP_canonical_logicals} since
\begin{equation}
    \bar{X}_{i,j}^T \bar{Z}_{i,j} = (b^{1 T}_i e^{n_1}_{i^{\prime}})(e^{n_2 T}_{j} b^2_{j^{\prime}}) = \delta_{i,i^{\prime}} \delta_{j,j^{\prime}} \mod 2.
\end{equation}

A key feature for such a canonical basis in Eq.~\eqref{eq:HGP_canonical_logicals} is that each pair of logical operators $(\bar{X}_{i,j}, \bar{Z}_{i,j})$ are supported on the $j$-th column and $i$-th row of the qubit grid, respectively, and they overlap at exactly one physical qubit with coordinate $(i,j)$ on the $[k_1]\times [k_2]$ subgrid. See Eq.~\eqref{eq:coordinate_system} for the coordinate system and Fig.~\ref{fig:homomorphic_measurement}(a) for an example of $\bar{Q}_{1,1}$ and $\bar{Q}_{2,3}$. As such, we can think of the set of logical qubits $\mb{\bar{Q}} = \{\bar{Q}_{i,j}\}_{i \in [k_1], j \in [k_2]}$ as arranged on a $[k_1] \times [k_2]$ 2D grid, which are in one-to-one correspondence with the $[k_1]\times  [k_2]$ subgrid of physical qubits (see the upper left physical block in Fig.~\ref{fig:homomorphic_measurement}(a)).

\subsubsection{A basic homomorphic measurement gadget \label{sec:horizontal_PPMs}}

Let $\mc{Q} = \mr{HGP}(H_1, H_2)$ be a HGP code with a $[n_1, k_1, d_1]$ base vertical code and a $[n_2, k_2, d_2]$ base horizontal code. Recall that in the canonical logical operator basis (see Eq.~\eqref{eq:HGP_canonical_logicals}), the logical qubits of $\mc{Q}$ are arranged on a $[k_1]\times [k_2]$ grid, i.e. $\mb{\bar{Q}} = \{\bar{Q}_{i,j}\}_{i \in [k_1], j \in [k_2]}$. Let $\mb{\mc{E}_h} = \{\mb{e_i}\}$ be a collection of disjoint horizontal hyperedges, i.e., $\mb{e_i} \subseteq [k_2]$ and $\mb{e_i}\cap \mb{e_j} = \emptyset$ for $i \neq j$. We present a gadget for measuring $k_1\times |\mb{\mc{E}_h}|$ logical Pauli products, which are horizontal Pauli products specified by $\mb{\mc{E}_h}$ across all the $k_1$ rows: $\bigcup_{i \in [k_1]} \{\bigotimes_{j \in \mb{e}}\bar{Z}_{i,j}\}_{\mb{e} \in \mb{\mc{E}_h}}$, in parallel. 

% We first introduce a few useful notations. Let $\mb{Q}$ be some set of qubits and $\mb{O}$ be some set of coordinates. We say $\mb{Q} \simeq \mb{O}$ if $\mb{Q}$ are assigned with coordinates $\mb{O}$. Let $\mb{O}_0 \subseteq \mb{O}$. We refer to $\mb{Q}|_{\mb{O_0}}$ as the subset of qubits with coordinates $\mb{O}_0$. For two sets of qubits $\mb{Q}$ and $\mb{Q}^{\prime}$ assigned with the same set of coordinates $\mb{O}$, we refer to transversal CNOTs between them as pairs of CNOTs on each coordinate, i.e. $\bigotimes_{(i,j) \in \mb{O}}\mr{CNOT}(Q_{i,j}, Q^{\prime}_{i,j})$.

To simplify the notation, let $H_{\mr{rep}}^{\mb{\mc{E}_h}}$ denote the check matrix of $|\mb{\mc{E}}_h|$ disjoint repetition codes, each supported on $\mb{e}_i \in \mb{\mc{E}_h}$, i.e., $H_{\mr{rep}}^{\mb{\mc{E}_h}}$ is the vertical concatenation of $\{H_i \in \mbb{F}_2^{(|\mb{e_i}| - 1)\times n_2}\}$, where the rows of $H_i$ are simply parities of any two bits of $\mb{e_i}$. For example, for $n_2 = 6$ and $\mb{\mc{E}_h} = \{\{1,2,3\}, \{4,5\}\}$, we have 
\begin{equation}
    H_{\mr{rep}}^{\mb{\mc{E}_h}} = \left(\begin{array}{cccccc}
     1 & 1 & 0 & 0 & 0 & 0 \\
     0 & 1 & 1 & 0 & 0 & 0 \\
     0 & 0 & 0 & 1 & 1 & 0 \\
\end{array}\right).
\end{equation}

We present the horizontal PPMs gadget in Alg.~\ref{alg:HGP_horizontal_PPMs}. The key is to construct an ancilla code by puncturing and augmenting the original horizontal classical code according to $\mc{E}_h$. Specifically, we puncture on all the bits not in $\mc{E}_h$ and augment repetition codes supported on each of the hyperedges in  $\mc{E}_h$. This removes all the columns of logical qubits with indices out of $\mc{E}_h$ and merges columns of logical qubits indexed by each of the hyperedges in $\mc{E}_h$, which guarantees that only Pauli products associated with $\mc{E}_h$ are measured using the merged logical qubits in the ancilla code. In Theorem~\ref{theorem:horizontal_PPMs_FT}, we prove that the gadget implements the desired logical measurements and is fault-tolerant. 

\begin{algorithm}[h!]
\caption{Horizontal PPMs for HGP code}\label{alg:HGP_horizontal_PPMs}
\Input{An HGP code $\mc{Q} = \mr{HGP}(H_1, H_2)$ with a $[n_1, k_1, d_1]$ base vertical code and a $[n_2, k_2, d_2]$ base horizontal code; A collection of disjoint horizontal hyperedges $\mb{\mc{E}_h} = \{\mb{e_i}\}$, where $\mb{e_i} \subseteq [k_2]$. }
\Output{A horizontal PPMs gadget.}
Let $\mb{S} = [k_2] \backslash (\bigcup_{\mb{e} \in \mb{\mc{E}_h}} \mb{e})$, and $H_{\mr{Rep}}^{\mb{\mc{E}_h}} \in \mbb{F}_2^{r\times n_2}$ the check matrix of the $|\mb{\mc{E}_h}|$ disjoint repetition codes, each supported on $\mb{e}_i \in \mb{\mc{E}_h}$. 
Let $H_2^{\prime} = H_2^{+H_{\mr{Rep}}^{\mb{\mc{E}_h}}, \mb{S}}$ be the check matrix obtained by first augmenting the checks in $H_{\mr{Rep}}^{\mb{\mc{E}_h}}$ and then puncturing on $\mb{S}$. 
Construct an ancilla code $\mc{Q}^{\prime} = \mr{HGP}(H_1, H_2^{\prime})$ with qubits $\mb{Q^{\prime}}$. \\
\tcp{Construct an ancilla code by performing appropriate puncturing and augmenting on the horizontal classical code}

Let $\mb{O}$ denote the coordinates of the qubits $\mb{Q}$ of $\mc{Q}$, $\mb{O}_0 = [n_1]\times \mb{S}$, and $\mb{O}_1 = \{n_1 + 1 \rightarrow n_1 + (n_1-k_1)\}\times \{n_2 + (n_2 - k_2) + 1 \rightarrow n_2 + (n_2 - k_2) + r\}$. We have $\mb{Q} \simeq \mb{O}$ and $\mb{Q^{\prime}} \simeq \mb{O}\backslash \mb{O}_0 \cup \mb{O}_1$. \\
\tcp{See Fig.~\ref{fig:HGP_Grid_PPMs}(b) for an illustration of the coordinates and the extra qubits outside the solid box in the middle panel for an illustration of $\mb{O}_1$.}

Prepare $\mb{Q}^{\prime}$ in $\ket{0}^{\otimes|\mb{Q}^{\prime}|}$ and measure the stabilizers of $\mc{Q}^{\prime}$ for $d = \min\{d_1, d_2\}$ rounds. \\

Apply transversal $\mc{Q}$-controlled CNOTs between $\mb{Q}|_{\mb{O}\backslash \mb{O_0}}$ and $\mb{Q^{\prime}}|_{\mb{O}\backslash \mb{O_0}}$. \\

Transversally measure $\mb{Q}^{\prime}$ in the $Z$ basis.\\
\tcp{Implement the generalized Steane measurement circuit in Fig.~\ref{fig:homomorphic_measurement}(b)}
\end{algorithm}

\begin{theorem}
The horizontal-PPMs gadget in Alg.~\ref{alg:HGP_horizontal_PPMs} performs logical measurements of $k_1\times |\mb{\mc{E}_h}|$ PPMs, $\bigcup_{i \in [k_1]} \{\bigotimes_{j \in \mb{e}}\bar{Z}_{i,j}\}_{\mb{e} \in \mb{\mc{E}_h}}$, of a HGP code in parallel and fault-tolerantly.
\label{theorem:horizontal_PPMs_FT}
\end{theorem}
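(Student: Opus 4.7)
The plan is to establish two facts: that the gadget measures exactly the prescribed Pauli products in parallel without disturbing the remaining data logicals (correctness), and that it tolerates errors of weight up to $\lfloor(d-1)/2\rfloor$ for $d=\min\{d_1,d_2\}$ (fault tolerance). The backbone of the correctness argument is an explicit chain-complex homomorphism $\mb{\gamma}\colon \mc{Q}' \to \mc{Q}$ whose induced logical action matches the intended measurements.

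First I would analyze $\mc{Q}' = \mr{HGP}(H_1, H_2')$ with $H_2' = H_2^{+H_{\mr{Rep}}^{\mb{\mc{E}_h}}, \mb{S}}$. Since $\mb{S} \subseteq [k_2]$ lies in the information bits of the canonical form of $H_2$, Sec.~\ref{sec:puncturing_shortening} guarantees that puncturing on $\mb{S}$ is distance-preserving, and augmenting can only increase distance. A short dimension count shows $k_2' = |\mb{\mc{E}_h}|$, with the natural codeword representative for each $\mb{e} \in \mb{\mc{E}_h}$ being the indicator $\sum_{j \in \mb{e}} e^{n_2}_j$, so $\mc{Q}'$ is a $[[n', k_1|\mb{\mc{E}_h}|, \min\{d_1,d_2\}]]$ HGP code whose canonical logical qubits form the grid $\{\bar{Q}'_{i,\mb{e}}\}_{i\in[k_1],\mb{e}\in\mb{\mc{E}_h}}$. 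Next I would invoke Proposition~\ref{prop:puncturing_augmenting_homo} to obtain the classical puncturing-augmenting homomorphism $\mb{\gamma}^2\colon \mc{C}^{2\prime}\to \mc{C}^2$, pair it with $\mb{\gamma}^1 = \mr{id}$ on $\mc{C}^1$, and apply Proposition~\ref{prop:homological_prod_code_homo_by_classical} to lift to $\mb{\gamma}\colon \mc{Q}'\to \mc{Q}$. By Def.~\ref{def:homomorphic_CNOT}, the transversal physical CNOTs prescribed by $\gamma_1$ --- exactly those performed in Alg.~\ref{alg:HGP_horizontal_PPMs} between $\mb{Q}|_{\mb{O}\setminus \mb{O}_0}$ and $\mb{Q}'|_{\mb{O}\setminus \mb{O}_0}$ --- constitute a valid homomorphic CNOT. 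The key identification is that $\gamma_1$ sends $\bar{Z}'_{i,\mb{e}}$ to $\bigotimes_{j\in\mb{e}}\bar{Z}_{i,j}$, which follows by substituting the indicator codeword $\sum_{j\in\mb{e}} e^{n_2}_j$ into formula~\eqref{eq:HGP_canonical_logicals} and using that $\gamma^2_1$ is the inclusion on the surviving bits. Standard Steane-circuit analysis then completes correctness: preparing $\mc{Q}'$ in $\ket{0}$ fixes $\bar{Z}'_{i,\mb{e}} = +1$, the transversal CNOTs propagate $\bar{Z}'_{i,\mb{e}} \mapsto \bar{Z}'_{i,\mb{e}} \cdot \bigotimes_{j\in\mb{e}}\bar{Z}_{i,j}$ while leaving each $\bar{Z}_{i,j}$ of $\mc{Q}$ invariant, so the final transversal $Z$-measurement of $\mc{Q}'$ reads out all $k_1|\mb{\mc{E}_h}|$ target products in parallel.

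Fault tolerance then follows from three ingredients: the distance bound $\min\{d_1,d_2\}$ established above, the $d$ rounds of stabilizer extraction during ancilla preparation (the standard fault-tolerant CSS state-preparation protocol), and the fact that both the transversal inter-block CNOTs and the final transversal $Z$ measurement are single-round operations that cannot propagate errors within any single code block, preserving the effective distance of the gadget. I expect the main obstacle to be the operator identification $\gamma_1(\bar{Z}'_{i,\mb{e}}) = \bigotimes_{j\in\mb{e}}\bar{Z}_{i,j}$: because $H_2'$ is not automatically in the canonical form of Sec.~\ref{sec:HGP_canonical_basis}, one must carefully track how the canonical-basis construction interacts with puncturing-then-augmenting and verify the identification holds \emph{on the nose}, not merely up to a nontrivial stabilizer correction that could alter the measured operator or leak information about unintended logicals.
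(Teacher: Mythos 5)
Your proposal follows essentially the same route as the paper's proof: construct $\mc{Q}'$ by augmenting then puncturing $H_2$, lift the classical puncturing-augmenting homomorphism of Proposition~\ref{prop:puncturing_augmenting_homo} through Proposition~\ref{prop:homological_prod_code_homo_by_classical}, identify $\gamma_1(\bar{Z}'_{i,\mb{e}})=\bigotimes_{j\in\mb{e}}\bar{Z}_{i,j}$ by tracking the generator matrix of $H_2'$ in the canonical basis, and conclude via the Steane circuit together with distance preservation under information-bit puncturing. One small correction: the kernel element of $H_2'$ attached to a hyperedge $\mb{e}$ is the merged canonical codeword $(\sum_{j\in\mb{e}}b^2_j)|_{[n_2]\setminus\mb{S}}$ rather than the indicator $\sum_{j\in\mb{e}}e^{n_2}_j$ (which is not generally in $\ker{H_2'}$), and it is this merged codeword that must be substituted into Eq.~\eqref{eq:HGP_canonical_logicals} to obtain the $Z$-logical identification --- the rest of your argument, including the subtlety you flag about verifying the identification on the nose, matches the paper's treatment.
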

\begin{proof}
Let $H_{2,c} = \left(h_1, h_2, \cdots, h_{k_2}, I_{n_2-k_2}\right)$ be the canonical form of $H_2$. The corresponding generator matrix is then $G_2 = \left( I_{k_{2}}, 
        \begin{array}{c}
             h_1^{T}\\
             h_2^{T}\\
             \cdots\\
             h_{k_{2}}^{T}
        \end{array}
        \right).$
Without loss of generality, assume that the hyperedges in $\mb{\mc{E}_h}$ are ordered, i.e. 
% $\mb{\mc{E}_h} = \{\{1 \rightarrow |\mb{e}_1|\}, \{|\mb{e_1}| + 1 \rightarrow |\mb{e}_1| + |\mb{e}_2|\}, \cdots, \{\sum_{i=1}^{t-1}|\mb{e_i}| + 1 \rightarrow \sum_{i=1}^t |\mb{e_i}|\}\}$. 
$\mb{\mc{E}_h} = \{\mb{e_i}\}_{i = 1}^t$, where $\mb{e_i} = \{\sum_{j = 1}^{i-1}|\mb{e_j}| + 1 \rightarrow \sum_{j = 1}^{i}|\mb{e_j}|\}$.
According to Def.~\ref{def:augmenting_oper}, the generator matrix of $H_2^{+H_{\mr{rep}}^{\mb{\mc{E}_h}}}$, $G_{2, -H_{\mr{rep}}^{\mb{\mc{E}_h}}}$, is given by removing the codewords in $G_2$ that do not satisfy the extra constraints imposed by the augmented checks $H_{\mr{rep}}^{\mb{\mc{E}_h}}$. It is straightforward to find $G_{2, -H_{\mr{rep}}^{\mb{\mc{E}_h}}} \in \mbb{F}_2^{(t + |\mb{S}|)\times n_2}$ and
\begin{equation}
    G_{2, -H_{\mr{rep}}^{\mb{\mc{E}_h}}}[i,] = \sum_{j \in \mb{e_i}} G_2[j,],
\end{equation}
for $i \in [t]$, and
\begin{equation}
    G_{2, -H_{\mr{rep}}^{\mb{\mc{E}_h}}}[i,] = G_2[\sum_{j=1}^t |\mb{e}_j| + i,]
\end{equation}
for $i \in \{t + 1 \rightarrow t + |\mb{S}|\}$. Note that we denote $M[j,]$ as the $j$-th row of a matrix $M$. Namely, each of the first $t$ codewords of $G_{2, -H_{\mr{rep}}^{\mb{\mc{E}_h}}}$ are given by merging the codewords of $G_2$ indexed by $\mb{e}_i$ and the last $|\mb{S}|$ codewords of $G_2$ are preserved. Finally, according to Def.~\ref{def:puncturing_oper}, the generator matrix of $H_2^{\prime} := H_2^{+H_{\mr{rep}}^{\mb{\mc{E}_h}}, \mb{S}}$, $G_2^{\prime}$, is given by shortening $G_{2, -H_{\mr{rep}}^{\mb{\mc{E}_h}}}$ on $\mb{S}$, which simply corresponds to removing the last $|\mb{S}|$ codewords of $G_{2, -H_{\mr{rep}}^{\mb{\mc{E}_h}}}$ and deleting the columns indexed by $\mb{S}$, i.e., $G_2^{\prime} \in \mbb{F}_2^{t\times (n_2 - |\mb{S}|)}$ and
\begin{equation}
     G_2^{\prime}[i,] = \sum_{j \in \mb{e_i}} G_2[j,]|_{[n_2]\backslash \mb{S}}. \label{eq:generator_mat_new_code}
\end{equation}
Let $\ker{H_2} = \mr{span}\{b^2_i\}_{i \in [k_2]}$, where $b^2_i = G_2[i,]^T$, and $\mr{rs}(H_2)^{\bullet} = \mr{span}\{e_i^{n_2}\}_{i \in [k_2]}$. According to Eq.~\eqref{eq:generator_mat_new_code}, we have
\begin{equation}
    \ker{H_2^{\prime}} = \mr{rs}(G_2^{\prime}) = \mr{span}\{(\sum_{j \in \mb{e}_i}b^2_j)|_{[n_2]\backslash \mb{S}} \}_{i \in [t]}.
\end{equation}
In addition, one can derive that 
\begin{equation}
    \mr{rs}(H_2^{\prime})^{\bullet} = \rm{span}\{(e^{n_2}_{\sum_{j=1}^{i-1}|\mb{e}_j| + 1})|_{[n_2]\backslash \mb{S}}\}_{i \in [t]}.
\end{equation}
Then, according to the canonical logical operator basis in Eq.~\eqref{eq:HGP_canonical_logicals}, the logical operators of $\mc{Q}'$ are identified with those of $\mc{Q}$ via their shared coordinates:
\begin{equation}
\begin{split}
    \bar{X}^{\prime}_{i,j} & \simeq \bar{X}_{i,\sum_{k=1}^{j-1}|\mb{e}_j| + 1},\\
    \bar{Z}^{\prime}_{i,j} & \simeq \bigotimes_{k \in \mb{e_j}} \bar{Z}_{i, k}
\end{split}
\end{equation}

According to Proposition~\ref{prop:homological_prod_code_homo_by_classical} and~\ref{prop:puncturing_augmenting_homo}, we can construct a homomorphism $\gamma = \{\gamma_2, \gamma_1, \gamma_0\}: \mc{Q}^{\prime} \rightarrow \mc{Q}$ induced by the classical puncturing-augmenting homomorphism from $H^{\prime}$ to $H$, where $\gamma_1$ identifies qubits with the same coordinates, i.e.  $\gamma_1: \mb{Q}^{\prime}|_{\mb{O}\backslash \mb{O}_0} \rightarrow \mb{Q}|_{\mb{O}\backslash \mb{O}_0}$. Finally, the logical action associated with the transversal CNOTs specified by $\gamma_1$ can be calculated:
\begin{equation}
    \bar{Z}^{\prime}_{i,j} \rightarrow \bar{Z}^{\prime}_{i,j}\bigotimes_{k \in \mb{e_j}} \bar{Z}_{i, k},
    \label{eq:logical_action_Z}
\end{equation}
for any $(i,j) \in [k_1]\times [t]$, and 
\begin{equation}
    \bar{X}_{i,k} \rightarrow \bar{X}_{i,k} \bar{X}^{\prime}_{i,j},
    \label{eq:logical_action_X}
\end{equation}
up to some stabilizers, for any $(i,j) \in [k_1]\times [t]$ and $k \in \mb{e_j}$. Other logical operators of $\mc{Q}^{\prime}$ and $\mc{Q}$ remain unchanged.

Eq.~\eqref{eq:logical_action_Z} and Eq.~\eqref{eq:logical_action_X} indicate that the constructed logical gate implements the logical CNOTs between each logical qubit $\bar{Q}^{\prime}_{i,j}$  and $\{\bar{Q}_{i,k}\}_{k \in \mb{e}_j}$. Therefore, the Steane measurement circuit utilizing such logical CNOTs implements the logical PPMs $\bigcup_{i \in [k_1]} \{\bigotimes_{j \in \mb{e}}\bar{Z}_{i,j}\}_{\mb{e} \in \mb{\mc{E}_h}}$ in parallel.

Finally, we show that the entire protocol is fault-tolerant. Since we only puncture $H_2$ on the information bits, the distance of $H_2^{\prime}$ is preserved, i.e. $d_2^{\prime} \geq d_2$ (see Sec.~\ref{sec:puncturing_shortening}). Since $H_2^{\prime}$ is still full-rank, the distance of $\mc{Q}^{\prime}$ satisfies $d^{\prime} = \min\{d_1, d_2^{\prime}\} \geq d = \mr{min}\{d_1, d_2\}$. Now, the logical Steane measurement circuit is clearly fault-tolerant as it uses a large-distance ancilla code and only involves fault-tolerant gadgets: computational basis state preparation with $d$ code cycles, transversal logical gates, and transversal readout. 
\end{proof}

\section{Parallelizable logical computation with hypergraph product codes \label{sec:HGP_codes}}
In this section, we present new schemes for realizing parallel logical computations using only HGP codes. The new schemes leverage the transversal (homomorphic) CNOTs between two \emph{distinct} HGP codes, and the related selective homomorphic measurements, that we constructed in Sec.~\ref{sec:sketch_homomorphic_gadget}. Although this section aims to be self-contained and only quote the results in Sec.~\ref{sec:sketch_homomorphic_gadget} in a non-technical way, readers are encouraged to quickly go through Sec.~\ref{sec:sketch_homomorphic_gadget}, particularly Sec.~\ref{sec:homomorphic_measurements_HGP_technical}, to get familiar with the notation and high-level ideas.

In Sec.~\ref{sec:GPPMs}, we first put the coding-theoretical and physical-level results in Sec.~\ref{sec:sketch_homomorphic_gadget} together and obtain a new fault-tolerant logical gadget for a generic HGP code, that performs selective PPMs on any subgrid of the logical qubits in parallel. Then, in Sec.~\ref{sec:HGP_computation}, we consider logical-level computations with HGP codes by utilizing the new parallel PPMs gadget, combined with known logical transversal/fold transversal gates.

\begin{figure*}[hbt!]
    \centering
    \includegraphics[width=1\textwidth]{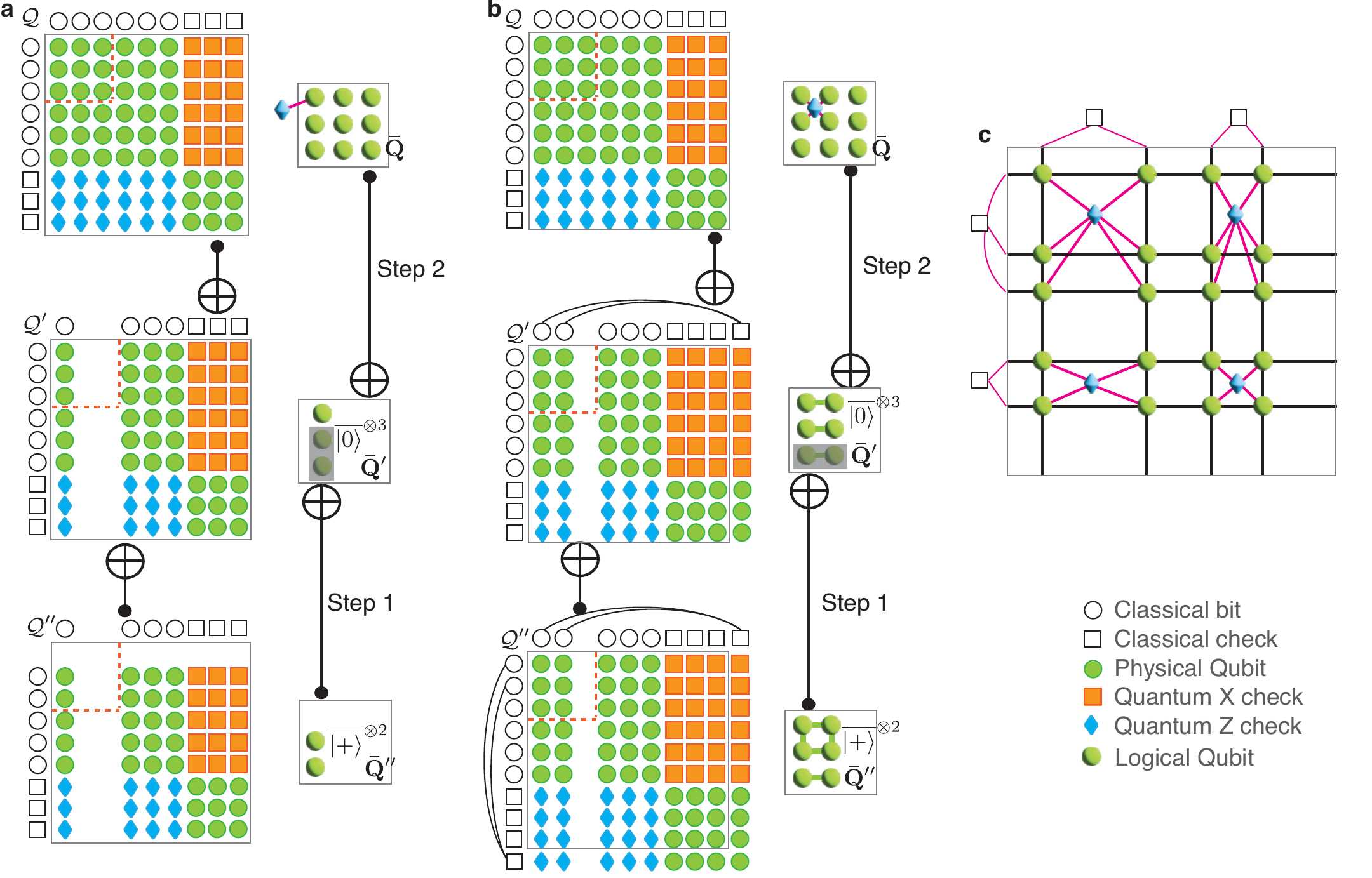}
    \caption{\textbf{Illustration of parallel Grid Pauli product measurements (GPPMs) for HGP codes.} (a) An example for measuring the single-qubit $Z$ operator of the top-left logical qubit in $\mb{\bar{Q}}$. 
    An ancilla code $\mc{Q}^{\prime}$ is constructed by puncturing on the second and the third bits of the horizontal base code of $\mc{Q}$; another mask code $\mc{Q}^{\prime \prime}$ is constructed by puncturing on the first bit of the vertical base code of $\mc{Q}^{\prime}$.
    In the first step, the bottom two logical qubits of $\mc{Q}^{\prime}$ are reset to $\overline{\ket{+}}$ using the mask code $\mc{Q}^{\prime \prime}$ and the $\mc{Q}^{\prime \prime}$-controlled homomorphic CNOT; in the second step, the desired $Z$ measurement on the top-left logical qubit of $\mc{Q}$ is obtained using $\mc{Q}^{\prime}$ and the $\mc{Q}$-controlled homomorphic CNOT. (b) Another example for measuring the weight-four $Z$ operator supported on the top-left four logical qubits in $\mb{\bar{Q}}$. An ancilla code $\mc{Q}^{\prime}$ is constructed by first puncturing on the third bit of the horizontal base code of $\mc{Q}$ and then augmenting a check that connects the first two bits; another mask code $\mc{Q}^{\prime \prime}$ is constructed by augmenting a check that connects the first two bits of the vertical base code of $\mc{Q}^{\prime}$. In the first step, the bottom logical qubit of $\mc{Q}^{\prime}$ is reset to $\overline{\ket{+}}$ while the top two logical qubits are prepared to a Bell state using $\mc{Q}^{\prime \prime}$ and the $\mc{Q}^{\prime \prime}$-controlled homomorphic CNOT; in the second step, the desired weight-four $Z$ operator on the top-left four logical qubits of $\mc{Q}$ is measured using the $\mc{Q}^{\prime}$ and the $\mc{Q}$-controlled homomorphic CNOT.  (c) Illustration of a general GPPMs gadget that measures a grid pattern of PPMs on a subgrid of the logical qubits. The PPMs (the 3D diamonds) are specified by the product of two collections of hyperedges, where each hyperedge (the empty squares) is a set of rows or columns. }
    \label{fig:HGP_Grid_PPMs}
\end{figure*}

\subsection{Grid Pauli product measurements for HGP codes \label{sec:GPPMs}}
Here, we present the construction of a homomorphic measurement gadget for HGP codes that measures Pauli products on any subgrid of the logical qubits selectively and in parallel. We refer to such a gadget as a ``Grid PPMs" (GPPMs) gadget (see Def.~\ref{def:GPPMs}) and will use this as an elementary gadget for executing logical computations with HGP codes later in this work. 

% As described in the general recipe in Sec.~\ref{sec:general_homo_gadget}, the key element is to construct a smaller ancilla code $\mc{Q}^{\prime}$ by puncturing/augmenting the classical codes of a data code $\mc{Q}$, which also comes along with a homomorphism $\mb{\gamma}: \mc{Q}^{\prime} \rightarrow \mc{Q}$. Proposition~\ref{prop:homological_prod_code_homo_by_classical} and ~\ref{prop:puncturing_augmenting_homo} ensure that we can always find a homomorphism for any puncturing/augmenting operations. However, generic puncturing/augmenting operations on the base codes of $\mc{Q}$ (1) do not guarantee that the distance of the ancilla code is preserved (2) do not produce ancilla codes controllably and do not lead to PPMs on $\mc{Q}$ controllably. Fortunately, we can overcome these challenges by working in the canonical logical qubit basis in Eq.~\eqref{eq:HGP_canonical_logicals} and performing specific sets of puncturing/augmenting operations for the classical codes.

% Let $\mb{O}$ be the coordinates of $\mb{Q}$, i.e. $\mb{Q} \simeq \mb{O}$, $\mb{S_2} = [k_2]\backslash [1]$ and $\mb{O}_0 = [n_1]\times \mb{S_2}$. Similarly, let $\mb{\bar{Q}} \simeq \mb{\bar{O}} = [k_1\times k_2]$ and $\mb{\bar{O}_0} = [k_1]\times \mb{S}_2$. Applying physical transversal CNOTs between $\mb{Q}|_{\mb{O}\backslash \mb{O_0}}$ and $\mb{Q^{\prime}}$  

The GPPMs gadget is essentially built upon the homomorphic measurement gadget we introduced in Sec.~\ref{sec:horizontal_PPMs} (see Alg.~\ref{alg:HGP_horizontal_PPMs}).
Let $\mc{Q}$ be an HGP code constructed by taking the hypergraph product of a vertical $[n_1, k_1, d_1]$-classical code $H_1$ and a $[n_2, k_2, d_2]$-horizontal classical code $H_2$.
The horizontal $Z$-PPMs gadget measures horizontal $Z$-type PPMs, $\bigcup_{i \in [k_1]} \{\bigotimes_{j \in \mb{e}}\bar{Z}_{i,j}\}_{\mb{e} \in \mb{\mc{E}_h}}$, specified by a set of horizontal hyperedges $\mb{\mc{E}_h}$ across all the rows (see the horizontal empty squares in Fig.~\ref{fig:HGP_Grid_PPMs}(c) for an illustration). 
The key of the gadget is to first construct an ancilla code $\mc{Q}^{\prime}$ by puncturing some bits of $H_2$, thereby removing any action on the corresponding qubits, and then augmenting new checks to it, thereby measuring Pauli products given by the checks.
We then implement the homomorphic $\mc{Q}$-controlled CNOT between $\mc{Q}$ and $\mc{Q}^{\prime}$ and measure $\mc{Q}^{\prime}$ to complete the PPM.

We can similarly construct a vertical PPMs gadget that measures vertical $X$-PPMs specified by a set of vertical hyperedges $\mb{\mc{E}_v}$ across all the columns. We can simply achieve this by performing the same puncturing-augmenting operation on the vertical code $H_1$ of $\mc{Q}$, instead of the horizontal code $H_2$. Note, however, that the quantum code homomorphism $\gamma = \{\gamma_2, \gamma_1, \gamma_0\}: \mc{Q} \rightarrow \mc{Q}^{\prime}$ (see Def.~\ref{def:homomorphic_CNOT} and Sec.~\ref{sec:horizontal_PPMs}) now reverses direction, since the base vertical complex $\mc{C}_1$ and $\mc{C}_1^{\prime}$ are associated with the \emph{transpose} of the corresponding vertical codes, and the puncturing-augmenting operation of $H_1$ actually induces a puncturing-augmenting homomorphism from $\mc{C}_1$ to $\mc{C}_1^{\prime}$. 
This is because if $H_1^{\prime}$ is obtained by puncturing (augmenting) $H_1$, then $H_1^T$ is obtained by augmenting (puncturing) $H_1^{\prime T}$. Consequently, the homomorphic CNOT between $\mc{Q}$ and $\mc{Q}^{\prime}$ now reverses direction and becomes $\mc{Q}^{\prime}$-controlled both physically and logically. As such, we should prepare the logical qubits of $\mc{Q}^{\prime}$ in the $X$ basis, apply $\mc{Q}^{\prime}$-controlled homomorphic CNOTs, and finally measure $\mc{Q}^{\prime}$ transversely in the $X$ basis. Then, this gadget implements the logical PPMs $\bigcup_{j \in [k_2]} \{\bigotimes_{i \in \mb{e}}\bar{X}_{i,j}\}_{\mb{e} \in \mb{\mc{E}_v}}$ in parallel.

In the following, we show how to construct a GPPMs gadget that measures PPMs on any subgrid of the logical qubits $[k_1]\times [k_2]$, which could contain PPMs across different columns and rows, by combining the horizontal $Z$-PPMs gadget and the vertical $X$-PPMs gadget.

As the simplest example, we show how to perform a $Z$ measurement for only the top-left $(1,1)$ logical qubit. As illustrated in Fig.~\ref{fig:HGP_Grid_PPMs}(a), by puncturing the horizontal code on all but the very first bit, we can simultaneously remove all but the first column of the logical qubits of $\mc{Q}$ when constructing the ancilla code $\mc{Q}^{\prime}$. Applying the horizontal PPMs gadget with horizontal hyperedges $\mb{\mc{E}_h} = \{\{1\}\}$ would then perform single-qubit $Z$ measurements on the first-column logical qubits of $\mc{Q}$. 

To only measure $\bar{Q}_{1,1}$,  we introduce another code $\mc{Q}^{\prime \prime}$, which we call the mask code, that is obtained by puncturing the vertical code of $\mc{Q}^{\prime}$ on the first bit. As shown in Fig.~\ref{fig:HGP_Grid_PPMs}(a), doing so removes the first logical qubit of $\mc{Q}^{\prime}$. We can then perform single-qubit $X$ measurements on all but the first logical qubit of $\mc{Q}^{\prime}$ with the mask code $\mc{Q}^{\prime \prime}$ by utilizing the vertical PPMs gadget with hyperedges $\mb{\mc{E}_v} = \{\{i\}\}_{i \in [k_1]\backslash [1]}$. 
% i.e. $\mb{Q^{\prime \prime}} \simeq \mb{Q^{\prime}}\backslash \mb{Q^{\prime}_0}$ and $\mb{\bar{Q}^{\prime \prime}} \simeq \mb{\bar{Q}^{\prime}}\backslash \bar{Q}^{\prime}_{1,1}$. Furthermore, according to Corollary~\ref{corollary:B-type_classical_homo} \todo{fix this}, applying transversal physical CNOTs from $\mb{Q^{\prime \prime}}\backslash \mb{Q^{\prime}_0}$ to $\mb{Q^{\prime \prime}}$ amounts to applying transversal logical CNOTs from $\mb{\bar{Q}^{\prime}}\backslash \bar{Q}^{\prime}_{1,1}$ to $\mb{\bar{Q}^{\prime \prime}}$.
% Since the base chain complexes, $\mc{C}_1^{\prime}$ and $\mc{C}_1^{\prime \prime}$, of $\mc{Q}^{\prime}$ and $\mc{Q}^{\prime}$ are assigned with the \emph{transpose} of their corresponding vertical codes, this operation induces an augmenting homomorphism from $\mc{C}_1^{\prime}$ to $\mc{C}_1^{\prime \prime}$, which then induces a quantum homomorphism $\gamma^{\prime}: \mc{Q}^{\prime} \rightarrow \mc{Q}^{\prime \prime}$. Since $\gamma^{\prime}$ now goes from $\mc{Q}^{\prime}$ to $\mc{Q}^{\prime \prime}$, the homomorphic measurement gadget between $\mc{Q}^{\prime}$ and $\mc{Q}^{\prime \prime}$ performs $\emph{$\mc{Q}^{\prime \prime}$-controlled}$ single-qubit $X$ measurement on all but the first logical qubit of $\mc{Q}^{\prime}$. 
Combining these elements together, we obtain a two-step protocol for measuring the $(1,1)$ logical qubit of $\mc{Q}$:
\begin{enumerate}
    \item Prepare all the logical qubits of $\mc{Q}^{\prime}$ and $\mc{Q}^{\prime \prime}$ in $\overline{\ket{0}}$ and $\overline{\ket{+}}$ states, respectively. Then perform vertical $X$-PPMs on $\mc{Q}^{\prime}$ with $\mb{\mc{E}_v} = \{\{i\}\}_{i \in [k_1]\backslash [1]}$ using $\mc{Q}^{\prime \prime}$, which resets all but the first logical qubit of $\mc{Q}^{\prime}$ into $\overline{\ket{+}}$ states.
    \item Perform a horizontal $Z$-PPMs on $\mc{Q}$ with $\mb{\mc{E}_h} = \{\{1\}\}$ using $\mc{Q}^{\prime}$. Note that only $\bar{Q}_{1,1}$ is meausred since the rest of the first-column logical qubits of $\mc{Q}$ are not entangled with the corresponding logical qubits in $\mc{Q}^{\prime}$, which are reset to $\overline{\ket{+}}$ by the first homomorphic measurement, via the homomorphic CNOT gate. 
\end{enumerate}

Obviously, the above scheme can be readily generalized to performing parallel single-qubit $Z$ measurements on a subgrid of the logical grid, i.e. logical qubits supported on the intersection of any subset of rows $\mb{\mr{Rows}} = \{r_i\}$ and columns $\mb{\mr{Cols}} = \{c_j\}$. This can be done by puncturing $H_2$ on $[k_2]\backslash \mb{\mr{Rows}}$ and puncturing $H_1^{\prime}$ on $\mb{\mr{Cols}}$ when constructing $\mc{Q}^{\prime}$ and $\mc{Q}^{\prime \prime}$, respectively. 

Moreover, we can generalize the above scheme to measure some pattern of Pauli \textit{product} measurements (PPMs) on a subgrid of the logical grid. As an example, we show how to measure a weight-$4$ Pauli product $\bar{Z}_{1,1}\bar{Z}_{1,2}\bar{Z}_{2,1}\bar{Z}_{2,2}$ supported on the intersection of the first two columns and rows in Fig.~\ref{fig:HGP_Grid_PPMs}(b). We first construct an ancilla patch $\mc{Q}^{\prime}$ by puncturing $H_2$ on all but the first two bits and then augmenting an extra check that checks the first two bits. Doing so removes all but the first two columns of logical qubits of $\mc{Q}$. In addition, the remaining two columns of logical qubits merge into one column by the augmented checks, i.e. the logical operators of each pair of logical qubits in each row are now equivalent up to some stabilizers in $\mc{Q}^{\prime}$. Now, applying the horizontal $Z$-PPMs gadget with $\mb{\mc{E}_h} = \{\{1,2\}\}$ between $\mc{Q}$ and $\mc{Q}^{\prime}$ would perform weight-$2$ $\bar{Z}_{i,1}\bar{Z}_{i,2}$ measurements across all the rows on $\mc{Q}$.
To obtain the desired weight-$4$ measurement, we again introduce another mask code $\mc{Q}^{\prime \prime}$. We construct $\mc{Q}^{\prime \prime}$ by augmenting an extra check to $H_1^{\prime}$ that checks the first two bits. This merges the first two logical qubits in $\mc{Q}^{\prime \prime}$. Then, applying the vertical $X$-PPMs gadget with $\mb{\mc{E}_v} = \{\{1,2\}, \{3\}, \{4\}, \cdots, \{k_1\}\}$ between $\mc{Q}^{\prime}$ and $\mc{Q}^{\prime \prime}$ performs a $\bar{X}^{\prime}_{1,1}\bar{X}^{\prime}_{2,1}$ measurement on the first two logical qubits of $\mc{Q}^{\prime}$ and single-qubit $X$ measurements on remaining logical qubits. This creates a Bell pair for the first two while masking (resetting to $\overline{\ket{+}}$ states) the rest of the logical qubits. Finally, the horizontal $Z$-PPMs gadget between $\mc{Q}$ and $\mc{Q}^{\prime}$ performs the desired $\bar{Z}_{1,1}\bar{Z}_{1,2}\bar{Z}_{2,1}\bar{Z}_{2,2}$ measurement. 

Finally, we can generalize the above protocol and perform the following pattern of PPMs in parallel on a HGP code:

\begin{definition}[Grid PPMs]
Given a canonical representation of logical qubits in an HGP code on a 2D grid $[k_1]\times[k_2]$, and two sets of disjoint hyperedges $\mb{\mc{E}_r}$ and $\mb{\mc{E}_c}$, where each hyperedge is a collection of rows or columns, respectively, then we define a pattern of Grid PPMs of a type $P \in \{X, Z\}$ as:
\begin{equation}
    \mr{GPPMs}(\mr{P}, \mb{\mc{E}_r}, \mb{\mc{E}_c}) := \{\bigotimes_{q \in \mb{e_r} \times \mb{e_c}} \bar{P}_q \}_{\mb{e_r} \in  \mb{\mc{E}_r}, \mb{e_c} \in  \mb{\mc{E}_c}},
%     \begin{cases}
%     \{\bigotimes_{q \in e_r \times e_c} Z_q & \mid e_r \in  \mb{\mc{E}_r}, e_c \in  \mb{\mc{E}_c}\}, \\
%     & \text{if } \mr{T} = Z \\
%     \{\bigotimes_{q \in e_r \times e_c} X_q & \mid e_r \in \mb{e}_r, e_c \in \mb{e}_c\}, \\
%     & \text{if } \mr{T} = X \\
% \end{cases}
\end{equation}
where $\bar{P}_q$ denotes the logical Pauli operator $\bar{X}$ ($\bar{Z}$) of the logical qubits $\bar{Q}_q$ with coordinates $q$ for $P = X$ ($Z$).
\label{def:GPPMs}
\end{definition}

For completeness, we provide the concrete protocol for implementing a $Z$-type GPPMs in Alg.~\ref{alg:HGP_grid_PPMs}. Note that step 2 of Alg.~\ref{alg:HGP_grid_PPMs} merges columns of logical qubits of $\mc{Q}^{\prime}$ according to $\mb{\mc{E}_c}$ and step 4 further prepares these merged logical qubits into vertical disjoint GHZ states according to $\mb{\mc{E}_r}$ across all the columns. Together, they make sure that the final step measures the desired PPMs on disjoint grids according to $\mb{\mc{E}_r}\times \mb{\mc{E}_c}$. Note that we can construct a GPPMs gadget measuring $X$-type PPMs using a two-step protocol similar to Alg.~\ref{alg:HGP_grid_PPMs}, except that we now construct the ancilla code and the mask code by puncturing-augmenting the vertical classical code and the horizontal classical code, respectively, and consequently, the homomorphic CNOTs, e.g. in Fig.~\ref{fig:HGP_Grid_PPMs}(a), all reverse direction.

\begin{algorithm}[h!]
\caption{Grid PPMs for HGP code}\label{alg:HGP_grid_PPMs}
\Input{An HGP code $\mc{Q} = \mr{HGP}(H_1, H_2)$; Two sets of hyperedges $\mb{\mc{E}_r}$ and $\mb{\mc{E}_c}$}
\Output{A Grid PPMs gadget $\mr{GPPMs}(Z, \mb{\mc{E}_r}, \mb{\mc{E}_c)}$.}
\tcp{Note that classical Pauli frame updates based on the measurement outcomes are neglected here for simplicity.}

Let $\mb{\mc{E}_h} = \mb{\mc{E}_c}$ and $\mb{S}_h = [k_2]\backslash (\bigcup_{\mb{e_h} \in \mb{\mc{E}_h}} \mb{e_h})$. Construct $\mb{\mc{E}_v}$ and $\mb{S}_v$ as follows:
For each $\mb{e_v} \in \mb{\mc{E}_r}$, if $|\mb{e_v}| = 1$, append $\mb{e_v}$ to $\mb{S}_v$; Otherwise ($|\mb{e_v}| \geq 2$), append the parities of their components, i.e. $\{\{\mb{e_v}[i], \mb{e_v}[i + 1]\}\}_{i \in [|\mb{e}_v| - 1]}$, to $\mb{\mc{E}_v}$. \\
\tcp{Set the checks ($\mb{\mc{E}_h}$ and $\mb{\mc{E}_v}$) to be augmented and the bits ($\mb{S}_h$ and $\mb{S}_v$) to be punctured for the horizontal and vertical codes, respectively.}

Construct an ancilla code $\mc{Q}^{\prime} = \mr{HGP}(H_1^{\prime}, H_2^{\prime})$ by augmenting checks in $H_{\mr{rep}}^{\mb{\mc{E}_h}}$ (with length $n_2$) to $H_2$ and then puncturing on $\mb{S}_h$. Construct another mask code $\mc{Q}^{\prime \prime} = \mr{HGP}(H_1^{\prime \prime}, H_2^{\prime \prime})$ by augmenting checks in  $H_{\mr{rep}}^{\mb{\mc{E}_v}}$ (with length $n_1$) to $H_1^{\prime}$ and then puncturing on $\mb{S}_v$.\\
\tcp{See Fig.~\ref{fig:HGP_Grid_PPMs} for illustrations.}

Prepare the logical qubits of $\mc{Q}^{\prime}$ and $\mc{Q}^{\prime \prime}$ transversely in $\overline{\ket{0}}$ and $\overline{\ket{+}}$ by preparing their physical qubits transversely in $\ket{0}$ and $\ket{+}$ states, respectively, and measuring $d$ rounds of stabilizers.\\

% Apply the vertical $X$-PPMs gadget with $\mb{\mc{E}_v} = \mb{\mc{E}_r}\cup ([k_1]\backslash \bigcup_{\mb{e_r} \in \mb{\mc{E}_r}} \mb{e_r})$ for $\mc{Q}^{\prime}$ using $\mc{Q}^{\prime \prime}$. \\
Apply physical transversal CNOTs from $\mc{Q}^{\prime \prime}$ to $\mc{Q}^{\prime}$ (applying pairs of CNOTs on qubits with the same coordinate, see Fig.~\ref{fig:HGP_Grid_PPMs} for examples) and measure $\mc{Q}^{\prime \prime}$ transversely in the $X$ basis. \\
\tcp{This resets a subset of the logical qubits of $\mc{Q}^{\prime}$ to $\overline{\ket{+}}$ and prepares a subset to GHZ states.}
% Apply the horizontal $Z$-PPMs gadget with $\mb{\mc{E}_h} = \mb{\mc{E}_c}$ for $\mc{Q}$ using $\mc{Q}^{\prime}$. 

Apply physical transversal CNOTs from $\mc{Q}$ to $\mc{Q}^{\prime}$ and measure $\mc{Q}^{\prime}$ transversely in the $Z$ basis.
\tcp{This finally measures the desired $Z$-PPMs on $\mc{Q}$ using $\mc{Q}^{\prime}$.}
\end{algorithm}

For a constant-rate HGP code encoding $k$ logical qubits, each pattern of GPPMs is implemented in parallel with a constant space overhead since the data, the ancilla, and the mask code are all of size $O(k)$. Meanwhile, it only takes $d + O(1)$ code cycles (only step 3 of Alg.~\ref{alg:HGP_grid_PPMs} takes $d$ cycles while other steps are of constant depth). Let a logical cycle refer to $d$ code cycles, then each GPPM can be implemented efficiently with $O(1)$ space overhead in $O(1)$ logical cycles.

Note that it is also possible to implement PPMs with other patterns, e.g. PPMs with a mixture of $X$ and $Z$ Paulis, if combining the homomorphic CNOT with other Clifford operations such as the fold-transversal H-SWAP gate in Table~\ref{tab:HGP_gadgets} during the Steane measurement. In this work, we focus only on these CSS-type GPPMs as elementary gadgets and execute logical computation by combining them with other transversal/fold-transversal gates, as will be discussed in the next section. 

% Each pattern of GPPMs can be implemented in parallel with $O(1)$ space overhead and $O(1)$ logical cycles. \todo{prove this}
\subsection{Logical computation using GPPMs with low space-time-overhead \label{sec:HGP_computation}}

In Theorem~\ref{theorem:parallel_Clifford_gates}, we show that the GPPMs gadget developed in this work (see Def.~\ref{def:GPPMs}), when combined with standard transversal operations as well as the fold-transversal gates developed in Ref.~\cite{quintavalle2022partitioning, breuckmann2022fold}, generate the full Clifford group on $k$ logical qubits encoded in a HGP code. All the HGP gadgets that we use in this work for logical computation are listed in Table~\ref{tab:HGP_gadgets}.

\setlength{\tabcolsep}{4pt} % Default value: 6pt
\renewcommand{\arraystretch}{1.5} % Default value: 1
\begin{table*}
    \centering
    \begin{tabular}{c|c|c|c}
    \hline
    \hline
         Logical gadgets &  Physical operation & Logical operation & Time \\
         \hline
        $\overline{\ket{0}}/\overline{\ket{+}}$ state preparation & $\ket{0}^{\otimes n}/\ket{+}^{\otimes n}$ $\rightarrow$ $d$ QEC cycles & $\overline{\ket{0}}^{\otimes k}/\overline{\ket{+}}^{\otimes k}$ & $O(d)$ \\
         \hline
       $Z/X$ basis measurements & $M_Z^{\otimes n}/M_X^{\otimes n}$ & $\overline{M}_Z^{\otimes k}/\overline{M}_X^{\otimes k}$ & $O(1)$\\
         \hline
         \textbf{Inter-block CNOTs} & $\bigotimes_{(i,j) \in \mb{O}\backslash \mb{O_0}} \mr{CNOT}(Q_{i,j}, Q^{\prime}_{i,j})$ & $\bigotimes_{(i,j) \in \mb{\bar{O}}\backslash \mb{\bar{O}_0}} \overline{\mr{CNOT}}(\bar{Q}_{i,j}, \bar{Q}^{\prime}_{i,j})$ & $O(1)$ \\
         \hline
          \textbf{GPPMs} & 
          \makecell{State preparation + Inter-block CNOTs \\ + transversal measurements} & $\{\bigotimes_{q \in \mb{e_r} \times \mb{e_c}} \overline{Z/X}_q \}_{\mb{e_r} \in  \mb{\mc{E}_r}, \mb{e_c} \in  \mb{\mc{E}_c}}$ & $O(d)$\\
         \hline
         H-SWAP & $\left(H^{\otimes n}\right)\left(\bigotimes_{(i,j) \in \mb{O}^+}\mr{SWAP}(Q_{i,j}, Q_{j,i})\right)$ & $(\bar{H}^{\otimes k})(\bigotimes_{(i,j) \in \mb{\bar{O}}^+}\mr{SWAP}(\bar{Q}_{i,j}, \bar{Q}_{j,i}))$ & $O(1)$ \\
         \hline
         
         CZ-S & 
         \makecell{$\left(\bigotimes_{i \in [n_1]} S(Q_{i,i}\right)\left(\bigotimes_{i \in \{n_1+1 \rightarrow 2n_1-k_1\}} S^{\dagger}(Q_{i,i}\right)$
         \\
         $\left(\bigotimes_{(i,j) \in \mb{O}^+} \mr{CZ}(Q_{i,j}, Q_{j,i})\right)$}
         & $\left(\bigotimes_{i \in [k_1]} \bar{S}(\bar{Q}_{i,i}\right)\left(\bigotimes_{(i,j) \in \mb{\bar{O}}^+}\overline{\mr{CZ}}(\bar{Q}_{i,j}, \bar{Q}_{j,i})\right)$ & $O(1)$ \\
         \hline
        \makecell{\textbf{Translation}} & Translations of blocks of qubits & $\bar{T}_{\alpha, \beta}, \quad (\alpha, \beta) \in \mb{\bar{O}}$ & $O(1)$ \\
    \hline
    \hline
    \end{tabular}
    \caption{\textbf{Summary of the HGP logical gadgets we utilize for logical computation.} All the gadgets, unless specially noted, are applied on a generic $[[n, k, d]]$ HGP code $\mc{Q}$ out of a vertical $[n_1, k_1, d_1]$-classical code $H_1$ and a horizontal $[n_2, k_2, d_2]$-classical code $H_2$ (see Fig.~\ref{fig:homomorphic_measurement}(a) for an illustration). The physical qubits are denoted $\{Q_{i,j}\}_{(i,j) \in \mb{O}}$ with coordinates $\mb{O} := [n_1]\times [n_2]\cup \{n_1 + 1, \cdots, 2n_1 - k_1\}\times \{n_2 + 1, \cdots, 2n_2 - k_2\}$, and the logical qubits are denoted $\{\bar{Q}_{i,j}\}_{(i,j) \in \mb{\bar{O}}}$ with coordinates $\mb{\bar{O}} := [k_1]\times [k_2]$ (see Fig.~\ref{fig:homomorphic_measurement}(a) and Eq.~(\ref{eq:coordinate_system}) for the coordinates of the physical and logical qubits). We denote $\mb{O}^+ := \{(i,j) \in \mb{O} \mid j > i\}$ as the upper blocks of $\mb{O}$, and similarly for $\mb{\bar{O}}^+$. The fold-transversal H-SWAP and CZ-S gates require the HGP to be symmetric, i.e. $H_1 = H_2$. The translation gadget, which translates the logical qubit along any direction under periodic boundary conditions, i.e. $\bar{T}_{\alpha, \beta}: \bar{Q}_{i,j} \rightarrow \bar{Q}_{(i+\alpha)\mod k_1, (j + \beta)\mod k_1}$, further requires that the base code is quasi-cyclic (see Appendix~\ref{sec:translation_gadget}). 
    The inter-block CNOTs are applied between $\mc{Q}$ and another HGP code $\mc{Q}^{\prime}$ that is obtained by puncturing/augmenting the base codes of $\mc{Q}$, during which a subset of physical qubits $\mb{O_0}$ and logical qubits $\mb{\bar{O}_0}$ of $\mc{Q}$ are removed (puncturing) and other checks added (augmenting, see Fig.~\ref{fig:homomorphic_measurement}(a)). Transversal physical CNOTs between pairs of qubits identified by $\mb{O}\backslash \mb{O_0}$ give transversal logical CNOTs between pairs of logical qubits identified by $\mb{\bar{O}}\backslash \mb{\bar{O}_0}$. In the case of $\mc{Q}^{\prime} \simeq \mc{Q}$ and $\mb{O_0} = \mb{\bar{O}_0} = \emptyset$, we recover the standard transversal logical CNOTs between two blocks of CSS codes. 
    All the gadgets can be implemented with a constant space overhead, i.e. using $O(k)$ physical qubits and the gadget times are listed in units of code cycles (number of QEC syndrome extraction rounds).
    The inter-block CNOTs (with $\mc{Q}^{\prime} \nsimeq \mc{Q}$) and the GPPMs (see Def.~\ref{def:GPPMs}) are introduced in this work, under the general framework of homomorphic gates and measurements~\cite{huang2022homomorphic}; The translation gadget is explicitly constructed in this work, under the general framework of autonomorphism gates~\cite{breuckmann2022fold}; The state preparation and transversal measurements are standard logical operations for any CSS codes; The H-SWAP and CZ-S gates are introduced in Ref.~\cite{quintavalle2022partitioning}, under the general framework of fold-transversal gates~\cite{breuckmann2022fold}. See also Fig.~\ref{fig:homomorphic_measurement}(c) for an illustration of some of the gadgets.}
    \label{tab:HGP_gadgets}
\end{table*}

To realize logical computation with low space-time overhead, we further require that Clifford operations on different qubits can be implemented in parallel. Here, we consider the implementation of a layer of $\Theta(k)$ Clifford gates, consisting of Hadamards, $S$ gates, and intra-block CNOTs, acting on $k$ logical qubits of a $[[n, k, d]]$ HGP code and compare its space-time cost to that using surface codes with the same distances. 
% Note that such layers of dense but low-weight Clifford operations can be used for performing active-volume computation with minimized idling locations~\cite{litinski2022active} when assuming long-range qubit connectivity. 
% Note that such layers of dense but low-weight Clifford operations are a natural way of compiling computations, and further research can generalize this to other compilation methods.
As shown in Table~\ref{tab:PPMs_cost_comparison}, although implementing such a sequence of PPMs using HGP data codes and lattice surgery with rateless ancillae~\cite{cohen2021quantum, xu2024constant, bravyi2024high} (ancillae with vanishing encoding asymptotically, e.g. surface codes) has a lower space overhead compared to using only surface codes, their space-time cost is essentially the same, since the $\Theta(k)$ gates would have to be executed sequentially in $\Theta(k)$ logical cycles to maintain the constant space overhead. As such, the former scheme is essentially trading time for space. In contrast, introducing the GPPMs gadget for the HGP code enables the parallel implementation of these gates in less than $O(k)$ logical cycles and outperforms surface codes with lattice surgery also in terms of the total space-time cost.

The ability to implement generic Clifford gates with a sublinear depth for HGP codes utilizing the GPPMs is summarized in Theorem~\ref{theorem:parallel_Clifford_gates}, for which we provide the proof in Appendix.~\ref{sec:proof_parallel_Clifford_gates}.
\begin{theorem}[Parallizable Clifford gates for HGP codes]
The gadgets in Table~\ref{tab:HGP_gadgets}, excluding the translation gadget, generate the full Clifford group on a generic HGP code. Furthermore, a layer of $\Theta(k)$ Clifford gates, consisting of Hadamards, $S$ gates, and CNOTs, on $k$ logical qubits of a quasi-cyclic HGP code can be implemented with $O(1)$ space overhead and in $O(k^{3/4})$ logical cycles using the gadgets in Table~\ref{tab:HGP_gadgets}.
\label{theorem:parallel_Clifford_gates}
\end{theorem}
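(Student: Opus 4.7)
The proof splits into two parts, one algebraic (universality) and one combinatorial (depth).

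\paragraph*{Universality.}
The plan is to invoke the Pauli-based computation framework: any Clifford unitary on $k$ logical qubits can be compiled into a sequence of non-destructive Pauli product measurements together with classically controlled Pauli corrections and an initial supply of $\overline{\ket{0}}$ and $\overline{\ket{+}}$ ancillae. It therefore suffices to show that the gadgets in Table~\ref{tab:HGP_gadgets} realize (i) an arbitrary single-qubit $\bar Z_{i,j}$ or $\bar X_{i,j}$ measurement, (ii) an arbitrary two-qubit $\bar Z\otimes \bar Z$ or $\bar X\otimes \bar X$ measurement between any two logical qubits, and (iii) a Hadamard and a phase gate on each logical qubit. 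Part (i) is immediate from Def.~\ref{def:GPPMs} by taking singleton hyperedges $\mb{\mc E_r}=\{\{i\}\}$, $\mb{\mc E_c}=\{\{j\}\}$; it also gives every $Z$- or $X$-type Pauli product measurement whose support is the Cartesian product of a row set and a column set. For (ii), a $\bar Z_{i,j}\bar Z_{i',j'}$ measurement between qubits on different rows and different columns is obtained by first teleporting $\bar Q_{i',j'}$ onto a coordinate $(i,j'')$ inside a second HGP block using transversal inter-block CNOTs and a GPPMs measurement, so that the resulting two qubits share a row and can be paired by a GPPMs with a single row hyperedge $\{j,j''\}$; the $XX$ case is dual. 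For (iii), H-SWAP and CZ-S implement $\bar H^{\otimes k}$ and $\bar S^{\otimes k}$ up to a fixed pairwise permutation and fixed diagonal Cliffords, all of which are themselves Clifford; selective Hadamards and phases are then obtained by the standard measurement-based gate teleportation trick, using a GPPMs-prepared ancilla in the appropriate stabilizer state and feeding it through a transversal inter-block CNOT.

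\paragraph*{Sublinear depth.}
The plan is to compile the $\Theta(k)$-gate Clifford layer into $\Theta(k)$ single- or two-qubit PPMs via the Pauli-based computation reduction above, and then to schedule these PPMs into rounds, each consisting of at most one GPPMs gadget preceded by an $O(1)$-cycle translation. Parametrize the logical grid as $\sqrt k\times\sqrt k$ and regard each requested PPM as a point (for single-qubit measurements) or an edge (for two-qubit measurements) in this grid. Using the quasi-cyclic translation gadget of Appendix~\ref{sec:translation_gadget}, we are free to apply an arbitrary cyclic shift to either coordinate between rounds at $O(1)$ cost, so the scheduling problem reduces to covering a set of $\Theta(k)$ points in a torus by as few aligned grids $\mb{\mc E_r}\times \mb{\mc E_c}$ (each realizable by one GPPMs) as possible, where each round may independently translate the target set.

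\paragraph*{The combinatorial core.}
The main obstacle is the combinatorial bound yielding $O(k^{3/4})$. The intended argument is a two-scale pigeonhole: by pigeonhole on the $\sqrt k$ rows, either some row contains $\Omega(k^{1/4})$ targets that can be absorbed in one GPPMs round (as a $\{r\}\times\mb{\mc E_c}$ grid), in which case a single round removes $\Omega(k^{1/4})$ points, or every row is sparse, in which case a cyclic translation can be chosen that aligns $\Omega(k^{1/4})$ points from distinct rows onto a common column, again absorbing them in one $\mb{\mc E_r}\times\{c\}$ GPPMs. Either way, each round removes $\Omega(k^{1/4})$ targets, so the total number of rounds to dispatch $\Theta(k)$ PPMs is $O(k^{3/4})$. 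The space overhead stays constant because every round uses only the data block, the homomorphic ancilla $\mc Q'$, and the mask code $\mc Q''$ of Alg.~\ref{alg:HGP_grid_PPMs}, each of size $O(k)$; the fault tolerance of each round is inherited from Theorem~\ref{theorem:horizontal_PPMs_FT}. I expect the cleanest way to formalize the two-scale argument to be through an explicit partition of the compiled PPM list by row-density buckets, with a separate scheduling subroutine for the dense and sparse regimes, and this is the step I would budget the most care for.
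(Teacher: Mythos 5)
Your universality argument is sound and close in spirit to the paper's: the paper likewise builds selective CNOTs from $ZZ$/$XX$ Bell measurements realized by GPPMs, obtains selective $H$ and $S$ from the fold-transversal H-SWAP and CZ-S gates combined with teleportation, and notes that the translation gadget is needed only for parallelism, not for generating the group. The depth analysis, however, has a genuine gap in its combinatorial core, in two places.

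First, the step ``a cyclic translation can be chosen that aligns $\Omega(k^{1/4})$ points from distinct rows onto a common column'' cannot work as stated. The translation gadget $\bar{T}_{\alpha,\beta}$ is a \emph{rigid global} shift of the entire logical grid: it preserves all relative coordinates, so two targets lying in distinct columns remain in distinct columns after any translation. No choice of $(\alpha,\beta)$ can merge points from different rows into a common column. Second, modelling a two-qubit PPM as an ``edge'' to be covered by an aligned grid $\mb{\mc{E}_r}\times\mb{\mc{E}_c}$ conflates two different objects: a GPPMs with $\mb{e_r}=\{i,i'\}$ and $\mb{e_c}=\{j,j'\}$ measures the weight-four product on the full $2\times 2$ sub-grid, not the weight-two operator $\bar{Z}_{i,j}\bar{Z}_{i',j'}$. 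So the off-diagonal two-qubit measurements --- which are exactly the hard part --- are not schedulable by the covering argument, and the claimed $\Omega(k^{1/4})$ per-round progress is not established.

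The paper's route around both obstacles is worth contrasting. Each CNOT between $(i,j)$ and $(i',j')$ is decomposed via a \emph{corner} ancilla at $(i',j)$ into a horizontal $ZZ$ and a vertical $XX$ Bell measurement, each of which is aligned and hence batchable row-by-row or column-by-column. CNOTs are then clustered by their shared corner ancilla and split by a threshold $m$: sparse clusters (at most $m$ members) are dispatched in $O(m\sqrt{k})$ logical cycles by $m$ sweeps of aligned Bell measurements, while each dense cluster is handled in $O(1)$ cycles by teleporting it to a fresh block, rigidly translating so the shared ancilla sits on the diagonal, \emph{symmetrizing} the partner qubits by intra-column swaps so that every pair is mirrored across the diagonal, and then firing the fold-transversal CZ-S gate, which applies all the required CZs simultaneously. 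Since there are at most $k/m$ dense clusters, balancing $O(m\sqrt{k})$ against $O(k/m)$ at $m=\Theta(k^{1/4})$ yields $O(k^{3/4})$. The non-rigid rearrangement (selective teleportation plus per-block translation) and the use of CZ-S as a parallel entangling primitive are the ingredients your argument would need to import to close the gap.
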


Note that the costs in Theorem~\ref{theorem:parallel_Clifford_gates} is not necessarily optimal, and further reductions in the space-time cost may be possible.
As we will show in Sec.~\ref{sec:MSD_MSI}, we can also distill and consume magic states for implementing parallel non-Clifford gates utilizing these parallel Clifford operations. In combination, we can realize logical computation with an asymptotically lower space-time cost using HGP codes when compared to surface codes with lattice surgery.

\subsection{Fault-tolerant compilation with qLDPC codes}
Armed with a broad new set of native gate operations, we now consider how various key algorithmic subroutines can be efficiently implemented using qLDPC codes.
The circuit-depth upper bound $O(k)^{3/4}$ in Theorem~\ref{theorem:parallel_Clifford_gates} is for implementing a generic (or worst-case) layer of Clifford operations. For many practical computational tasks/subroutines, we can compile them into layers of structured Clifford operations such that they can be implemented in much fewer, even a constant number of, logical cycles. 
We present some task-specific and algorithm-tailored examples in the following sections.

\begin{table}[h!]
    \centering
    \begin{tabular}{c|c|c|c}
    \hline
    \hline
         &  Space & Time & Space-Time\\
         \hline
         \makecell{Surface \\ (lattice surgery)} & $\Theta(k d^2)$ & $\Theta(d)$\footnote{This analysis only applies to conventional schemes based on lattice surgery; transversal gates may modify this cost} & $\Theta(k d^3)=\Theta(k^{5/2})$ \\
         \hline
         \makecell{HGP \\ (lattice surgery, \\ rateless ancillae)} & $\Theta(k)$ & $\Theta(k)\times d$ & $\Theta(k^2 d)=\Theta(k^{5/2})$ \\
         \hline
         \makecell{HGP \\
         (GPPMs)} & $\Theta(k)$ & $ O(k^{3/4})\times d$ & $O(k^{7/4} d)=O(k^{9/4})$ \\
    \hline
    \hline
    \end{tabular}
    \caption{Comparison of the space-time cost of $\Theta(k)$ Clifford gates, consisting of Hadamards, $S$ gates, and CNOTs, on $k$ logical qubits using (1) surface codes with lattice surgery gates (2) HGP codes using lattice surgery with rateless ancillae (3) HGP codes using GPPMs, assuming long-range connectivity. We assume individual HGP blocks satisfy the scaling $d=\Theta(\sqrt{k})$, where $k$ denotes the number of logical qubits per block, and compare to surface codes with the same distances. In practice, the code distance will be chosen based on the target error suppression, and the computation can be divided into independent blocks that still satisfy the scaling $d=\Theta(\sqrt{k})$. 
    % $^\dagger$: This analysis only applies to conventional schemes based on lattice surgery; transversal gates may modify this cost.
    }
    \label{tab:PPMs_cost_comparison}
\end{table}

\subsubsection{GHZ state generation}
\begin{figure}
    \centering
    \includegraphics[width=0.5\textwidth]{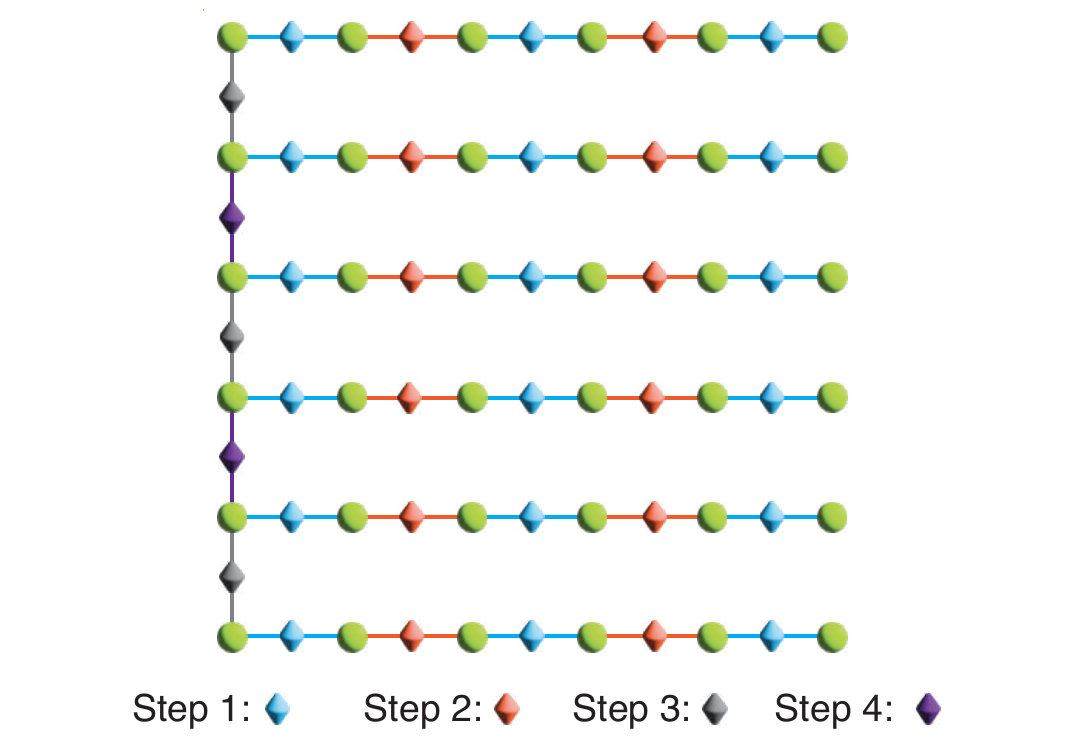}
    \caption{\textbf{A protocol for generating a GHZ state on all logical qubits of an HGP code using four GPPMs gadgets.} The Pauli products to be measured by each GPPMs gadget are indicated by the 3D diamonds of a particular color.}
    \label{fig:GHZ}
\end{figure}
As shown in Fig.~\ref{fig:GHZ}, a GHZ state across all logical qubits on a HGP code block with a $M\times N$ grid of logical qubits can be generated with $O(1)$ space overhead and in $O(1)$ logical cycles by transversally preparing all the logical qubits in the $\ket{+}$ state and performing the following sequence of GPPMs:
\begin{equation}
\begin{aligned}
    & \mr{GPPMs}(Z, \{\{r\}\}_{r\in [M]}, \{\{2j-1, 2j\}\}_{j \in \lfloor N/2\rfloor}) \\
    \rightarrow \ & \mr{GPPMs}(Z, \{\{r\}\}_{r\in [M]}, \{\{2j, 2j + 1\}\}_{j \in \lfloor (N-1)/2 \rfloor}) \\
    \rightarrow \ & \mr{GPPMs}(Z, \{\{2i-1, 2i\}\}_{i \in \lfloor M/2 \rfloor}, \{\{1\}\}) \\
    \rightarrow \ & \mr{GPPMs}(Z, \{\{2i, 2i + 1\}\}_{i \in \lfloor (M-1)/2 \rfloor}, \{\{1\}\}),
\end{aligned}
\end{equation}
where the four GPPMs measure the PPMs in Fig.~\ref{fig:GHZ} represented by the blue, red, grey, and purple diamonds, respectively.

\subsubsection{Magic state distillation and consumption\label{sec:MSD_MSI}}
\outline{Magic state distillation}
As another example, we consider distilling magic states in parallel using only qLDPC codes. At a high level, we can distill block(s) of $k$ magic states in parallel encoded in $[[n, k, d]]$ code patch(es) using $M$ identical patches. As each qLDPC patch has a constant encoding rate, such a parallel distillation scheme is still space efficient as long as $M$ is not too large. Moreover, doing so mostly requires inter-block Clifford operations, which are generally easier than intra-block operations. Most of the operations are transversal, i.e. the same operation is applied to the same set of qubits across different qLDPC patches. However, the conditional Clifford fixups (or equivalently, the reactive measurements in Ref.~\cite{litinski2019game, litinski2022active}) can break the transversal structure of the circuit. As such, we will need our GPPMs gadget to perform selected operations on certain code patches. Therefore, the entire distillation scheme is also time-efficient as long as these selected operations can be implemented in parallel using the GPPMs.

\begin{figure*}
    \centering
    \includegraphics[width=1\textwidth]{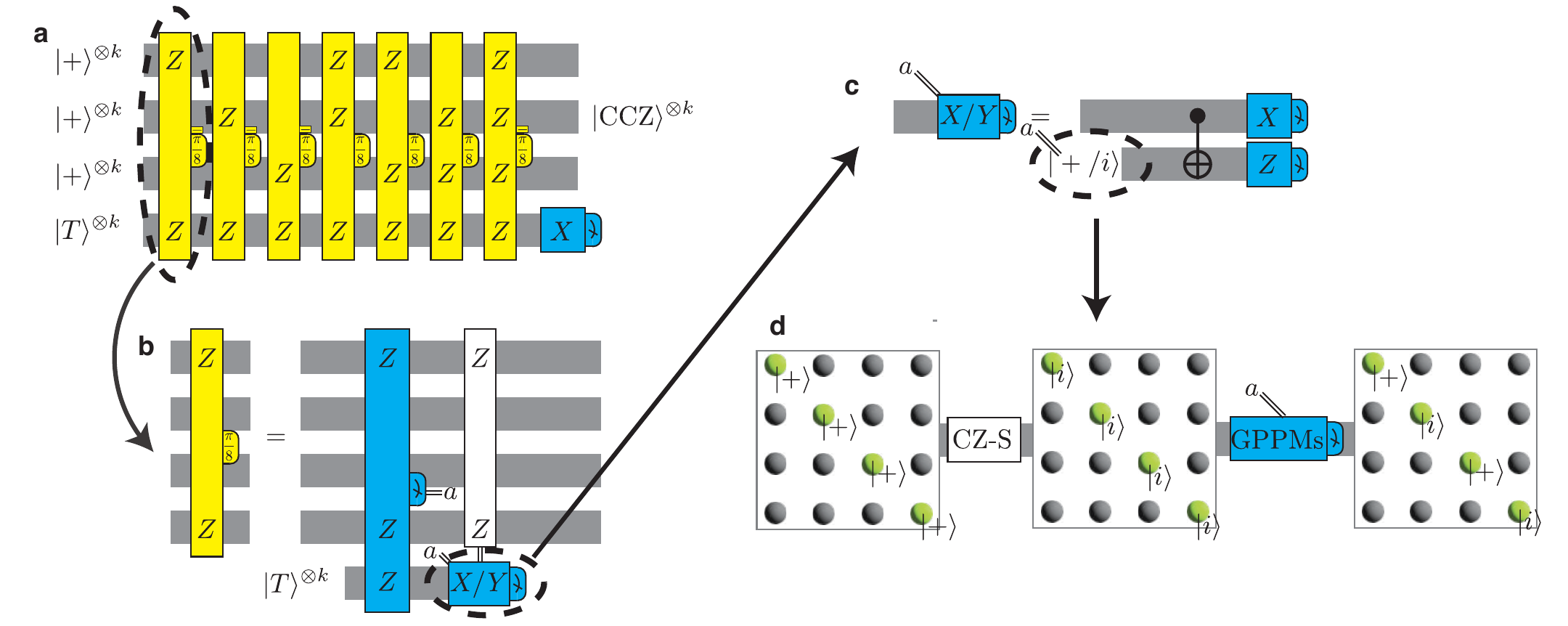}
    \caption{\textbf{One round of parallel magic state distillation on diagonal logical qubits of HGP codes with $O(1)$ space overhead and in $O(1)$ logical cycles.} Each thick grey line indicates a block HGP code encoding $k$ logical qubits. An operation involving different thick lines refers to transversal operations on all the logical qubits of the corresponding HGP codes, unless specially noted. Each yellow operation denotes a Pauli product rotation $\exp(- i P \phi)$ with a multi-qubit Pauli operator $P$ and an rotation angle $\phi$; Each blue operation denotes a Pauli product measurement. The right-going and left-coming double lines denote classical outputs and inputs, respectively. (a) The $8$-to-CCZ distillation \emph{logical} circuit that converts $8$ blocks of noisy $\ket{T}$ states into one block of less noisy $\ket{\mr{CCZ}}$ states~\cite{litinski2022active}. (b) Each of the Pauli product rotations in (a) can be implemented by supplementing an extra block of $\ket{T}$ states, performing joint transversal PPMs, and finally measuring the $\ket{T}$ block reactively in $X$ or $Y$ basis, depending on the previous PPMs. (c) The reactive measurements of the $\ket{T}$ block in (b) can be implemented by introducing another ancilla block, whose logical qubits are initialized reactively in $\ket{+}$ or $\ket{i}$ states, and then performing transversal Bell measurements. (d) The reactive state preparation of the ancilla block in (c) can be implemented by (1) initializing the logical qubits transversely in $\ket{+}$ states, (2) applying the fold-transversal CZ-S gate to convert the diagonal qubits to $\ket{i}$ states, (3) performing a pattern of GPPMs to reset some of the diagonal qubits to $\ket{+}$ states. }
    \label{fig:MSD}
\end{figure*}

\begin{figure*}
    \centering
    \includegraphics[width=1\textwidth]{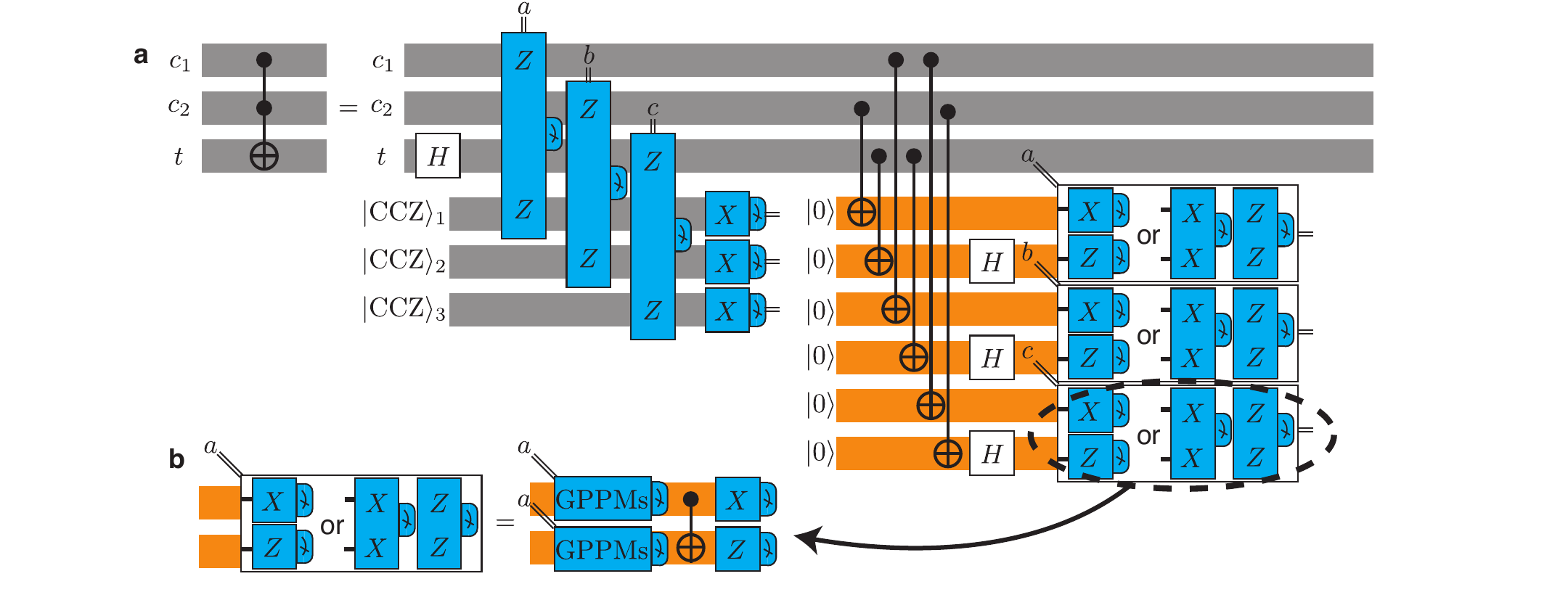}
    \caption{\textbf{Parallel inter-block Toffoli gates on HGP codes by consuming $\ket{\mr{CCZ}}$ states in parallel on diagonal logical qubits with $O(1)$ space overhead and in $O(1)$ logical cycles.} (a) A $\ket{\mr{CCZ}}$ consumption circuit by performing joint PPMs between the data blocks and the $\ket{\mr{CCZ}}$ blocks and transversally measuring the $\ket{\mr{CCZ}}$ blocks, followed by reactive CZs on the data blocks~\cite{litinski2022active}. The reactive CZs are implemented by coupling the data blocks via transversal CNOTs to extra ancilla blocks (indicated in orange), which are then reactively measured in a single-qubit or Bell basis. (b) The reactive measurements on each pair of the ancilla blocks in (a) can be implemented by two GPPMs on each block followed by transversal Bell measurements. Note that we have neglected the Pauli corrections in the circuit for simplicity, which can be implemented simply by updating the Pauli frame. 
    % Note that the GPPMs in (b) can be implemented in one logical cycle and the Hadamard gates in (a) are transversal using the hold-transversal H-SWAP gate (see Appendix~\ref{sec:}), when the injection is restricted to diagonal logical qubits.
    }
    \label{fig:MSI}
\end{figure*}

As shown in Fig.\ref{fig:MSD}(a), we consider the ``$8$-to-CCZ" distillation circuit that converts $8$ noisy $\ket{T}$ states into one less noisy $\ket{\mr{CCZ}}$ state~\cite{jones2013low,gidney2019efficient,litinski2022active}. We set the input to be four HGP code patches, three of which encode $\ket{+}$ states while the remaining encodes $\ket{T}$ states. Then, the distillation circuit is implemented by a sequence of transversal $Z$-type Pauli product rotations across the four patches and transversal $X$ measurements on the $\ket{T}$ patch. Each of the transversal Pauli product rotations can be implemented by introducing another code patch that encodes $\ket{T}$ transversally, performing a joint transversal PPM, and finally measuring the introduced $\ket{T}$ patch reactively depending on the PPM outcomes (see Fig.~\ref{fig:MSD}(b)). Since each logical qubit on the $\ket{T}$ patch is measured in $X$ or $Y$ basis reactively, such measurements are not transversal, and thus are the most expensive components of the distillation task.

As shown in Fig.~\ref{fig:MSD}(c), we can further reduce the task of reactive measurements in the $X/Y$ basis in Fig.~\ref{fig:MSD}(b) to a reactive state preparation in the $X/Y$ basis. By introducing another ancilla patch with qubits reactively initialized in the $X/Y$ basis, and performing a transversal Bell measurement between the two patches, we can obtain the $X/Y$ reactive measurement outcomes of the data patch from the Bell measurement outcomes. 

We can implement the reactive state preparation subroutine efficiently using the GPPMs gadget. As shown in Fig.~\ref{fig:MSD}(d), we first consider using only the diagonal logical qubits in a symmetric HGP code. We can prepare the diagonal qubits in an arbitrary pattern of $\ket{+}/\ket{i}$ states in three steps:
\begin{enumerate}
    \item Prepare all the diagonal qubits in $\ket{+}$ states.
    \item Apply the fold-transversal CZ-S gate (see Table~\ref{tab:HGP_gadgets}) to convert all the diagonal $\ket{+}$ states to $\ket{i}$ states. 
    \item Apply a $\mr{GPPMs}($X$, \mb{\mc{E}}_0, \mb{\mc{E}}_0)$, where $\mb{\mc{E}}_0 = \{\{i\}\}_{i \in \mb{S}}$ and $\mb{S}$ contains the indices of the diagonal qubits to be prepared in the $X$ basis.
\end{enumerate}
The above procedure implements the reactive state preparation subroutine on the diagonal $\sqrt{k}$ qubits with $O(k)$ physical qubits and in $O(1)$ logical cycles. Therefore, each round of the distillation circuit in Fig.~\ref{fig:MSD} can also be run on the diagonal $\sqrt{k}$ logical qubits with $O(k)$ physical qubits and $O(1)$ code cycles. 

Note that the main obstacle of performing the above distillation task on all logical qubits instead of the diagonal ones boils down to preparing an entire block of logical qubits reactively in the $X/Y$ basis (see Fig.~\ref{fig:MSD}(d)). 
% Such a task is challenging for generic HGP codes. 
In Appendix~\ref{sec:extra_gadgets}, we show how to realize it in $O(\sqrt{k}\log k)$ logical cycles by imposing an additional translational symmetry of the HGP codes. At a high level, by using a family of quasi-cyclic classical codes, we can obtain HGP codes with a translational automorphism, i.e., the logical qubit grid can be translationally shifted with periodic boundary conditions by simply permuting the physical qubits. Then, we can repeatedly generate $\sqrt{k}$ $\ket{i}$ states using the diagonal logical qubits (utilizing the CZ-S gate) and distribute them to other qubits via the translational automorphism. Finally, we can selectively reset a subset of logical qubits to $\ket{+}$ states efficiently using the $X$-type GPPMs. Therefore, by using HGP codes with additional translational symmetry, we can perform magic state distillation in Fig.~\ref{fig:MSD} on $k$ logical qubits with $O(1)$ space overhead in $O(\sqrt{k}\log k)$ logical cycles per distillation round. 

We note that the same protocol can be applied to distilling $\ket{T}$ states in parallel using, e.g. a $15$-to-$1$ distillation protocol~\cite{litinski2022active}, with the same scaling of the space and the time overhead.  

\outline{Parallel non-Clifford gates by consuming magic states}
With $\Theta(k)$ magic states distilled on qLDPC patches in parallel, we can also consume them and perform parallel non-Clifford gates. 
For instance, as shown in Fig.~\ref{fig:MSI}(a), we can implement parallel Toffoli gates on three $[[n, k, d]]$ HGP patches by consuming three patches of distilled $\ket{\mr{CCZ}}$ states. The $\ket{\mr{CCZ}}$ states are consumed via the transversal $Z$-type PPMs together with the data patches, and the transversal $X$ measurements that follow. Then, reactive CZ gates on the data patches are applied depending on the PPMs outcomes. We follow Ref.~\cite{litinski2022active} and convert these reactive CZs into reactive PPMs by introducing six ancilla patches (indicated by the orange blocks in Fig.~\ref{fig:MSI}), interacting them with the data patches via the transversal CNOTs, and reactively measuring them in either $X/Z$ basis or a Bell basis in a pairwise fashion. These reactive measurements are, again, not transversal operations since qubits on the same patch may need to be measured in different bases. Thus, we need to implement them using the selective GPPMs gadget.

As shown in Fig.~\ref{fig:MSI}(b), we can realize the reactive measurement on a pair of the ancilla patches in Fig.~\ref{fig:MSI}(a) by first performing non-destructive $X/Z$-type GPPMs selectively on the subset of qubits to be measured in the single-qubit basis, and then performing the transversal destructive Bell measurements. Both the GPPMs in Fig.~\ref{fig:MSI}(b) and the transversal Hadamard gates in Fig.~\ref{fig:MSI}(a) can be implemented in parallel in $O(1)$ logical cycles if the task is restricted again only on the diagonal logical qubits of symmetrical HGP codes (the fold-transversal H-SWAP gate implements transversal Hadamards on the diagonal qubits). Therefore, we obtain a protocol that implements $\sqrt{k}$ Toffoli gates in parallel on diagonal logical qubits by consuming $\sqrt{k}$ $\ket{\mr{CCZ}}$ states with $O(k)$ physical qubits in $O(1)$ logical cycles. 

Similar to the magic state distillation task, we can also extend the magic state consumption task to all $k$ logical qubits (beyond the diagonal ones) with $O(k)$ physical qubits in $O(\sqrt{k}\log k)$ logical cycles by using HGP codes with additional translation symmetries. The trick is to implement the transversal Hadamards in Fig.~\ref{fig:MSI}(a) on a HGP patch by first implementing the fold-transversal H-SWAP gate, and then using the translational automorphism, combined with GPPMs, to cancel the extra swaps on the non-diagonal qubits. See Appendix.~\ref{sec:transversal_H} for more details.

In conclusion, by using HGP codes with additional translational symmetries, we can not only distill $k$ magic states but also consume them and implement non-Clifford gates in parallel with $O(1)$ space overhead in $O(\sqrt{k}\log k)$ logical cycles.

\subsubsection{Quantum adder}
With the ability to distill and consume magic states in parallel, we can explore computation subroutines that require many parallel non-Clifford gates. As an example, we consider the quantum adder~\cite{gidney2018halving}, which is an important subroutine of many useful quantum algorithms such as the factoring algorithm~\cite{shor1994algorithms}. 

The adder inputs two $k$-qubit registers $\ket{a} = \otimes_{i=1}^k \ket{a_i}$ and $\ket{b} = \otimes_{i=1}^k \ket{b_i}$ representing two integers $a$ and $b$, respectively, and outputs two registers $\ket{a}$ and $\ket{a + b}$. Note that the adder is a unitary quantum circuit, so the input can also be a superposition of integers and the corresponding output will be a superposition of the added integers.

\begin{figure*}
    \centering
    \includegraphics[width=1\textwidth]{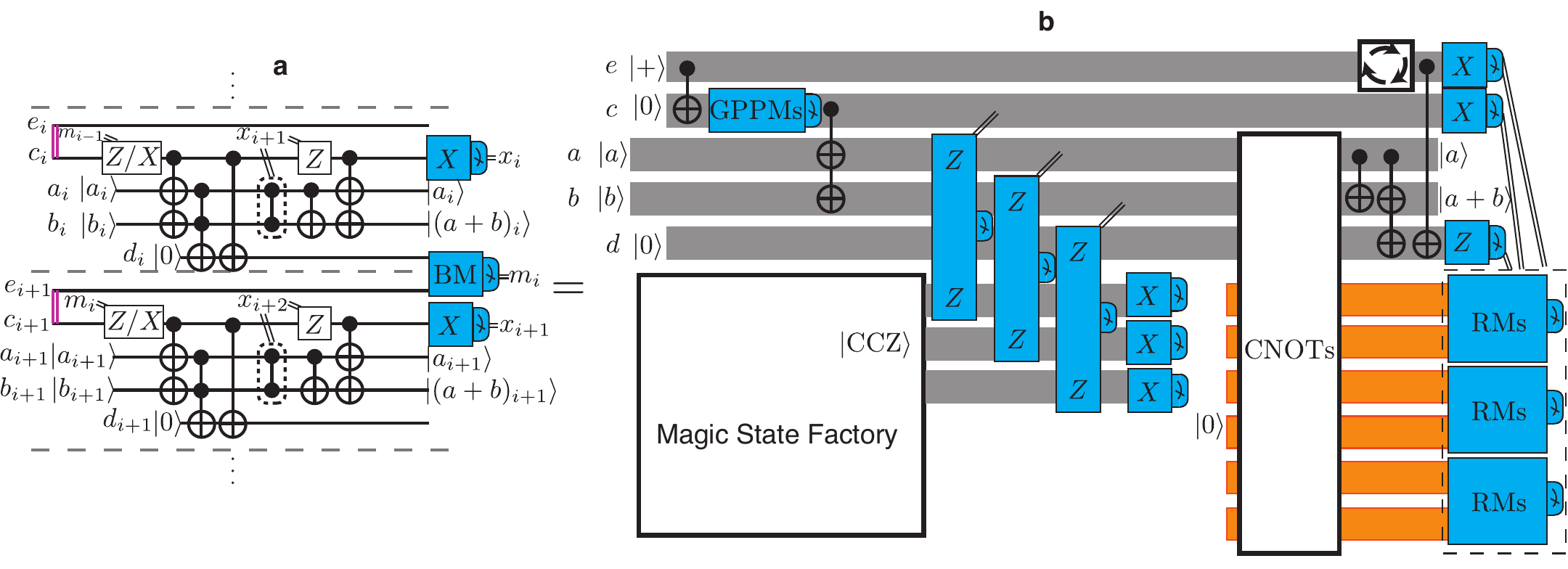}
    \caption{\textbf{An efficient parallel implementation of a quantum adder using quasi-cyclic HGP codes.} (a) The Gidney adder circuit~\cite{gidney2018halving, litinski2022active} with temporary-AND Toffolis~\footnote{Toffolis with a $\ket{0}$-initialized target qubit, which can have a reduced space-time cost compared to regular Toffolis~\cite{gidney2018halving}} has repeated sectors, each sandwiched by two dashed lines. Different sectors are connected by the bridge qubits that is teleported across different blocks back in time~\cite{litinski2022active}. Specifically, each output carry qubit $d_i$ is teleported back in time as the input carry qubit $c_{i + 1}$ (of the next bit), utilizing the Bell measurements (BM) and the Bell state preparation (indicated by the leftmost verticle double lines). The right-going and left-coming double lines next to an operation denote classical outputs and inputs, respectively. (b) We can implement all the sectors of (a) approximately in parallel by encoding different qubits of the same type, e.g. $\{a_i\}$, into an HGP code and performing mostly inter-block operations. In particular, the temporary-AND Toffolis in different sectors can be implemented in parallel using transversal $\ket{\mr{CCZ}}$ distillation and consumption (see Fig.~\ref{fig:MSD} and Fig.~\ref{fig:MSI}, as well as Sec.~\ref{sec:MSD_MSI}). Similar to the magic state consumption circuit in Fig.~\ref{fig:MSI}, the ancilla blocks (the orange lines) first interact with the $a$, $b$, and $c$ blocks via the transversal inter-block CNOTs (with details omitted here) and are finally measured reactively in a pairwise fashion. Different from Fig.~\ref{fig:MSI}, here, the ancilla reactive measurements need to be executed sequentially and qubits in the same block need to be measured in order depending on the previous outcomes (the details of the Pauli feedforwards that can be tracked in software are omitted here). The gate with three arrows at the top right corner denotes a cyclic permutation of the logical qubits in an HGP code. 
    % Although the ancilla blocks (the orange lines) introduced for implementing the reactive CZs following the $\ket{\mr{CCZ}}$ consumption (see Fig.~\ref{fig:MSI}) have to be measured sequentially due to the intrinsically sequential nature of the adder task, they are cheaper operations only on ancilla blocks compared to the toffoli gates, and the entire circuit is parallel in practically relevant regimes~\cite{litinski2022active}. 
    }
    \label{fig:adder}
\end{figure*}

We follow the circuit in Ref.~\cite{litinski2022active}, which is a variant of the Gidney ripple-carry adder~\cite{gidney2018halving, cuccaro2004new}, that performs a $k$-qubit addition using $k-1$ temporary-AND Toffolis. We refer readers to Fig. 16 of Ref.~\cite{litinski2022active} (and related texts) for more details of the circuit.
As shown in Fig.~\ref{fig:adder}(a), the circuit consists of $k-2$ repeated segments (the first and the last segments are different), each computing the addition of the $i$-th bit of $a$ and $b$. The $i$-th and the $(i + 1)$-th segments are connected by a shared carry bit that is simultaneously the output of the $i$-th segment and the input of the $(i+1)$-th segment. 
Because of these shared carry bits, the computation is generically sequential. Nevertheless, as shown in Fig.~\ref{fig:adder}, one can parallelize the computation on different segments by introducing bridge qubits for the carry bits~\cite{litinski2022active}. 
More specifically, we introduce an output carry qubit $d_i$ for the $i$-th segment and a pair of input carry qubits $e_{i + 1}$ and $c_{i + 1}$ in a Bell state for the $(i + 1)$-th segment. 
As shown in Fig.~\ref{fig:adder}, we can now execute all the segments in parallel and finally, perform a Bell measurement between $d_i$ and $e_{i + 1}$ to effectively teleport $d_i$ back in time as the input carry bit $c_{i + 1}$ of the $(i + 1)$-th segment. Such a circuit can also be easily verified using ZX calculus~\cite{litinski2022active}. 

The circuit in Fig.~\ref{fig:adder}(a) is now almost parallel, except that there is a reactive CZ on each pair of the $a$ and $b$ bits depending on the measurement of the next $c$ bit. These reactive CZs come from the uncompute step of the temporary-AND Toffolis, and can be combined with the reactive CZs that come after consuming the $\ket{\mr{CCZ}}$ states at the computing step of the temporary-AND Toffolis. See Ref.~\cite{gidney2018halving} and Ref.~\cite{litinski2022active} for more details. Thus, to perform these reactive CZs, we can introduce some ancilla qubits, let them interact with the $a$, $b$, and $d$ qubits via transversal CNOTs, and finally remove them via reactive measurements (same as Fig.~\ref{fig:MSI}(a)). Now, the entire circuit can be implemented exactly in parallel, except for the final reactive measurement of the ancilla qubits. 

Given this repeated and parallel structure, we can implement the adder using qLDPC patches. As shown in Fig.~\ref{fig:adder}(b), we can encode the five types of qubits into five identical $[[n,k,d]]$ HGP patches, which we call $a$-, $b$-, $c$-, $d$-, and $e$-patch, respectively. 
Then the circuit in Fig.~\ref{fig:adder}(a) can be implemented using HGP patches in Fig.~\ref{fig:adder}(b) that involves mainly \emph{inter-block and transversal} operations, except for the final reactive measurements of the six ancilla patches (the orange blocks). These reactive pairwise measurements are the same as those in Fig.~\ref{fig:MSI}(a), except that they now also depend on the $X$ measurements of the $c$ patch as well as the Bell measurements of the $d$ and the $e$ patches. Note that these reactive measurements need to be executed sequentially, i.e. the basis of the next RMs on the same block will depend on the outcome of the previous RMs, fundamentally because the adder task is generically sequential. 
Nevertheless, these are simply ancilla patches to be consumed and the data patches ($a$- and $b$- patches) are free to perform other computation tasks in parallel. 
% Moreover, these RMs, although sequential, \qx{could} be faster than the magic state distillation and consumption for practical adder sizes~\cite{gidney2018halving, litinski2022active}. 
% Note that similar considerations were presented in Refs.~\cite{gidney2018halving, litinski2022active}. 
To this end, the entire qLDPC-based adder scheme in Fig.~\ref{fig:adder}(b) can be implemented with $c_1 \sqrt{k} \log k + c_2 k$ logical cycles, where $c_1$ and $c_2$ are constants associated mainly with the parallel magic state distillation and the sequential ancilla RMs, respectively. We expect that $c_2 \ll c_1$ and the adder can be implemented approximately in parallel for practical integer sizes.
% predominantly limited by the parallel magic state distillation and consumption (see Sec.~\ref{sec:MSD_MSI}).   

Finally, as shown in Fig.~\ref{fig:adder}(b), in addition to the RMs, there are two extra operations that are not transversal. The first operation is the $Z$-measurement on the first qubit of the $c$-block after the first transversal CNOT. This is due to the fact that the addition of the first bits $a_0$ and $b_0$ does not have any input carry bit, which requires us to reset the carry qubit $c_0$ to $\ket{0}$ via a $Z$ measurement. This can be easily implemented using a $Z$-type GPPMs selecting only the first logical qubit of the $c$-block. Finally, as shown in Fig.~\ref{fig:adder}(a), the final Bell measurements between the $d$ and $e$ blocks are performed between the $i$-th $d$ qubit and the $(i + 1)$-th $e$ qubit, which also breaks the transversal structure. 
Fortunately, we can realize such a mismatched Bell measurement by simply cyclically shifting the $e$ block and then performing a transversal Bell measurement. 
As presented in Appendix~\ref{sec:cyclic_shift}, we can implement such a cyclic shift of the $e$ block in $O(1)$ logical cycles by combining the translational automorphism of a translational-symmetrical HGP code and the GPPMs gadget. 

To conclude, we can implement a $k$-bit quantum adder in parallel using qLDPC codes in $O(\sqrt{k} \log k)$ logical cycles by leveraging the parallel magic state distillation and consumption protocols that we developed in Sec.~\ref{sec:MSD_MSI}. 

We note that there are many other algorithms/subroutines that demand the parallel implementation of non-Clifford gates, such as the quantum random-access memory~\cite{giovannetti2008quantum} and various quantum state preparation gadgets~\cite{zhang2022quantum}. We thus expect similar techniques that leverage the parallel qLDPC-encoded non-Clifford gates can be exploited for a broad class of computational tasks with low space-time overhead. 
 
\section{Single-shot logical gates for 3D/4D homological product codes \label{sec:high_dimensional_codes}}
As described in Sec.~\ref{sec:sketch_homomorphic_gadget}, the homomorphic measurement gadget we developed for the homological product codes work generally for any dimension $D \geq 2$ by simply puncturing/augmenting their base classical codes. In particular, we can generalize the GPPMs gadget for the $2D$ HGP code measuring a grid pattern of PPMs to a $D$-dimensional gadget for a $D$-dimensional code measuring a $D$-dimensional-hypercube pattern of PPMs selectively and in parallel.

Although higher-dimensional ($3D$ or $4D$) homological product codes tend to have larger block sizes, they have redundant check matrices inherently due to their higher-dimensional product construction, i.e. the syndromes satisfy extra linear constraints referred to as metachecks~\cite{campbell2019theory}. These metachecks could help reduce the number of repeated syndrome measurements in the presence of measurement errors. For example, as we show in Appendix~\ref{sec:single_shot_state_prep}, 4D homological product codes, whose checks satisfy the soundness property~\cite{campbell2019theory}, support single-shot preparation of computational-basis states. This enables homomorphic measurements with \emph{constant depth}.

% \begin{enumerate}
%     \item 
%     \item Some of these high-dimensional codes, e.g. the 3D surface codes~\cite{vasmer2019three}, support transversal non-Clifford gates. 
% \end{enumerate}

Here, in Def.~\ref{def:CPPMs}, we explicitly present a parallel PPMs gadget for 3D homological product codes by generalizing the GPPMs gadget for the HGP codes straightforwardly, and the construction for 4D codes follows. 
As we show in Appendix~\ref{appendix:3D_PPMs}, the logical qubits of a 3D homological product codes can be arranged on a cube, and we can construct a ``Cube'' PPMs (CPPMs) gadget that measures a pattern of PPMs in parallel on any subcube of the logical qubits in Def.~\ref{def:CPPMs}. See Appendix~\ref{appendix:3D_PPMs} for details.

\begin{definition}[Cube PPMs]
Given a canonical representation of logical qubits in a 3D homological product code on a 3D cube $[k_1]\times[k_2]\times [k_3]$, and three sets of hyperedges $\mb{\mc{E}_x}$, $\mb{\mc{E}_y}$ and $\mb{\mc{E}_z}$, where each hyperedge is a collection of indices in the $X$, $Y$, and $Z$ direction, respectively, we define a pattern of Cube PPMs of a type $P \in \{X, Z\}$ as:
\begin{equation}
    \mr{CPPMs}(\mr{P}, \mb{\mc{E}_x}, \mb{\mc{E}_y}, \mb{\mc{E}_z}) := \{\bigotimes_{q \in \mb{e_x} \times \mb{e_y} \times \mb{e_z}} \bar{P}_q \}_{\mb{e_x} \in  \mb{\mc{E}_x}, \mb{e_y} \in  \mb{\mc{E}_y}, \mb{e_z} \in  \mb{\mc{E}_z}},
%     \begin{cases}
%     \{\bigotimes_{q \in e_r \times e_c} Z_q & \mid e_r \in  \mb{\mc{E}_r}, e_c \in  \mb{\mc{E}_c}\}, \\
%     & \text{if } \mr{T} = Z \\
%     \{\bigotimes_{q \in e_r \times e_c} X_q & \mid e_r \in \mb{e}_r, e_c \in \mb{e}_c\}, \\
%     & \text{if } \mr{T} = X \\
% \end{cases}
\end{equation}
where $\bar{P}_q$ denotes the logical Pauli operator $\bar{X}$ ($\bar{Z}$) of the logical qubits $\bar{Q}_q$ with coordinates $q$ for $P = X$ ($Z$).
\label{def:CPPMs}
\end{definition}

\section{Physical implementation \label{sec:physical_implementation}}

Our proposal is particularly natural to implement in dynamically-reconfigurable qubit architectures~\cite{bluvstein2022quantum,bluvstein2023logical,pino2021demonstration}, such as neutral atom arrays, where the qubits can be dynamically rearranged and parallel two-qubit gates can be applied with global controls.
The experiments of Ref.~\cite{bluvstein2023logical} showed that parallel control over logical qubits can dramatically simplify the implementation of error correction, due to the fact that all the physical qubits of the block need to realize the same operation to realize a targeted logical operation.
Since the new homomorphic CNOTs and measurements are built upon transversal physical CNOTs between two alike code patches, we expect that they can also be efficiently implemented by overlapping the two patches and then applying global pairwise CNOTs.
Combined with recent proposals for the efficient implementation of various qLDPC codes in neutral atom arrays~\cite{xu2024constant,viszlai2023matching}, utilizing again the natural control parallelism afforded by optical tools, we expect that all required operations can be efficient implemented.

In this work, we further observe that due to the large-scale parallelism on the \emph{algorithmic level}, many key subroutines such as the quantum adder again involve the same, repeated operations at the \emph{logical qubit} level. This parallelism is then particularly well-suited to LDPC codes with transversal operations, and indicates that sophisticated operations on dense block encodings can be controlled efficiently at both the logical and physical level, offering unique opportunities for dramatically reducing the costs of large-scale computation.

\section{Discussion and outlook}
In this paper, we presented a suite of new methods for performing logical gates with homological product codes, enabling fast and parallelizable logical operations.
Crucially, our methods eliminate the need for surface- or repetition-code-like structures in the construction of the logical gate, removing a key bottleneck on parallelism of previous approaches~\cite{cohen2022low,xu2024constant,bravyi2024high}.
Moreover, the construction is remarkably simple, making use of well-known modifications to the base classical codes of the homological product codes.
This enables a high degree of parallelism and hardware-efficient implementation, which are crucial in practice.

Building on top of this set of basic operations, we focus on the parallel, efficient implementation of key algorithmic subroutines, initiating the study of fault-tolerant compilation with native qLDPC code operations.
We report efficient implementations of large GHZ state preparation, magic state distillation and consumption, culminating in the efficient implementation of a quantum adder.

Our results open up many exciting areas of future research.
While we have proposed a variety of useful logical gadgets, there is ample room to further expand the range of accessible operations, and achieve additional reductions in their space-time cost.
In particular, the idea of masking an ancilla qLDPC block and preparing inhomogeneous logical states for executing selective and parallel logical operations (see Fig.~\ref{fig:HGP_Grid_PPMs}) could be exploited to design new logical gadgets with even lower time overhead.
Additionally, in order to make full use of the protocols for efficiently distilling and consuming magic states in Sec.~\ref{sec:MSD_MSI}, we also need to prepare/inject initial noisy magic states with infidelities below some distillation threshold~\cite{bravyi2005universal} as the input to the magic state factory. It will therefore be interesting to generalize existing protocols for high-fidelity magic state injection~\cite{li2015magic,lao2022magic, gidney2023cleaner} to various qLDPC codes. 
In Sec.~\ref{sec:high_dimensional_codes}, we show how to apply our constructions to higher-dimensional homological product codes and measure PPMs in parallel and in constant depth. 
Some of these higher-dimensional codes, e.g. the 3D surface code~\cite{vasmer2019three,kubica2015unfolding,bombin2013gauge}, also support transversal non-Clifford gates. As such, it would be interesting to explore a richer set of constant-depth logical operations with these codes, ideally with high encoding rates~\cite{zhu2023non}, in the future.
Similarly, it will also be interesting to generalize our constructions to other qLDPC codes with product constructions, such as the lifted product code~\cite{panteleev2019degenerate, panteleev2022quantum, bravyi2024high}, fibre bundle code~\cite{Hastings2020}, and balanced product code~\cite{breuckmann2020balanced}. 

Another important direction is efficient decoding and numerical simulations of our protocol.
We expect general purpose decoding algorithms such as BP-OSD~\cite{panteleev2019degenerate,roffe2020decoding,Roffe_LDPC_Python_tools_2022} and hypergraph union-find~\cite{delfosse2022toward,cain2024correlated} to achieve good performance, and one may be able to further exploit the product structure and expansion properties of these codes to improve performance~\cite{stambler2023addressing,krishna2023viderman,grospellier2021combining}.
In addition, the use of correlated decoding techniques~\cite{cain2024correlated} and principles of algorithmic fault tolerance~\cite{zhou2024algorithmic} may further reduce the time overhead of our construction, allowing only $O(1)$ rounds of syndrome extraction in certain cases. 

Finally, with these new techniques and vastly-expanded set of operations, a frontier of future research will be to perform end-to-end compilations of large scale algorithms, and demonstrate a concrete space-time saving over current schemes in the full algorithmic setting.
In addition to the quantum adder described here, quantum simulation may be another key area of interest, where in many contexts all logical qubits realize the same structured evolution~\cite{campbell2021early}, and thus are well-suited to parallelism in LDPC blocks. This could be explored both for efficient trotterized Hamiltonian evolution~\cite{childs2021theory}, as well as parallelized operations using qubitization for studying electronic spectra in materials~\cite{babbush2018encoding,low2016hamiltonian}. 

\begin{acknowledgments}
We acknowledge helpful discussions with John Preskill, Christopher Pattison, Shouzhen Gu, Han Zheng, Nithin Raveendran, Asit Pradhan, and Daniel Litinski, Nishad Maskara, Madelyn Cain, Christian Kokail. We especially thank Shilin Huang for initial discussions and insightful comments. We acknowledge support from the ARO (W911NF-23-1-0077), ARO MURI (W911NF-21-1-0325, W911NF-20-1-0082), AFOSR MURI (FA9550-19-1-0399, FA9550-21-1-0209, FA9550-23-1-0338), DARPA (HR0011-24-9-0359, HR0011-24-9-0361, ONISQ W911NF2010021, IMPAQT HR0011-23-3-0012), IARPA Entangled Logical Qubits program (ELQ, W911NF-23-2-0219) NSF (OMA-1936118, ERC-1941583, OMA-2137642, OSI-2326767, CCF-2312755, PHY-2012023, CCF-2313084), DOE/LBNL (DE-AC02-05CH11231), NTT Research, Samsung GRO, the Center for Ultracold Atoms (a NSF Physics Frontiers Center, PHY-1734011), and the Packard Foundation (2020-71479).
D.B. acknowledges support from the NSF Graduate Research Fellowship Program (grant DGE1745303) and The Fannie and John Hertz Foundation.
J.P.B.A. acknowledges support from the Generation Q G2 fellowship and the Ramsay Centre for Western Civilisation.
After the completion of this project, we became aware of related work studying the implementation of logical operations on qLDPC codes using improved lattice surgery~\cite{cowtan2024ssip,cross2024ibmgates}.
% \qx{Add funding acknowledgement here.}
\end{acknowledgments}

\newpage
\begin{appendix}
\newpage
\newpage
\appendix 

\section{Additional gadgets for logical operations on HGP codes
\label{sec:extra_gadgets}}
In this section, we present additional logical gadgets used for the HGP codes in the main text. We first present the logical translation gadget listed in Table~\ref{tab:HGP_gadgets} for HGP codes with quasi-cyclic base classical codes. Then, we show how to implement parallel logical Hadamard gates (without extra swaps) and parallel $\ket{i}$-state preparation utilizing the translation gadget. 
Such gadgets enable us to implement parallel magic state distillation and injection (see Sec.~\ref{sec:MSD_MSI}) on all logical qubits of HGP codes, extending from the diagonal qubits. 
We also present a gadget for performing selective inter-block teleportation as well as a gadget for cyclically shifting all logical qubits. 

\subsection{Logical translation gadget \label{sec:translation_gadget}}
The logical translation gadget is based on an automorphism of a quantum code $\gamma = \{\gamma_2, \gamma_1, \gamma_0\}: \mc{Q} \rightarrow \mc{Q}$ such that the following diagram commutes:
\begin{equation}
    \label{eq:quantum_code_auto}
        \begin{tikzcd}
	{Q_2} & {Q_1} & {Q_0} \\
	{Q_2} & {Q_1} & {Q_0}
	\arrow["{H_Z^T}", from=1-1, to=1-2]
	\arrow["{H_X}", from=1-2, to=1-3]
        \arrow["{H_Z^T}", from=2-1, to=2-2]
	\arrow["{H_X}", from=2-2, to=2-3]
        \arrow["{\gamma_2}", from=2-1, to=1-1]
	\arrow["{\gamma_1}", from=2-2, to=1-2]
        \arrow["{\gamma_0}", from=2-3, to=1-3]
\end{tikzcd}
\end{equation}
Clearly, this is a special case of the homomorphism between two quantum codes $\gamma: \mc{Q}^{\prime} \rightarrow \mc{Q}$ with $\mc{Q}^{\prime} = \mc{Q}$. We again work in the standard basis where each basis vector of $Q_2$, $Q_1$, and $Q_0$ represents a $Z$ check, a qubit, and a $X$ check, respectively. If $\gamma_1$ is a permutation matrix, the physical permutation of the qubits according to $\gamma_1$ preserves the stabilizer group and thus implements a logical gate. Such a gate is called an automorphism gate~\cite{breuckmann2022fold, bravyi2024high}. In the following, we will construct a special type of automorphism gate for HGP codes that implements a translation of all the logical qubits in the canonical basis. 

Based on Proposition~\ref{prop:homological_prod_code_homo_by_classical}, we can also construct an automorphism for a homological product code by taking the tensor product of automorphisms of its base classical codes. In the particular case of a HGP code $\mc{Q}$, which is constructed from two base length-$1$ chain complexes $\mc{C}^1$ and $\mc{C}^2$ (see Eq.~\eqref{eq:HGP_complex}), we can construct an automorphism $\gamma = \{\gamma_2, \gamma_1, \gamma_0\}: \mc{Q} \rightarrow \mc{Q}$, where
\begin{equation}
\begin{aligned}
    \gamma_2 & = \gamma^1_1 \otimes \gamma^2_1, \\
    \gamma_1& = (\gamma^1_0\otimes \gamma^2_1)\oplus (\gamma^1_1 \otimes \gamma^2_0), \\
    \gamma_0 & = \gamma^1_0 \otimes \gamma^2_0,
\end{aligned}
\label{eq:automorphism}
\end{equation}
% $\gamma = (\gamma^1_0\otimes \gamma^2_1)\oplus (\gamma^1_1 \otimes \gamma^2_0)$,
where $\{\gamma^i_1, \gamma^i_0\}$ is an automorphism for $\mc{C}^i$ ($i = 1, 2$) such that the following diagram commutes:
\begin{equation}
    \begin{tikzcd}
    	{C^i_1} & {C^i_0} \\
    	{C^i_1} & {C^i_0}
    	\arrow["{\partial^i_1}", from=1-1, to=1-2]
            \arrow["{\partial^i_1}", from=2-1, to=2-2]
            \arrow["{\gamma^i_1}", from=2-1, to=1-1]
    	\arrow["{\gamma^i_0}", from=2-2, to=1-2]
    \end{tikzcd}
    \end{equation}
Note that the direct sum in Eq.~(\ref{eq:automorphism}) indicates that $\gamma_1$ takes a block-diagonal form with respect to the two blocks of qubits in Eq.~\eqref{eq:HGP_complex}.

\begin{figure}
    \centering
    \includegraphics[width=0.5\textwidth]{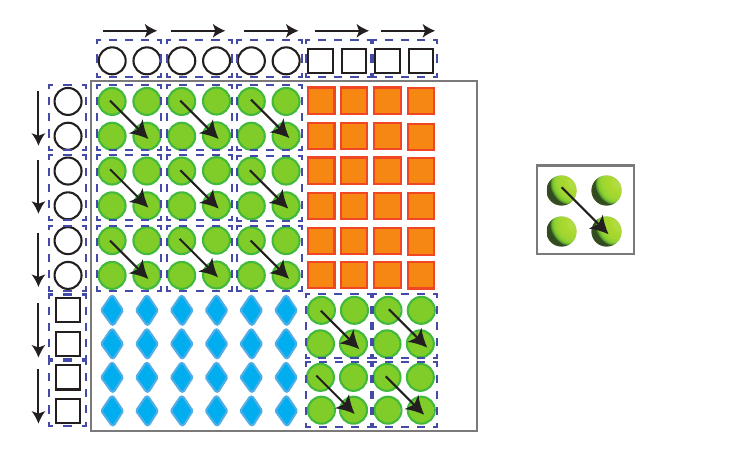}
    \caption{\textbf{Illustration of the logical translation gadget for HGP codes with quasi-cyclic base codes.} Physical block-translations corresponding to products of horizontal and vertical block-cyclic-shifts give rise to translation of the logical block under periodical boundary conditions.}
    \label{fig:translation_gadget}
\end{figure}

For a symmetric HGP code with a base check matrix $H_0$, we assign $\partial_1^{1 T} = \partial_1^2 = H_0$. Furthermore, we consider a quasi-cyclic base code~\cite{huffman2010fundamentals}, whose check and generator matrix are in the following form:
\begin{equation}
    H_0 \in {\mbb{R}_l}^{r_b\times n_b}, \quad G_0 \in {\mbb{R}_l}^{1\times n_b}\ \mr{with}\ G_0[1,1] = \mb{1}, 
    \label{eq:OGSC_form}
\end{equation}
where $\mbb{R}_l := \mbb{F}_2[x]/(x^l - 1)$ denotes the quotient polynomial ring, $x$ represents a circulant matrix of size $l$ that shifts the entries by $1$, and $\mb{1}$ represents the $l\times l$ identity matrix. $l$ is also denoted as the lift size in the literature. Based on the form of $G_0$, this code encodes $l$ codewords, which can be cyclically shifted by cyclically shifting each of the $n_b$ blocks of bits. Moreover, the same block shifts of the bits correspond to cyclically shifting the $r_b$ blocks of checks. Thus, it admits an automorphism $\{\sigma_{c, i}, \sigma_{b, i}\}$, where $\sigma_{c, i}$ and $\sigma_{b, i}$ denote the block-cyclic-shift matrix by $i$ for the checks and the bits, respectively. For example, for $n_b = 2$, $\sigma_{b,1} = \left( \begin{array}{cc}
    x & 0 \\
    0 & x
\end{array}\right)$. Finally, we can construct an automorphism for the HGP code as
\begin{equation}
    \gamma_1 = \left( \begin{array}{cc}
       \sigma_{b,i}\otimes \sigma_{b,j} & 0 \\
       0  & \sigma_{c,i}\otimes \sigma_{c,j} \\
    \end{array}\right),
    \label{eq:HGP_shifts}
\end{equation}
where the two blocks of shifts in Eq.~(\ref{eq:HGP_shifts}) are applied to the two blocks of qubits given by the product of the classical bits and checks, respectively (see Eq.~(\ref{eq:coordinate_system}) and Fig.~\ref{fig:translation_gadget}).  Moreover, the qubits can be further divided into $l\times l$ blocks, each undergoing the same translation by $(i,j)$ under periodic boundary conditions (see Fig.~\ref{fig:translation_gadget}). It is easy to verify that the canonical logical qubits $\{\bar{Q}_{x,y}\}_{x \in [l], y \in [l]}$ (there are $l\times l$ logical qubits in total since each base code encodes $l$ logical bits) are also translated by $(i,j)$ under periodic boundary conditions, i.e.
\begin{equation}
    \bar{P}_{x,y} \rightarrow \bar{P}_{(x + i)\mod l,\  (y + j)\mod l},
\end{equation}
for $P = X, Z$. We denote such a gadget as $\bar{T}_{i,j}$. We an example for $\bar{T}_{1,1}$ with $ l = 2$, $n_b = 3$, and $r_b = 2$ in Fig.~\ref{fig:translation_gadget}. 
 
Finally, we note that not all quasi-cyclic classical codes are in the required form of Eq.~(\ref{eq:OGSC_form}). Generic quasi-cyclic codes might have multiple rows in their block generator matrix $G_0$, corresponding to disjoint blocks of codewords, where only cyclic permutation within each block is permitted; their block generator matrix might not be in the canonical form, i.e. starting with an identity matrix, which is required to define the canonical logical operator basis used in our work. Generator matrices that satisfy Eq.~\eqref{eq:OGSC_form} are referred to as one-generator systematic-circulant (OGSC), and algebraic conditions for obtaining codes with OGSC generator matrices have been explored in the literature~\cite{li2006efficient}. Although we are not aware of any asymptotically good family of OGSC codes that achieve cosntant encoding rates and linear distances, we present several finite-size examples in Table~\ref{tab:QC_code_params} through numerical search, which feature even better parameters than those based on random expander graphs~\cite{grospellier2021combining, tremblay2022constant, xu2024constant}.
\begin{table*}
    \centering
    \begin{tabular}{c|c|c|c|c}
    \hline
    \hline
          Classical-Code Parameter &  Quantum-Code Parameter & Classical Check Matrix & Classical Generator Matrix & Lift Size\\
         \hline
         $[9,3,4]$ & [[117, 9, 4]] & $\left(\begin{array}{ccc}
             x^2 & x^2 & x^2 \\
             x & x^2 & 0 \\
         \end{array}\right)$ & $(1, x, 1 + x)$ & 3\\
         \hline
         $[12,3,6]$ & [[225, 9, 6]] & $\left(\begin{array}{cccc}
             x^2 & x^2 & x^2 & 0\\
             x^2 & 0 & x^2 & x^2 \\
             x^2 & x^2 & x & x^2 \\
         \end{array}\right)$ & $(1, 1+x^2, x^2, 1+x^2)$ & 3\\
         \hline
         $[16, 4, 8]$ & [[400, 16, 8]] & $\left(\begin{array}{cccc}
             x^3 & x^3 & 0 & x^3\\
             x^3 & x^2 & x^3 & x^2 \\
             x^3 & x^3 & x^2 & 0 \\
         \end{array}\right)$ & $(1, 1+x+x^2, 1+x, x+x^2)$ & 4\\
         \hline
          $[20, 5, 9]$ & [[625, 25, 9]] & $\left(\begin{array}{cccc}
             x^4 & 0 & x^4 & x^3\\
             0 & x^3 & x^3 & x^4 \\
             x^3 & x^4 & 0 & x^3 \\
         \end{array}\right)$ & $(1, x+x^2+x^3, 1+x^2+x^3, x+x^2)$ & 5\\
    \hline
    \hline
    \end{tabular}
    \caption{\textbf{Parameters and code matrices of finite-size OGSC classical codes and the resulting HGP codes.}}
    \label{tab:QC_code_params}
\end{table*}

\subsection{Selective inter-block teleportation}
Here, we present a gadget that teleports any subset $\mb{\bar{Q}_0}$ of the logical qubits $\mb{\bar{Q}}$ of a generic HGP code $\mc{Q}$ to the corresponding logical qubits $\mb{\bar{Q}^{\prime}_0}$ (with the same coordinates) of another identical code $\mc{Q}^{\prime}$. Let $\mr{cw}(\mb{\bar{Q}_0})$ ($\mr{rw}(\mb{\bar{Q}_0})$) denote the number of columns (rows) of the logical qubit grid that $\mb{\bar{Q}_0}$ are supported on. We present such a teleportation gadget $\mr{Tel}(\mb{\bar{Q}_0} \rightarrow \mb{\bar{Q}_0^{\prime}})$ in Alg.~\ref{alg:selective_tel} in $O(\min\{\mr{cw}(\mb{\bar{Q}_0}), \mr{rw}(\mb{\bar{Q}_0})\})$ logical cycles. The gadget performs the teleportation in either a column-by-column fashion or a row-by-row fashion using generalized versions of the GPPMs gadget. 

\begin{algorithm}[h!]
\caption{Selective teleportation between two identical HGP codes}\label{alg:selective_tel}
\Input{Two identical HGP codes $\mc{Q}$ and $\mc{Q}^{\prime}$ with logical qubits $\mb{\bar{Q}}$ and $\mb{\bar{Q}^{\prime}}$, respectively; A subset of logical qubits $\mb{\bar{Q}_0} \subseteq \mb{\bar{Q}}$ of $\mc{Q}$. }
\Output{A gadget $\mr{Tel}(\mb{\bar{Q}_0} \rightarrow \mb{\bar{Q}_0^{\prime}})$ that teleports $\mb{\bar{Q}_0}$ to $\mb{\bar{Q}_0^{\prime}}$ in $O(\min\{\mr{cw}(\mb{\bar{Q}_0}), \mr{rw}(\mb{\bar{Q}_0})\})$ logical cycles.}

\If{$\mr{cw}(\mb{\bar{Q}_0}) \geq \mr{rw}(\mb{\bar{Q}_0})$}
{
    \tcp{Perform column-by-column teleportations}
    \For{Each column $j$ of logical qubits $\mb{\bar{Q}_0}|_j$ of $\mb{\bar{Q}_0}$}
    {
    \tcp{Teleport $\mb{\bar{Q}_0}|_j$ to $\mb{\bar{Q}_0}^{\prime}|_j$ by implementing the teleportation circuit in Fig.~\ref{fig:Clifford_circuits}(a) in parallel using generalized versions of the GPPMs gadget.}
    \tcp{First, measure the $ZZ$s in Fig.~\ref{fig:Clifford_circuits}(a) in parallel using selective ancillae and homomorphic CNOTs}
    Prepare an ancilla code $\mc{Q}^{\prime \prime}$ that only contains the $j$-th column of logical qubits (by using the puncturing techniques described in Sec.~\ref{sec:homomorphic_measurements_HGP_technical}) in the $Z$ basis; Then ``mask" the logical qubits $\mb{\bar{Q}^{\prime \prime}}|_j \backslash \mb{\bar{Q}}^{\prime \prime}|_j$ by resetting them to $\overline{\ket{+}}$ using another mask code (see Fig.~\ref{fig:HGP_Grid_PPMs}). \\
    Apply the $\mc{Q}$-controlled homomorphic CNOT between $\mc{Q}$ and $\mc{Q}^{\prime \prime}$ and then $\mc{Q}^{\prime}$-controlled homomorphic CNOT between $\mc{Q}^{\prime}$ and $\mc{Q}^{\prime \prime}$. \\
    Transversally measure $\mc{Q}^{\prime \prime}$ in the $Z$ basis. \\
    \tcp{Then, measure the $X$s in Fig.~\ref{fig:Clifford_circuits}(a)}
    Perform $X$ measurements on $\mb{\bar{Q}_0}|_j$ using a GPPM gadget.
    }
}
\Else
{
Perform row-by-row teleportations, similar to the column-by-column teleportation described above.
}
\end{algorithm}

\subsection{Logical cyclic shift \label{sec:cyclic_shift}}
Here, we present a gadget for performing a cyclic logical shift on a $[[n, k, d]]$ HGP code $\mc{Q}$ with OGSC base classical codes by combining the logical translation gadget in Sec.~\ref{sec:translation_gadget} and the GPPMs gadget. Let $\{\bar{Q}_{i,j}\}_{i \in [M], j \in [N]}$, where $MN = k$, be the 2D grid of the logical qubits. We label them in a zigzag pattern, i.e. $\bar{Q}_l := \bar{Q}_{\lceil l/N\rceil, l \mod N}$. Here, to keep the notation simple, we set $a \mod a = a$ for any $a \in \mbb{Z}$. Then, we can perform a cyclic shift, $\bar{Q}_{l} \rightarrow \bar{Q}_{(l + 1)\mod k}$, in the following two steps:
\begin{enumerate}
    \item Perform a horizontal logical translation $\bar{T}_{0,1}$. This realizes most of the cyclic shifts, except for the rightmost column of logical qubits $\{\bar{Q}_{N i}\}_{i \in [M]}$, which are permuted to the leftmost column. It thus remains to do a vertical cyclic shift of this leftmost column.
    \item Teleport the first column of logical qubits of $\mc{Q}$ to another identical code $\mc{Q}^{\prime}$ in $O(1)$ logical cycles using the selective teleportation gadget in Alg.~\ref{alg:selective_tel},  perform a vertical logical translation $\bar{T}_{1,0}$ on $\mc{Q}^{\prime}$, and then teleport the first column of logical qubits back to $\mc{Q}$.
\end{enumerate}

\subsection{Parallel logical $\ket{i}$ state preparation \label{sec:i_prep}}
Here, we present a gadget for preparing $\overline{\ket{i}}^{\otimes k}$ for a $[[n, k, d]]$ HGP code. We first prepare two identical HGP codes $\mc{Q}_A$ and $\mc{Q}_B$, both initialized in $\overline{\ket{0}}^{\otimes k}$ states. Then, we apply the following sequence of operations iteratively for $\mc{Q}_A$:
\begin{equation}
    M_{X}^D \rightarrow \textrm{CZ-S} \rightarrow \bar{T}_{i,i},
    \label{eq:Y_sequence}
\end{equation}
for $i \in [\sqrt{k}/2]$, and the following sequence of operations iteratively for $\mc{Q}_B$:
\begin{equation}
    M_{X}^D \rightarrow \textrm{CZ-S} \rightarrow \bar{T}_{-i,-i},
    \label{eq:Y_sequence_2}
\end{equation}
for $i \in [\sqrt{k}/2 - 1]$. $M_X^D$ denotes a subroutine for measuring all diagonal logical qubits in the $X$ basis non-destructively, using Alg.~\ref{alg:X_measurement_diagonal} in $O(\log k)$ logical cycles. As shown in Fig.~\ref{fig:Y_Prep}, each sequence in Eq.~\eqref{eq:Y_sequence} or Eq.~\eqref{eq:Y_sequence_2} generates $\sqrt{k}$ $\overline{\ket{i}}$ states on the diagonal qubits utilizing the CZ-S gate and then distributes them to non-diagonal qubits utilizing the logical translation gadget (see Sec.~\ref{sec:translation_gadget}). The two sequences of operations then fill $\mc{Q}_A$ and $\mc{Q}_B$ with two complementary half blocks of $\overline{\ket{i}}$ states, respectively. 
% Finally, we teleport the $\overline{\ket{i}}$ states in $\mc{Q}_B$ to $\mc{Q}_A$ by performing transversal Bell measurements between $\mc{Q}_A$ and $\mc{Q}_B$ followed by transversal $Z$ measurements of $\mc{Q}_B$. 
Finally, we merge the $\overline{\ket{i}}$ states in $\mc{Q}_B$ into $\mc{Q}_A$ by performing the following transversal operations: Initially, each pair of qubits in $A$ and $B$ are either stabilized by $\langle \bar{Y}_A, \bar{Z}_B\rangle$ or $\langle \bar{Z}_A, \bar{Y}_B\rangle$; Performing transversal $\bar{X}_A \bar{X}_B$ measurements project all the pairs into the same entangled state stabilized by $\langle \bar{X}_A\bar{X}_B, \bar{Y}_A \bar{Z}_B, \bar{Z}_A \bar{Y}_B\rangle$ (up to some Pauli corrections); Final transversal $Z$ measurements on $B$ project each pair into a product state stabilized by $\langle Y_A, Z_B \rangle$, which indicates that the $Y$ states in $B$ are all merged into $A$. 

Since each sequence in Eq.~\eqref{eq:Y_sequence} or Eq.~\eqref{eq:Y_sequence_2} takes $O(\log k)$ logical cycles and there are $O(\sqrt{k})$ sequences in total, the entire gadget takes $O(\sqrt{k}_0 \log k)$ logical cycles.

\begin{figure}[h!]
    \centering
    \includegraphics[width=0.5\textwidth]{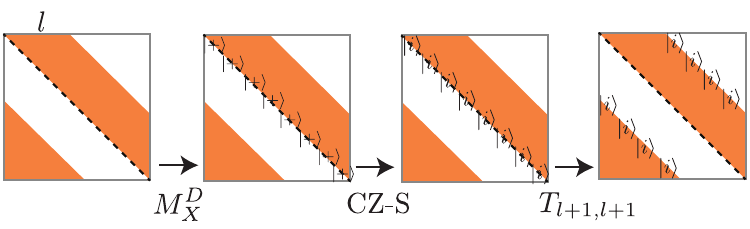}
    \caption{\textbf{Illustration of one sequence of operations of the iterative gadget for preparing $\overline{\ket{i}}^{\otimes k}$.} The orange region and the white region represent logical qubits in $\overline{0}$ and $\overline{i}$ states, respectively. }
    \label{fig:Y_Prep}
\end{figure}

\begin{algorithm}[h!]
\caption{The $M_X^D$ subroutine in Eq.~\eqref{eq:Y_sequence} for measuring the diagonal qubits of a HGP code $\mc{Q}$ in $X$ basis non-destructively.}\label{alg:X_measurement_diagonal}
% \Input{A HGP code $\mc{Q} = \mr{HGP}(H_1, H_2)$ with a $[n_1, k_1, d_1]$ base vertical code and a $[n_2, k_2, d_2]$ base horizontal code; A collection of disjoint horizontal hyperedges $\mb{\mc{E}_h} = \{\mb{e_i}\}$, where $\mb{e_i} \subseteq [k_2]$. }
% \Output{A horizontal PPMs gadget.}

Prepare an identical ancilla code $\mc{Q}^{\prime}$ in $\overline{\ket{+}}^{\otimes k}$.\\

For simplicity, assume that $\sqrt{k} = 2^m$ for some $m \in \mbb{Z}$. \\

\For{$t \in [m]$}
{
\tcp{Reset the non-diagonal logical qubits of $\mc{Q}^{\prime}$ to $\overline{\ket{0}}$ by recursively measuring subgrids that do not overlap with the diagonal line $i = j$. See Fig.~\ref{fig:digonal_measurements} for an illustration of this step.}
Apply the $\mr{GPPM}(Z, \mc{E}_t, \mc{E}^{\prime}_t)$ gadget to $\mc{Q}^{\prime}$, where 
% $\mc{E}_t := \{\{\frac{\sqrt{k}}{2^t}(2j) + 1 \rightarrow \frac{\sqrt{k}}{2^t}(2j + 1)\}\}_{j = 0,1,\cdots, 2^{t - 1} - 1}$ to $\mc{Q}^{\prime}$ 
$\mc{E}_t := \bigcup_{j = 0,1,\cdots, 2^{t - 1} - 1}\{\{\frac{\sqrt{k}}{2^t}(2j) + 1\},\cdots, \{\frac{\sqrt{k}}{2^t}(2j + 1)\}\}$ and 
$\mc{E}^{\prime}_t := \bigcup_{j = 0,1,\cdots, 2^{t - 1} - 1}\{\{\frac{\sqrt{k}}{2^t}(2j + 1) + 1\}, \cdots, \{\frac{\sqrt{k}}{2^t}(2j + 2)\}\}$.
\\
Apply the $\mr{GPPM}(Z, \mc{E}^{\prime}_t, \mc{E}_t)$ gadget to $\mc{Q}^{\prime}$. \\
}

Perform transversal CNOTs from $\mc{Q}^{\prime}$ to $\mc{Q}$.\\

Transversally measure $\mc{Q}^{\prime}$ in the $X$ basis.\\

% \todo{discuss the case where $\sqrt{k} \neq 2^m$}
\label{alg:log_k_X_prep}
\end{algorithm}
\begin{figure}
    \centering
    \includegraphics[width=0.5\textwidth]{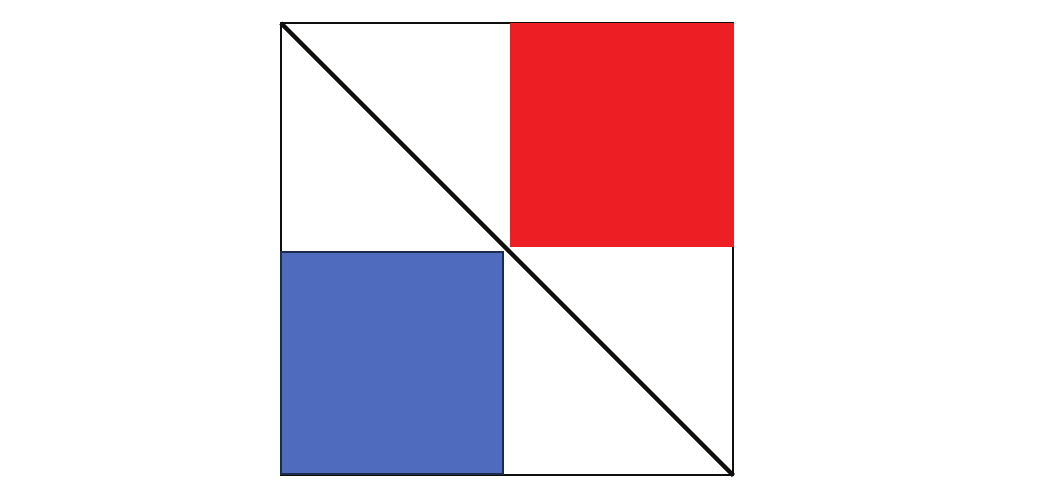}
    \caption{\textbf{Illustration for measuring all the non-diagonal qubits of an HGP code in log depth for step 3 of Alg.~\ref{alg:X_measurement_diagonal} at $t = 1$.} Performing the single-qubit measurements can be viewed as ``filling" an empty patch except for the diagonal. The first time step fills the blue and the red squares with two GPPMs gadget. This leaves two sub-squares to be filled. This procedure can then be applied recursively to fill the entire square in log depth. Note that later steps might fill the regions that have already been filled by the previous steps, but importantly, the diagonal line will not be filled.}
    \label{fig:digonal_measurements}
\end{figure}

\subsection{Parallel logical Hadamard gates \label{sec:transversal_H}}
Here, we provide a gadget for implementing logical Hadamard gates transversely on logical qubits of a $[[n, k, d]]$ symmetric HGP code. We first apply the fold-transversal H-SWAP gate, which applies the desired transversal Hadamards up to extra swaps along the diagonal (see Table~\ref{tab:HGP_gadgets}), then cancel the extra logical swaps by utilizing the GPPMs gadget in a similar fashion as that for transversely preparing the $\overline{\ket{i}}$ states. 

\begin{figure}
    \centering
    \includegraphics[width=0.5\textwidth]{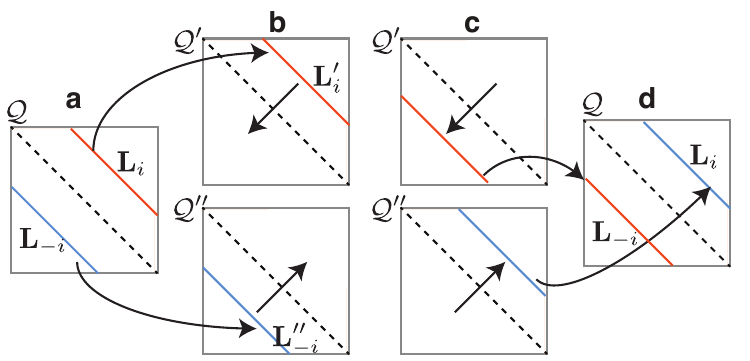}
    \caption{\textbf{Illustration for one sequence of operations of the gadget for swapping logical qubits of a symmetric HGP code along the diagonal.} (a) to (b): teleportation of the two mirrored lines of logical qubits from $\mc{Q}$ to $\mc{Q}^{\prime}$ and $\mc{Q}^{\prime \prime}$. (b) to (c): apply the logical translation gadget to $\mc{Q}^{\prime}$ and $\mc{Q}^{\prime \prime}$. (c) to (d): teleportation back into $\mc{Q}. $}
    \label{fig:SWAP}
\end{figure}

To implement the logical swaps on a code $\mc{Q}$, we prepare two identical ancilla codes $\mc{Q}^{\prime}$ and $\mc{Q}^{\prime \prime}$. Let $\mb{L}_i$ denote a line of logical qubits $\{\bar{Q}_{j, j + i}\}_{j \in [\sqrt{k}]}$ (only including qubits with valid coordinates $\in [\sqrt{k}]\times [\sqrt{k}]$) with an offset $i$ to the diagonal. The swap gadget amounts to swapping the ``twin" lines $\mb{L}_i \leftrightarrow \mb{L}_{-i}$ for $i \in [\sqrt{k} - 1]$. As illustrated in Fig.~\ref{fig:SWAP}, each such swap can be implemented by the following sequence of operations:
\begin{equation}
\begin{aligned}
    \mr{Tel}(\mb{L}_i \rightarrow \mb{L}_i^{\prime})&;\mr{Tel}(\mb{L}_{-i} \rightarrow \mb{L}_{-i}^{\prime \prime}) \rightarrow \bar{T}^{\prime}_{-2i, -2i}; \bar{T}^{\prime \prime}_{2i, 2i}\\
    & \rightarrow \mr{Tel}(\mb{L}^{\prime}_{-i} \rightarrow \mb{L}_{-i}); \mr{Tel}(\mb{L}^{\prime \prime}_{i} \rightarrow \mb{L}_{i}),
\end{aligned}
\label{eq:line_swap}
\end{equation}
where each of the teleportation between two identified lines of qubits across two codes, e.g. $\mr{Tel}(\mb{L}_i \rightarrow \mb{L}^{\prime})$ from $\mb{L}_i$ to $\mb{L}_i^{\prime}$, can be implemented by the teleportation circuit in Fig.~\ref{fig:Clifford_circuits}(a), where the $ZZ$ measurements on $\mb{L}_i$ and $\mb{L}_i^{\prime}$ are implemented using another identical ancilla code $\mc{Q}^{\prime \prime \prime}$, whose logical qubits $\mb{L}_i^{\prime \prime}$ are initialized in $\overline{\ket{0}}$ while the rest are initialized in $\overline{\ket{+}}$. Such a selective initialization can be implemented by measuring the diagonal qubits in $Z$ basis in $O(\log k)$ cycles using Alg.~\ref{alg:log_k_X_prep} (with $Z$ and $X$ flipped and up to an extra $X$ measurements using a single $X$-GPPMs gadget) and then distributing the $\overline{\ket{0}}$ states to $\mb{L}^{\prime \prime \prime}_i$ by performing a logical translation $\bar{T}^{\prime \prime \prime}_{i,i}$. The $X$ measurements in Fig.~\ref{fig:Clifford_circuits}(a) on $\mb{L}_i$ can be implemented similarly using ancilla code $\mc{Q}^{\prime \prime \prime}$ whose logical qubits $\mb{L}_i^{\prime \prime}$ are initialized in $\overline{\ket{+}}$ while the rest are initialized in $\overline{\ket{0}}$.

Overall, the transversal Hadamard gates, implemented by a H-SWAP gate and $O(\sqrt{k})$ sequences of line swaps in Eq.~\eqref{eq:line_swap}, take $O(\sqrt{k} \log k)$ logical cycles. 

\section{Proof of Theorem~\ref{theorem:parallel_Clifford_gates} \label{sec:proof_parallel_Clifford_gates}}
\begin{figure*}
    \centering
    \includegraphics[width=1\textwidth]{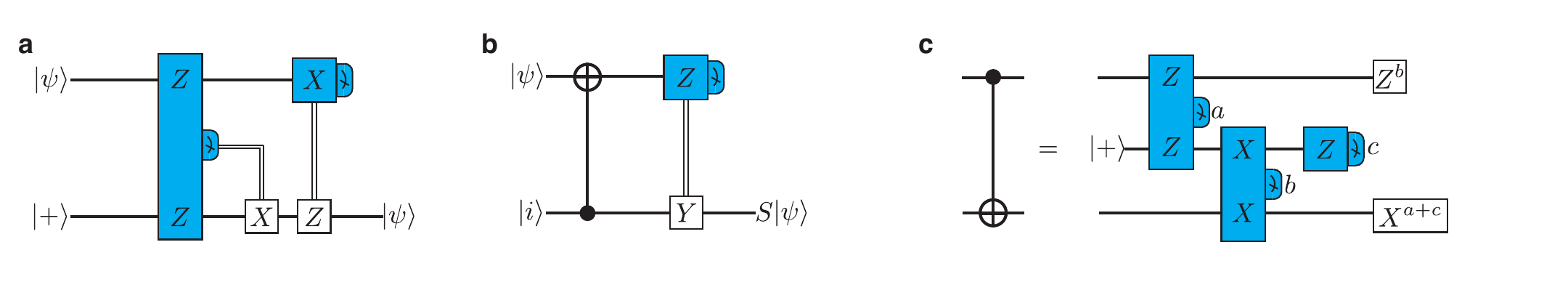}
    \caption{Circuits for implementing (a) teleportation (b) teleported $S$ gate (c) measurement-based CNOT. }
    \label{fig:Clifford_circuits}
\end{figure*}

In this section, we provide the proof of Theorem~\ref{theorem:parallel_Clifford_gates} regarding the parallel implementation of arbitrary Clifford gates (consisting of Hadamards, $S$ gates, and CNOTs) using the gadgets in Table~\ref{tab:HGP_gadgets}. We first prove that a layer of $O(k)$ Clifford gates can be implemented in parallel in $O(k^{3/4})$ logical cycles using the gadgets in Table~\ref{tab:HGP_gadgets}, including the translation gadget. Then, it will become apparent that the translation gadget is essential only for the parallelism, but not necessary for generating the full Clifford group. We note that this task of implementing Clifford gates in parallel is a compilation problem with a restricted gate set and we only provide an upper bound on the circuit depth using a constructive compilation.

We consider a layer of random Clifford gates containing $O(k)$ Hadamard gates, $O(k)$ $S$ gates, and $O(k)$ random CNOTs, supported on the logical qubits $\mb{\bar{Q}}|_{H}$, $\mb{\bar{Q}}|_{S}$, and $\mb{\bar{Q}}|_{\mr{CNOT}}$ of a HGP code $\mc{Q}$, respectively. We implement these three types of gates separately.

First, we note that we can teleport any subset $\mb{\bar{Q}}_0$ of the logical qubits of $\mc{Q}$ transversely to the corresponding logical qubits of another identical code $\mc{Q}^{\prime}$, and vice versa, in $O(\sqrt{k})$ logical cycles using the selective teleportation gadget in Alg.~\ref{alg:selective_tel}. 

To implement the Hadamard gates, we teleport $\mb{\bar{Q}}|_{H}$ transversely to $\mb{\bar{Q}}^{\prime}|_{H}$ of another code $\mc{Q}^{\prime}$. Then, we apply the transversal Hadamard gates (without extra swaps) on all the logical qubits of $\mc{Q}^{\prime}$ in $O(\sqrt{k}\log k)$ logical cycles using the subroutine described in Sec.~\ref{sec:transversal_H}. Finally, we teleport $\mb{\bar{Q}}^{\prime}|_{H}$ back to $\mb{\bar{Q}}|_{H}$.

We implement the $S$ gates using teleported gates. As shown in Fig.~\ref{fig:Clifford_circuits}(b), we prepare another identical code $\mc{Q}^{\prime}$, where $\mb{\bar{Q}}^{\prime}|_{S}$ are initialized in $\ket{i}$ states while the rest are initialized in $\ket{+}$ states. Then transversal CNOTs between $\mb{\bar{Q}}$ and $\mb{\bar{Q}}^{\prime}$ followed by transversal measurements of $\mb{\bar{Q}}$ teleport the logical qubits from $\mc{Q}$ to $\mc{Q}^{\prime}$ with the desired $S$ gates applied. The selective initialization of $\mc{Q}^{\prime}$ can be implemented by first preparing all the logical qubits in $\ket{i}$ using the subroutine in Sec.~\ref{sec:i_prep} in $O(\sqrt{k}\log k)$ logical cycles, followed by resetting the qubits $\mb{\bar{Q}}^{\prime}\backslash \mb{\bar{Q}}^{\prime}|_{S}$ to $\ket{+}$ using GPPMs in a column-by-column fashion in $O(\sqrt{k})$ logical cycles.

\begin{figure}
    \centering
    \includegraphics[width=0.5\textwidth]{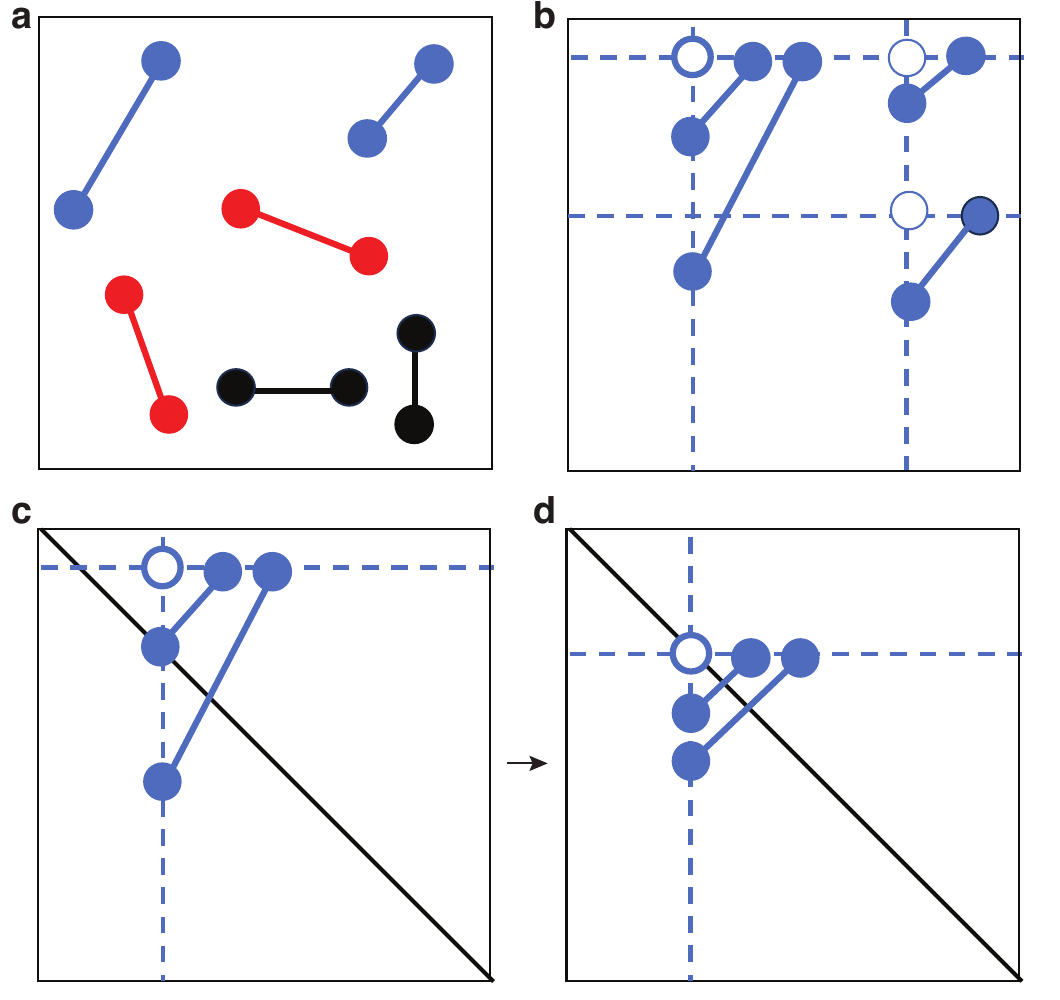}
    \caption{\textbf{Illustration for implementing a layer of random CNOTs using the gadgets in Table~\ref{tab:HGP_gadgets}.} (a) Classification of all the CNOTs into three types: the ``Aligned"-CNOTs (black), the ``TLBR"-CNOTs (red), and the ``TRBL"-CNOTs (blue). (b) The ``TRBL"-CNOTs form different clusters according to their shared ancilla (empty circles) and the clusters are partitioned into sparse clusters (the right two) and dense clusters (the left one). (c) to (d): Shifting and ``symmetrizing" the CNOTs of a dense cluster such that each CNOT is mirrored along the diagonal line $i = j$.}
    \label{fig:random_CNOTS}
\end{figure}
To implement the CNOTs, we first classify them into three types: the ``Aligned"-CNOTs, acting on logical qubits within the same row or the same column (see the black CNOTs in Fig.~\ref{fig:random_CNOTS}(a)); the ``TLBR"-CNOTs, acting on pairs of qubits oriented from the top left to the bottom right (see the red CNOTs in Fig.~\ref{fig:random_CNOTS}(a)); and the ``TRBL"-CNOTs, acting on pairs of qubits oriented from the top right to the bottom left (see the blue CNOTs in Fig.~\ref{fig:random_CNOTS}(a)). The ``Aligned"-CNOTs are the easiest to implement. We implement the vertically aligned CNOTs in a column-by-column fashion. For each qubit pair in a column, we introduce a distinct ancilla on the same column and implement the measurement-based CNOT consisting of two Bell measurements (followed by measuring the ancilla). The $Z$ (and similarly, $X$-) Bell measurements required for all the vertical CNOTs in a column can be implemented using a single GPPMs gadget as long as they do not share ancillae. We can assume that we have enough ancillae since otherwise we can simply teleport part of the vertical CNOTs (at most half) to another empty patch and implement them separately. As a result, all the vertically-aligned CNOTs can be implemented in $O(\sqrt{k})$ logical cycles and similarly, all the horizontally aligned CNOTs can be implemented in $O(\sqrt{k})$ logical cycles in a row-by-row fashion.

Finally, we finish the proof by showing that the ``TRBL"-CNOTs can be implemented in $O(k^{3/4})$ logical cycles (same for the ``TLBR"-CNOTs). As shown in Fig.~\ref{fig:random_CNOTS}(b), for any ``TRBL"-CNOT acting on a pair of qubits $\bar{Q}_{i,j}$ and $\bar{Q}_{i^{\prime}, j^{\prime}}$, where $i > i^{\prime}$ and $j < j^{\prime}$, we implement it by introducing an ancilla $\bar{Q}_{i^{\prime},j}$ (see the empty circles in Fig.~\ref{fig:random_CNOTS}(b)) and implementing the circuit in Fig.~\ref{fig:Clifford_circuits}(c), where the $Z$- and $X$-Bell measurements are applied on horizontal and vertical pairs, respectively. We group the CNOTs into different clusters according to the ancilla they share. We further partition these clusters into two types: dense clusters with more than $m$ CNOTs and sparse clusters otherwise, where $m$ is some constant integer that we will specify later (see Fig.~\ref{fig:random_CNOTS}(b) for an illustration with $m = 1$). We can implement the sparse clusters in a column-by-column and row-by-row fashion, similar to that for implementing the ``Aligned"-CNOTs. Specifically, we implement them in $\leq m$ sequences and for each sequence, we pick one CNOT from each cluster and implement all these picked CNOTs by first performing all the required vertical $X$-Bell measurements in a column-by-column fashion in $O(\sqrt{k})$ logical cycles and then performing all the required $Z$-Bell measurements in a row-by-row fashion in $O(\sqrt{k})$ logical cycles. In total, implementing these sparse clusters thus takes $T_S = O(m \sqrt{k})$ logical cycles. 

Lastly, we implement the dense clusters one-by-one and for each dense cluster with $n$ qubit pairs $\{\{\bar{Q}_{r_i, \beta}, \bar{Q}_{\alpha, c_i}\}\}_{i \in [n]}$ sharing a common ancilla $\bar{Q}_{\alpha, \beta}$, we implement it in parallel in $O(1)$ logical cycles. The strategy is to first shift and ``symmetrize" these pairs such that each pair is mirrored along the diagonal line $i = j$ (see Fig.~\ref{fig:random_CNOTS}(a) to (b) for an illustration), and then implement these symmetrized CNOTs in parallel using the CZ-S gate (up to some Hadamards), which applies pair-wise CZs folded along the diagonal. 
Specifically, we first teleport the cluster to another empty code $\mc{Q}^{\prime}$ in $O(1)$ logical cycles by first teleporting the row and then the column. Without loss of generality, assume that $\alpha < \beta$. Then, we shift the entire cluster such that the ancilla $\bar{Q}^{\prime}_{\alpha, \beta}$ is on the diagonal $(\beta, \beta)$ by applying the translational gate $\bar{T}_{\beta - \alpha, 0}$. Now, each pair $(\bar{Q}^{\prime}_{r_i, \beta}, \bar{Q}^{\prime}_{\alpha, c_i})$ gets shifted to $(\bar{Q}^{\prime}_{\beta - \alpha + r_i, \beta}, \bar{Q}^{\prime}_{\beta, c_i})$. Then, to symmetrize these pairs, we swap the column qubits $\bar{Q}^{\prime}_{\beta - \alpha + r_i, \beta} \leftrightarrow \bar{Q}^{\prime}_{c_i, \beta}$ for $i \in [n]$. Each of the swaps can be done by introducing an ancilla within the same column and performing pairs of Bell measurements (similar to implementing a CNOT). We can assume that there are enough empty ancillae and the target qubit locations do not overlap with the original qubit locations so that the $n$ swaps can be done in $O(1)$ logical cycles using the parallel GPPMs (otherwise, we can teleport at most half of the qubit pairs to another code and implement them separately and in parallel). Finally, we apply the fold-transversal CZ-S gate to implement the symmetrized CNOTs (up to some Hadamards that can be addressed separately, as described earlier) and revert the above process to teleport the qubits back to their original position in $\mc{Q}$. Since each of the dense clusters can be implemented in $O(1)$ logical cycles and there are at most $k/m$ such dense clusters, implementing all the dense clusters takes $T_D = O(k/m)$ logical cycles.

To sum up, implementing all the ``TRBL"-CNOTs thus takes $T = T_S + T_D = O(m\sqrt{k}) + O(k/m)$ logical cycles. By choosing $m = \Theta(k^{1/4})$, we have $T = O(k^{3/4})$, which completes the proof of Theorem~\ref{theorem:parallel_Clifford_gates}. 

Finally, we comment that the full Clifford group can be generated without using the translation gadget and assuming that the base codes are quasi-cyclic. The construction for any selective $H$, $S$, or CNOT uses essentially the same ingredients as described above, although different Clifford gates might have to be executed sequentially in the absence of the translation gadget. 

\section{Parallel PPMs for 3D homological product codes \label{appendix:3D_PPMs}}
In this section, we provide more details on constructing the parallel PPMs gadget in Def.~\ref{def:CPPMs} on any subcube of the logical qubits of a 3D homological product code.

Similar to that for the HGP code, the construction here works for a canonical basis of logical qubits that we describe in the following.
As shown in Eq.~\ref{eq:3D_code_complex}, 
\begin{equation}
    \begin{tikzcd}[column sep=small]
    	{Q_0} & & {C^1_0\otimes C^2_0 \otimes C^3_0} & \\
    	Q_1 & C^1_0\otimes C^2_0\otimes C^3_1 & C^1_0\otimes C^2_1\otimes C^3_0 & C^1_1\otimes C^2_0\otimes C^3_0\\
     Q_2 & C^1_0\otimes C^2_1\otimes C^3_1 & C^1_1\otimes C^2_0\otimes C^3_1 & C^1_1\otimes C^2_1\otimes C^3_0\\
     Q_3 & & C^1_1\otimes C^2_1\otimes C^3_1
    	\arrow["{M_X}", from=2-1, to=1-1]
            \arrow["{H_X}", from=3-1, to=2-1]
            \arrow["{H_Z^T}", from=4-1, to=3-1]
            \arrow["", from=2-2, to=1-3]
    	\arrow["", from=2-3, to=1-3]
     \arrow["", from=2-4, to=1-3]
     \arrow["", from=3-2, to=2-2]
     \arrow["", from=3-2, to=2-3]
     \arrow["", from=3-3, to=2-2]
     \arrow["", from=3-3, to=2-4]
     \arrow["", from=3-4, to=2-3]
     \arrow["", from=3-4, to=2-4]
     \arrow["", from=4-3, to=3-2]
     \arrow["", from=4-3, to=3-3]
     \arrow["", from=4-3, to=3-4]
    \end{tikzcd},
    \label{eq:3D_code_complex}
\end{equation}
a 3D homological product code is constructed by taking the total complex of the tensor product of three length-$1$ complexes $\{C^{i}_1 \xrightarrow{\partial^i_1} C^i_0\}_{i = 1,2,3}$. 
We associate the first three vector spaces $Q_3, Q_2$ and $Q_1$ of the total complex with $Z$ checks, qubits, $X$ checks, respectively. The extra vector space $Q_0$ is associated with $X$ meta checks that check the $X$ checks: $M_X H_X = 0$.
We assign the base complexes with three classical codes with check matrices $\{H_i\}_{i=1,2,3}$ as follows:
\begin{equation}
    \partial^1_1 = H_1, \quad \partial^2_1 = H_2, \quad \partial^3_1 = H_3^T.
\end{equation}
Then the check matrices $H_X$ and $H_Z$ and the $X$ meta check matrix $M_X$ are given by
\begin{equation}
\begin{aligned}
    H_Z & = \left(H_1^T\otimes I\otimes I, I\otimes H_2^T\otimes I, I\otimes I \otimes H_3 \right), \\
    H_X & = \left( \begin{array}{ccc}
        I\otimes H_2 \otimes I & H_1\otimes I\otimes I & 0 \\
        I\otimes I\otimes H_3^T & 0 & H_1\otimes I \otimes I \\
        0 & I\otimes I \otimes H_3^T & I\otimes H_2 \otimes I \\
    \end{array}\right), \\
    M_X & = \left( I\otimes I \otimes H_3^T, I\otimes H_2 \otimes I, H_1\otimes I\otimes I\right).
\end{aligned}
\label{eq:check_mat_3D_codes}
\end{equation}

% \begin{figure}[h!]
%     \centering
%     \includegraphics[width=0.5\textwidth]{Figures/3D_Chain_Complex copy.pdf}
%     \caption{Chain complex of a 3D homological product code}
%     \label{fig:3D_complex}
% \end{figure}

Based on Eq.~\eqref{eq:check_mat_3D_codes}, we can derive a canonical basis of logical operators using the K\"{u}nneth formula~\cite{bravyi2013homological, hatcher2005algebraic}:
\begin{equation}
\begin{aligned}
    \mb{\bar{X}} & = \{\left(\begin{array}{c}
         0\\
         0 \\
         f\otimes g\otimes h\\
    \end{array} \right) \mid f \in \mr{rs}(H_1)^{\bullet}, g \in \mr{rs}(H_2)^{\bullet}, h \in \ker{H_3}\}, \\
    \mb{\bar{Z}} & = \{\left(\begin{array}{c}
         0\\
         0 \\
         f^{\prime}\otimes g^{\prime}\otimes h^{\prime}\\
    \end{array} \right) \mid f^{\prime} \in \ker{H_1}, g^{\prime} \in \ker{H_2}, h^{\prime} \in \mr{rs}(H_3)^{\bullet}\}.
\end{aligned} 
\label{eq:3D_logicals}
\end{equation}

Again, without loss of generality, we assume that each check matrix $H_{\alpha}$ can be row-reduced to their canonical form, from which we can derive the canonical form of $\ker{H_{\alpha}}$ and $\mr{rs}(H_{\alpha})^{\bullet}$ in Eq.~\eqref{eq:ker_im}. Then, we can find a canonical basis for the logical operators in Eq.~\eqref{eq:3D_logicals} forming conjugate pairs $\{(\bar{X}_{i,j,k}, \bar{Z}_{i,j,k})\}_{i \in [k_1], j \in [k_2], k \in [k_3]}$, where:
\begin{equation}
    \begin{aligned}
        \bar{X}_{i,j,k} & =  \left(\begin{array}{c}
         0\\
         0 \\
         e^{n_1}_i \otimes e^{n_2}_j\otimes b^3_k\\
    \end{array} \right), \\
    \bar{Z}_{i,j,k} & =  \left(\begin{array}{c}
         0\\
         0 \\
         b^{1}_i \otimes b^{2}_j\otimes e^{n_3}_k\\
    \end{array} \right).
    \end{aligned}
\end{equation}

The logical operators are all supported on $\mb{B_1}\times \mb{B_2} \times \mb{B_3} \simeq [n_1]\times [n_2]\times [n_3]$ and the logical qubits $\mb{\bar{Q}} = \{\bar{Q}_{i,j,k}\}_{(i,j,k) \in [k_1]\times [k_2]\times [k_3]}$ can be arranged on a $[k_1]\times [k_2] \times [k_3]$ cube, where the logical operator pairs of $\bar{Q}_{i,j,k}$ intersects on the physical qubit $Q_{i,j,k}$.

With this canonical logical basis, we can implement the Cube PPMs gadget on a 3D code $\mc{Q}$ in Def.~\ref{def:CPPMs} using essentially the same two-step protocol as that for the GPPMs for HGP codes (see Alg.~\ref{alg:HGP_grid_PPMs} and Fig.~\ref{fig:HGP_Grid_PPMs}). So we omit the details here and only sketch the protocol: (1) Construct an ancilla $\mc{Q}^{\prime}$ by performing puncturing and augmenting on $H_1$ and $H_2$ according to $\mc{E}_x$ and $\mc{E}_y$, and then construct a mask code $\mc{Q}^{\prime \prime}$ by performing puncturing and augmenting on $H_3$ according to $\mc{E}_z$. (2) Prepare the ancilla code $\mc{Q}^{\prime}$ in the logical $Z$ basis and reset some logical qubits to $\overline{\ket{+}}$ and some to GHZ states using $\mc{Q}^{\prime \prime}$. (3) Perform the desired CPPMs on $\mc{Q}$ using $\mc{Q}^{\prime}$. 

\section{Single-shot state preparation for $3D/4D$ homological product codes \label{sec:single_shot_state_prep}}
\begin{definition}[Reduced weight]
Given a binary check matrix $H \in \mathbb{F}_2^{m\times n}$ and an error $e \in \mathbb{F}_2^n$, we define the reduced weight of $e$ w.r.t. $H$ as $|e|_{H} := \min \{|e^*|, H e^* = H e\}$.
\end{definition}

\begin{definition}[Soundness]
Let $t$ be an integer, and $f: \mathbb{Z} \rightarrow \mathbb{R}$ be some monotonically increasing function with $f(0) = 0$. Given a binary check matrix $H \in \mathbb{F}_2^{m\times n}$, we say it is $(t,f)$-sound if for any $e \in \mathbb{F}_2^n$ such that $|H e| < t$, we have
\begin{equation}
    |e|_H \leq f(|H e|).
    \label{eq:soundness_eq}
\end{equation}
\end{definition}

\begin{definition}[Single-shot state preparation] 
For a state preparation protocol that prepares a $|0\rangle_L$ ($|+\rangle_L$) state by measuring one round of $X$ ($Z$) checks associated with $H_X$ ($H_Z$) and applies corresponding correction, we say that such a protocol is $(q, f)$-single shot if for any syndrome error $s_e$ with $|s_e| < q$ that occurs during the check measurement, the corrected output differs from $\ket{0}_L$ ($\ket{+}_L$) by an error $E$ that satisfies:
\begin{equation}
    |E|_{H_X} (|E|_{H_Z}) \leq f(2|s_e|).
\end{equation}
\end{definition}

\begin{lemma}[Soundness is sufficient for single-shot state preparation]
For a CSS code with a $(t,f)$-sound $X$ ($Z$) check matrix $H_X$ ($H_Z$) of single-shot distance $d_{SS}$, there exists a $(q, f)$ single shot protocol for preparing $\ket{0}_L$ ($\ket{1}_L$), where $q = \frac{1}{2} \min\{t, d_{SS}\}$.
\label{lemma:soundness_imply_singleshot}
\end{lemma}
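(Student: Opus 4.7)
The plan is to construct an explicit preparation protocol, track the residual $Z$-Pauli error after one round of noisy $X$-syndrome measurement and correction, and then invoke the $(t,f)$-soundness of $H_X$ to bound its reduced weight. Concretely, I would start from the product state $|0\rangle^{\otimes n}$, which is a $+1$-eigenstate of every $Z_i$ and hence of every $Z$-stabilizer and every $\bar Z$ logical operator, so it already lies in the encoded $|0\rangle_L$ sector. One then measures a single round of the $X$-checks specified by $H_X$, recording $\tilde s = s + s_e$, where $s$ is the ideal projective outcome and $s_e$ the syndrome-bit errors, and finally applies any $Z$-type Pauli $\tilde E$ whose support $\tilde e \in \mbb{F}_2^n$ satisfies $H_X \tilde e = \tilde s$ (e.g.\ produced by minimum-weight decoding or by any fixed linear preimage of $H_X$).

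The first key step is to argue that right after the projective (noiseless) measurement the state is of the form $E|0\rangle_L$, up to multiplication by $Z$-stabilizers, where $E$ is a $Z$-type Pauli with support $e$ satisfying $H_X e = s$. This follows from a short dimension count: the joint eigenspace of the $Z$-stabilizers, the $\bar Z$ logicals, and the $X$-syndrome $s$ is one-dimensional, and $E|0\rangle_L$ clearly lives inside it. With this identification in hand, applying the correction $\tilde E$ leaves a residual $Z$-Pauli whose support is $e' = e + \tilde e$ and whose $X$-syndrome is $H_X e' = s + \tilde s = s_e$ — crucially, independent of the uncontrolled random outcome $s$.

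The closing step is a direct invocation of soundness. Since $|s_e| < q \leq t/2 < t$, the $(t,f)$-soundness of $H_X$ applied to $e'$ yields $|e'|_{H_X} \leq f(|H_X e'|) = f(|s_e|) \leq f(2|s_e|)$, the last inequality by monotonicity of $f$. This is exactly the $(q,f)$-single-shot property required by the definition. Preparation of $|+\rangle_L$ proceeds symmetrically, exchanging the roles of $X$ and $Z$ and starting instead from $|+\rangle^{\otimes n}$.

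The hard part, as I see it, is not the algebra but pinning down the role of $d_{SS}$: the condition $|s_e| < d_{SS}/2$ must ensure that the residual $Z$-Pauli of reduced weight $\leq f(2|s_e|)$ cannot be mistaken for an inequivalent logical representative, so that the output is unambiguously $|0\rangle_L$ dressed by a correctable error rather than a logical error. The factor of two in $f(2|s_e|)$ provides the slack to absorb this condition, and would also comfortably cover any additional bit-flip faults incurred when physically implementing $\tilde E$.
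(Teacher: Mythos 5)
There is a genuine gap, and it sits exactly where you flagged uncertainty. Your protocol measures the noisy syndrome $\tilde s = s + s_e$ and then ``applies any $Z$-type Pauli $\tilde E$ with $H_X\tilde e = \tilde s$.'' But in the setting this lemma is actually for (the 3D/4D homological product codes), $H_X$ is a \emph{redundant} check matrix: $\mathrm{im}(H_X)$ is a proper subspace of the syndrome space, and a generic syndrome error $s_e$ pushes $\tilde s$ outside $\mathrm{im}(H_X)$. In that case no $\tilde e$ with $H_X\tilde e = \tilde s$ exists and your correction step is undefined; your subsequent computation $H_X e' = s + \tilde s = s_e$ presupposes exactly this nonexistent preimage. (If $H_X$ were full rank, every syndrome would be valid and your argument would go through, but then $d_{SS}=\infty$ and the lemma's hypothesis on $d_{SS}$ would be vacuous --- the redundant case is the whole point.)

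The missing idea is the two-stage correction via metachecks, and it is precisely what $d_{SS}$ is for. The paper first finds a \emph{minimum-weight syndrome repair} $s_r$ such that $s' := s + s_e + s_r$ passes the metachecks, i.e.\ $s'\in\ker{M_X}$; minimality gives $|s_r|\le|s_e|$. It then argues $s'$ must actually lie in $\mathrm{im}(H_X)$: otherwise $s_e+s_r$ would be an element of $\ker{M_X}\setminus\mathrm{im}(H_X)$ of weight at most $2|s_e| < d_{SS}$, contradicting the definition of the single-shot distance. Only then does a Pauli correction $E_r$ with $H_X E_r = s'$ exist, the residual error has syndrome $s_e+s_r$ of weight at most $2|s_e| < t$, and soundness gives $|E|_{H_X}\le f(2|s_e|)$. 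So the factor of $2$ is not ``slack to absorb'' logical ambiguity --- it is the cost of the syndrome repair $|s_e+s_r|\le 2|s_e|$ --- and $d_{SS}$ has nothing to do with distinguishing logical representatives of the residual error (which is irrelevant here anyway, since any $Z$-type operator with trivial $X$-syndrome acts trivially on $\ket{0}_L$, so the reduced-weight bound already says all that is needed). The remaining pieces of your argument (the post-measurement state is $E_0\ket{0}_L$ with $H_X E_0 = s$, and the final appeal to soundness) do match the paper's proof.
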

\begin{proof}
    We consider the case for preparing $\ket{0}_L$ by measuring one round of $X$ checks, and the other case is mirrored. Let $s$ be a (random) measured syndrome pattern in the absence of measurement errors, and $s$ corresponds to a $Z$ error $E_0$ such that $H_X E_0 = s$. Let $s_e$ be a syndrome error that adds to $s$. Let $M_X$ be the metacheck matrix for $H_X$ that satisfies $M_X H^T_X = 0 \mod 2$. We apply the following two-stage correction protocol:
    \begin{itemize}
        \item Find a minimum-weight syndrome correction $s_r$ such that the corrected syndrome $s' := s + s_e + s_r$ passes the meta checkes, i.e., $s' \in \ker{M_X}$.
        \item If $s'$ is a valid syndrome, i.e. $s' \in \mathrm{Im}(H_X)$, find a $Z$ Pauli correction $E_r$ that matches the corrected syndrome, i.e. $H_X E_r = s + s_e + s_r$; Otherwise, declare a logical failure. 
    \end{itemize}

    Now, we prove that the residual error $E = E_r E_0$ of the above protocol satisfies $|E|^r \leq f(2|s_e|)$ if $|s_e| < \frac{1}{2}\min\{t, d_{SS}\}$. First, we prove by contradiction that the corrected syndrome $s'$ is a valid syndrome. Assume $s'$ is not a valid syndrome. Since $s$ is a valid syndrome, $s_e + s_r$ must not be a valid syndrome. However, $|s_e + s_r| \leq 2 |s_e| < d_{SS}$ (by the minimum-weight assumption of $s_r$). This would imply that there exists a $s_0 \in \mathrm{Ker}(M_X)\backslash \mathrm{Im}(H_X)$ with $|s_0| < d_{SS}$, which leads to a contradiction. Therefore, the corrected syndrome $s'$ will be a valid syndrome. Next, we show that the residual error $E$ has a small reduced weight. Since $|H_Z E| = |H_Z (E_r E_0)| = |s_e + s_r| \leq 2 |s_e| < t$, we have, by the soundness property in Eq.~\eqref{eq:soundness_eq},
    \begin{equation}
        |E|^r \leq f(|s_e + s_r|) \leq f(2|s_e|).
    \end{equation}
\end{proof}

\begin{proposition}[4D homological product codes support single-shot state preparation]
A 4D homological product code out of four identical $[n_1, k_1, d_1]$ classical base codes support $(q, f)$-single shot state preparation in both the $X$ and the $Z$ basis, where $q = \frac{1}{2}d_1$ and $f(x) = x^3/4$.
\end{proposition}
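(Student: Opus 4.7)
The plan is to invoke Lemma~\ref{lemma:soundness_imply_singleshot} and reduce the proposition to showing two facts: both $H_X$ and $H_Z$ are $(d_1, f)$-sound with $f(x) = x^3/4$, and the single-shot distance satisfies $d_{SS} \geq d_1$. By the symmetry of the construction with four identical base codes---which exchanges the roles of $H_1$ and $H_1^T$ in each tensor factor---it suffices to treat $H_X$; the $H_Z$ case is then obtained by dualizing.

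First I would make the block structure of $H_X: T_2 \to T_1$ and of the metacheck $M_X: T_1 \to T_0$ explicit. By Definition~\ref{def:total_complex}, the decompositions $T_j = \bigoplus_{|\vec{x}|=j} D_{\vec{x}}$ yield six ``qubit blocks'' in $T_2$ (indexed by pairs in $[4]$), four ``check blocks'' in $T_1$ (indexed by singletons in $[4]$), and one metacheck block in $T_0$. Each nonzero entry of $H_X$ in this block form is a tensor product of one copy of the base boundary map ($H_1$ or $H_1^T$) with identity factors along the three remaining directions, and an analogous tensor-factorized form holds for $M_X$.

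The main technical step is the cubic soundness estimate: given $e \in T_2$ with syndrome $s = H_X e$ of weight $|s| < d_1$, I want to construct $e^* \in T_2$ with $H_X e^* = s$ and $|e^*| \leq |s|^3/4$. I would proceed by reducing $s$ one tensor direction at a time. Writing $s = \bigoplus_{i=1}^4 s^{(i)}$ with $s^{(i)} \in D_{\vec{e}_i}$, the block $s^{(i)}$ factorizes into $n_1^3$ independent classical syndromes for a copy of the base code $\mc{C}^i$, each of weight at most $|s^{(i)}| < d_1$. The classical code distance then guarantees each local syndrome admits a preimage whose weight is controlled by $|s^{(i)}|$, and summing these local preimages back into $T_2$ yields a global correction that annihilates $s^{(i)}$ while introducing residual syndromes in at most three of the other check blocks. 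Iterating the reduction along three of the four directions accumulates three factors of $|s|$, and averaging over the four symmetric choices of the initial direction provides the $1/4$ prefactor.

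The main obstacle will be verifying that this direction-by-direction reduction is globally consistent: corrections from different steps must combine additively rather than compounding multiplicatively, and the residual syndromes must genuinely stay bounded by $|s|$ throughout the iteration. The commutation relations in Eq.~\eqref{eq:boundary_maps_HD_complex} provide the key structural input, ensuring that reductions along distinct tensor directions commute up to a boundary in $T_3$ (that is, up to a stabilizer of the code), so that the partial corrections may be combined without interference. Once the soundness estimate is in place, the bound $d_{SS} \geq d_1$ follows from the K\"unneth formula for the distance of the product complex $\mc{T}$, and a direct application of Lemma~\ref{lemma:soundness_imply_singleshot} yields the claimed $(d_1/2,\, x^3/4)$-single-shot preparation of both $\overline{\ket{0}}$ and $\overline{\ket{+}}$.
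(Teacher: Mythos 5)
Your overall architecture matches the paper's: reduce the proposition to (i) $(d_1, x^3/4)$-soundness of $H_X$ and $H_Z$ and (ii) a lower bound on the single-shot distance, then invoke Lemma~\ref{lemma:soundness_imply_singleshot}. The paper, however, does not prove the soundness estimate at all --- it simply cites Ref.~\cite{campbell2019theory} for both the $(d_1, x^3/4)$-soundness and the fact that $d_{SS}=\infty$, and then applies the lemma. You instead attempt to derive the soundness bound from scratch, and that derivation has a genuine gap.

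The fatal step is the claim that, after splitting the syndrome block $s^{(i)}$ into classical syndromes for copies of the base code, ``the classical code distance then guarantees each local syndrome admits a preimage whose weight is controlled by $|s^{(i)}|$.'' Classical codes are not sound in this sense: the distance of a code says nothing about the minimum weight of an error consistent with a given low-weight syndrome. The repetition code is the standard counterexample --- it has distance $n$, yet a single violated parity check in the middle of the chain forces a correction of weight $\Theta(n)$, so a weight-$1$ syndrome can have reduced weight far exceeding any polynomial in $|s|$. The soundness of higher-dimensional homological products is a genuinely emergent property of the product structure (this is the content of Campbell's confinement/energy-barrier argument), not something inherited direction-by-direction from the base codes, so your per-direction reduction cannot work as described. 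Relatedly, ``averaging over the four symmetric choices of the initial direction'' does not produce a $1/4$ prefactor --- at best you could take the minimum over the four orderings, and nothing in the sketch explains why that minimum is $|s|^3/4$. The Künneth-based bound $d_{SS}\geq d_1$ is acceptable in spirit (and suffices for $q=\tfrac12 d_1$), but the core soundness claim either needs Campbell's actual argument or a citation to it, as the paper does.
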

\begin{proof}
    According to Ref.~\cite{campbell2019theory}, a 4D homological product code inherently has $(t,f)$-soundness for both their $X$ and $Z$ checks, where $t = d_1$ and $f(x) = x^3/4$. Moreover, it has a single-shot distance $d_{SS} = \infty$. Then, according to Lemma~\ref{lemma:soundness_imply_singleshot}, it supports $(q,f)$-single shot state preparation in both $X$ and $Z$ basis, where $q = \frac{1}{2}d_1$.
\end{proof} 
\end{appendix}

% \bibliography{ref}
%apsrev4-2.bst 2019-01-14 (MD) hand-edited version of apsrev4-1.bst
%Control: key (0)
%Control: author (8) initials jnrlst
%Control: editor formatted (1) identically to author
%Control: production of article title (0) allowed
%Control: page (0) single
%Control: year (1) truncated
%Control: production of eprint (0) enabled
%

\end{document}